\journal{Games and Economic Behavior}
\newtheorem{proposition}{Proposition}
\newtheorem{corollary}{Corollary}
\newtheorem{lemma}{Lemma}
\theoremstyle{definition}
\newtheorem{example}{Example}
\newtheorem*{lemma*}{Lemma}
\newtheorem{assumption}{Assumption}
\patchcmd{\emailauthor}{(#2)}{}{}{}
\patchcmd{\urlauthor}{(#2)}{}{}{}
\xpatchcmd{\MaketitleBox}{\hrule}{}{}{}
\xpatchcmd{\MaketitleBox}{\hrule}{}{}{}
\newcommand{\Ch}{\text{Ch}} 
\newcommand{\sosm}{\textup{\tiny SOSM}} 
\newcommand{\rk}{\textup{rk}} 
\newtheorem*{theorem*}{Theorem}
\theoremstyle{plain} 
\newcommand{\thistheoremname}{}
\newtheorem*{genericthm*}{\thistheoremname}
\newenvironment{namedthm*}[1]
{\renewcommand{\thistheoremname}{#1}%
	\begin{genericthm*}}
	{\end{genericthm*}}
\newcommand\encircle[1]{%
	\tikz[baseline=(X.base)] {
	\node (X) [draw, shape=circle, inner sep=0] {\strut #1};}}
\newcommand\ensquare[1]{%
	\tikz[baseline=(X.base)] {
	\node (X) [draw, shape=rectangle, inner sep=.1cm] {\strut #1};}}
\tikzset{ 
	table/.style={
		matrix of nodes,
		row sep=-\pgflinewidth,
		column sep=-\pgflinewidth,
		nodes={rectangle,text width=3em,align=center},
		text depth=1.25ex,
		text height=2.5ex,
		nodes in empty cells
	},
	row 1/.style={nodes={text depth=0.4ex,text height=2ex}},
	row 6/.style={nodes={text depth=0.4ex,text height=2ex}}
}
\begin{document}

	\begin{frontmatter}
		
		\title{School Choice with Independent versus Consolidated Districts}

\author[add1,add3]{Thilo Klein}		\ead{thilo.klein@zew.de (Corresponding author).}

\author[add2]{Robert Aue}
 
\author[add5]{Josu\'e Ortega}

\address[add1]{ZEW -- Leibniz-Centre for European Economic Research, Mannheim, Germany.}
\address[add3]{Pforzheim University, Germany.}	
\address[add2]{Schleswig-Holstein State Court of Audit, Germany.}
\address[add5]{Queen’s University Belfast, UK.}	

\date{\today}

\begin{abstract}
This paper studies the welfare effects of school district consolidation. Using incomplete rank-ordered lists (ROLs) submitted for admission to the Hungarian secondary school system, we estimate complete ROLs assuming that parents do not use dominated strategies and that the matching outcome is stable. These estimates aid in constructing a counterfactual district-based assignment and discerning the factors driving parents' preferences over schools. We find that district consolidation leads to large welfare gains in Budapest, equivalent to students attending a school five kilometres closer to their residences. These gains offset the additional travel distances incurred in the consolidated assignment. 73\% of matched students benefit from district consolidation, while fewer than 3\% are assigned to a less preferred school. Students from smaller and less under-demanded districts benefit relatively more, as well as those with high academic ability. Using reported preferences instead of estimated ones also yields large gains from district consolidation.

\end{abstract}

\begin{keyword}
	school district consolidation \sep  school choice.\\
	
	{\it JEL Codes:} C78, I21.
\end{keyword}
\end{frontmatter}

\newpage

	\setcounter{footnote}{0}
\setlength{\parskip}{3mm} 
	\newpage

\section{Introduction}
\label{sec:introduction}
	\doublespacing
School district consolidation is the process whereby previously independent school districts are merged so that students can now choose from a greater set of alternative schools, and can be undertaken to reduce administrative costs or to foster integration of racially and economically segregated areas. This phenomenon has taken place in the U.S. for over one hundred years: the number of school districts has fallen from around 125,000 in 1900 to under 15,000 today \citep{brasington1999}. School district consolidations have also occurred in several other countries, including Germany \citep{riedel2010school}, Hungary \citep{bukodi2008hungarian}, Sweden \citep{soderstrom2010school}, and New Zealand \citep{waslander1995choice}. 

However, school district consolidation is rarely a smooth process and is often met with reluctance by some of the independent districts that are to integrate \citep{berry2008growing}. District consolidation even reverses
sometimes: in 128 occasions, communities in the U.S. have attempted to secede from their school district to create their own. These procedures, which sometimes are as simple as a majority vote in the breakaway neighbourhood, have succeeded 73 times \citep{edbuild}.\footnote{Seceding districts are whiter and more affluent on average than the districts from which they seceded. Their secession is associated with significant increases in segregation after adjusting for prior trends \citep{richards}.} One of the many reasons behind the reluctance of districts to merge (or their incentives to secede) is the concern that local students attend worse schools under consolidated school choice \citep{fairman2012school}. This concern is not entirely unwarranted, as district consolidation not only leads to more choice for students, but also to more competition for a place in their preferred schools. Which effect dominates is unclear a priori and depends on many factors, not least on students' characteristics and preferences. In this article, we shed light on the welfare effects of school district consolidation with an empirical analysis of the Hungarian nationwide school assignment system, guided by a simple theoretical model. 

In our model, we study district consolidation as the merger between disjoint Gale-Shapley many-to-one matching markets that differ in their size and their ratio between students and school seats. Students are assigned to schools using the student-optimal stable matching (SOSM) before and after consolidation takes place, but before district consolidation students can only attend schools within their own district.\footnote{The SOSM is the most preferred stable matching for all students. It is consistently implemented in real-life school choice and college admissions in several regions, including Boston \citep{abdulkadirouglu2014elite}, Chile \citep{hastings2013some,correa2019school}, England \citep{burgess2019school}, Hungary \citep{biro2008student}, Paris \citep{hiller2014choix} and Spain \citep{mora2001understanding}.}  
District consolidation can, in some cases, weakly harm all students. 
However, we find that this is unlikely to occur in a random school choice problem. Proposition \ref{thm:prop3} shows that district consolidation generates expected welfare gains for all students, particularly for those who belong to districts that are small or have a smaller over-supply of school seats.

These theoretical predictions are compared to empirical results that are obtained by using the data from secondary school admissions in Hungary, and in particular, from its capital city Budapest during 2015. We focus on Budapest because i) we have data on students' stated preferences over all schools in its 23 districts, as well as schools' priorities over students from the 23 districts; ii) students are assigned using SOSM \citep{biro2008student}; iii) Hungary consolidated primary school districts in 2013 \citep{kertesi2013school}, thus the analysis of the unconsolidated case for secondary schools is particularly meaningful; and iv) we have additional data on students' and schools' characteristics that reveal which school features drive students' preferences, such as schools' previous results in mathematics and Hungarian, distance to the students' home addresses, and socio-economic status. Our empirical strategy is to compare the SOSM in the integrated market to the matching that results in a counterfactual disintegrated market. In order to compute the counterfactual matchings, we construct a complete set of preferences over all market participants, estimating a parametric form of students' preferences over schools, and schools' priorities over students. 

Despite our data being remarkably detailed, an analysis of two-sided markets still needs to overcome two technical problems. The first issue that needs to be addressed concerns estimating students' preferences. In the student-proposing deferred acceptance algorithm, used to compute the SOSM, it is only a \emph{weakly} dominant strategy for students to report their complete rank-order lists (ROLs) of schools truthfully. Therefore, stated ROLs may differ from real ones because students either omit schools which they deem unattainable or truncate their ROLs if they are confident that they will be assigned to more preferred schools.\footnote{A large literature directly uses reported preferences in the deferred acceptance mechanism as true preferences (see \cite{abdulkadirouglu2017welfare}, \cite{che2019efficiency}, and references therein). Our estimation procedure goes one step further by also using estimated preferences.} Both types of omissions have consistently been observed in the field \citep{chen2019} and in the lab \citep{castillo2016truncation}; and both are particularly important for us because the average student in Budapest ranks only four schools, even when they are allowed and encouraged to rank all schools at no cost. The fact that students submit rather short preference lists is the reason why we need a parametric approach to construct their ``true'' complete ROLs. 

A second, closely linked technical complication concerns the estimation of schools' priorities: schools only report their priorities over the set of students who actually apply to them. Priorities over the entire set of students cannot be recovered because the admission rules are either unknown or depend on unobservable criteria.
In Hungary, schools' priorities are based on tests and interviews. These criteria are only observable for schools for which a student applied. Furthermore, the weights with which these criteria determine the priorities are unknown because they are decided by each school, subject to basic governmental guidelines.\footnote{ This is not a special feature of the Hungarian school market. Instead, it is common in many other school choice systems around the world that also use either unknown weights or at least one unobserved criterion. Examples include school choice in Amsterdam \citep{oosterbeek2021preference}, Chicago \citep{pathak2013school}, England \citep{bertoni2020academies,burgess2019school} and Germany \citep{grenet2022preference}. Unobserved criteria include geographical distance, religion, interviews, lottery numbers and non-standardized tests. Geographical distance is often not available in administrative research data because of privacy and data protection issues.}
Therefore, even though \cite{fack2019beyond} have shown how to estimate students' preferences without assuming truth-telling behaviour, we cannot directly apply their two-step and discrete choice estimators in such settings, because they rely on observing complete schools' priorities over students (for example, when schools' priorities are based on a centralised exam). 

To overcome these technical challenges in preference estimation, our empirical strategy builds on two identifying assumptions, namely i) the stability of the final assignment, and ii) that students only use strategies that are undominated. The combination of these two assumptions still allows for students to skip seemingly unattainable schools, or to truncate their ROL at the bottom, and thus it is still weaker than assuming truthful preference revelation from students. Our estimation method is implemented as a Gibbs sampler that imposes bounds on the latent match valuations that are derived from both identifying assumptions. This approach generalises the matching estimator proposed by \cite{Logan2008} and \cite{Menzel2013} from a one-to-one to a many-to-one matching setting. 
We test our proposed estimation method in Monte-Carlo simulations, in which students strategically choose their optimal application portfolio, given their equilibrium beliefs about admission probabilities. We find that the method yields unbiased estimators for students' preferences and schools' priorities.

Our main finding is that the consolidated school market in Budapest is advantageous for the majority of students and yields significant welfare gains when compared to the counter-factual situation in which students only attend schools within their home districts.  This result is robust to whether the counterfactual matching is obtained with reported or estimated students' preferences. The welfare gains from school district consolidation are equivalent to attending a school that is five kilometres closer to the students' home address. In other words, the average student would be willing to incur an additional travel distance of five kilometres to attend their assigned school in the consolidated market, rather than the counterfactual assigned school in their home district.  Only a small fraction of students (less than 3\%) are harmed by district consolidation.

We empirically confirm our theoretical result which states that students who live in smaller districts or districts with less school capacity benefit more from school district consolidation than the average student. Students with high academic ability also benefit relatively more, but we do not find evidence that this extends to students from high socioeconomic status.
The median student incurs a welfare gain that is positive and almost as large as the average welfare gain. 
To explain these large utility gains, we decompose the total gains into a choice effect and a competition effect. We find that the substantial welfare gains are largely due to an enhanced choice set, and that the consolidated market does not lead to significantly increased competitive pressure. 

Furthermore, the parametric specification of students' utility from choosing a school yields insights into the determinants of students' preferences. We find that travel distance is an important factor that determines students' choices, but students also prefer schools with high average academic achievement and average socio-economic status. Our results further imply that students dislike schools which hold additional oral entrance exams, all else equal. Moreover, we find that students have assortative preferences. For instance, students with a high socio-economic background have a stronger preference for schools with a high average socio-economic status than other students. The same holds for students who are particularly strong in mathematics or Hungarian language.

\paragraph{Organisation of the article} This article proceeds as follows. 
Section \ref{sec:literature} discusses the related literature. 
Section \ref{sec:model} presents our theoretical results. 
Section \ref{sec:data} introduces our data. 
Section \ref{sec:strategy} presents the estimation strategy. 
Section \ref{sec:results} discusses our empirical results and performs robustness tests.
Section \ref{sec:conclusion} concludes.

\section{Related literature}
\label{sec:literature}

Although there is an extensive empirical literature studying school district consolidation, the majority of it is unrelated to that of matching markets. This literature has four main findings: i) there is evidence of overall improvement in students' performance after district consolidation, yet these improvements are not uniformly distributed and there may be losses for specific groups of students \citep{berry2005school,berry2008growing,cox2010decade,  leach2010effects};\footnote{There is also a well-established link relating larger school sizes with lower students' performances, which is not the focus of this article.} ii) small and look-alike districts are more likely to merge \citep{brasington1999,gordon2009spatial}; iii) there is empirical evidence of increased fiscal efficiency due to district consolidation \citep{duncombe1995potential,howley2011consolidation}, and iv) district consolidation has diversified the racial composition of schools \citep{albsury2005,siegel2017solidifying}.

More related to our paper is the theoretical and empirical literature studying the consequences of integrated school choice replacing decentralized admissions or neighbourhood choice. The theoretical literature focuses on understanding either residential patterns that emerge in response to school choice \citep{xu2019housing,avery2021distributional,grigoryan2021school}, the inefficiency caused by students receiving multiple admission offers under decentralized admission systems \citep{che2016,manjunath2016,dougan2019unified,hafalir2018jet}, the incentives for schools to join a centralized clearinghouse \citep{ekmekci2019common} or the inter-district policies that can be achieved with consolidation \citep{hafalir2022interdistrict}. Although our model does not share any of those features, it is the first to provide testable hypotheses on the relationship between the gains from district consolidation and the size and imbalance of each individual district.
Nonetheless, our paper agrees with this literature in that centralized systems lead to efficiency gains for most students. The welfare effects are ambiguous for low-income or low-ability students, who may be harmed (e.g. as in \citealt{hafalir2018jet,avery2021distributional}) or benefit (e.g. as in \citealt{xu2019housing,grigoryan2021school}) by integrated school choice, depending on the specifics of the model (we empirically find no statistically significant difference between the gains from integration experiences by students with high versus low socioeconomic status).

There is also an empirical literature analysing the emergence of centralized admission systems that replace decentralized applications in  Brazil \citep{machado2021centralized,mello2022centralized}, Japan \citep{tanaka2020meritocracy}, the US \citep{liu2007diffusion,knight2022reducing}, and in particular, New York City \citep{abdulkadirouglu2017welfare} and Los Angeles \citep{camposqje}. These studies have some common findings: centralized school choice delivers more student mobility and overall large welfare gains (equivalent to the average student attending a school 17 kilometres closer to a student's home address in the case of NYC; we document a smaller welfare gain corresponding to 5 kilometres). Centralized choice also increases students' mobility in all the papers in this literature. However, there is no clear consensus on the effects of centralized choice on diversity or academic scores.

Our paper also contributes to the literature studying how to estimate students' preferences and schools' priorities from observed data. The most common identifying assumption is truth-telling, where under the SOSM, a student is truth-telling if she submits her $k$ most preferred schools. \cite{abdulkadirouglu2017welfare} 
and \cite{che2019efficiency}, for example, follow this truth-telling assumption in their analysis of the New York City high school match. However, truth-telling is only a weakly dominant strategy, even when schools can be listed at no cost. Commonly observed and rationalisable strategies that are inconsistent with truth-telling include skipping ``infeasible'' schools and truncating ROLs after ``safe'' schools. Therefore, we use two weaker identifying assumptions that have been explored in the literature.

The estimator presented in this article is based on combining the stability and undominated strategies assumptions. Our first identifying assumption is that students only use undominated strategies, i.e.\ they do not swap their true preference orderings over schools when submitting a ROL. The second identifying assumption is stability of the observed matching, which implies that a student's assigned school must be the top choice among her ex-post feasible schools. Stability is a more innocuous assumption than truth-telling in that it permits inconsequential mistakes \citep{artemov2017strategic}. We extend the estimator in \cite{fack2019beyond} to a context where students' feasible choice sets are unobserved, and so we extend it to include latent feasible choice sets using a data augmentation approach. The combination of stability and undominated strategies allows us to point identify our parameters of interest, as we show by means of Monte Carlo simulations.\footnote{There is also a literature that focuses on estimating students' preferences using ROLs reported in mechanisms that are not strategy-proof \citep{he2015gaming,calsamiglia2020structural}.}

\section{Model}
\label{sec:model}
We study district consolidation by extending the school choice framework of \cite{abdulkadirouglu2003}. An extended school choice problem (ESCP) is a tuple $(T, S, \mathcal{D}, \succ, \rhd, q)$, where i) $T$ is the set of students, ii) $S$ is the set of schools, iii) $q$ is the number of school seats in each school, iv)  $\mathcal{D} \coloneqq \{D_1,\ldots, D_r\}$ is a partition of $T \cup S$ into districts, v) $\succ_t$ is the strict preference ordering of student $t$ over all schools, and vi) $\rhd_s$ is the strict priority structure of school $s$ over all students. 
The admission policy of each school $s$ is given by a choice rule $\Ch_s:2^{T} \times \{q_s\} \mapsto2^{T}$, which maps every non-empty subset of students $T'$ to a subset $\Ch_s(T',q_s) \subseteq T'$ of size smaller or equal than $q_s$. $\Ch_s(\cdot, q_s)$ is responsive to the priority ranking $\rhd_s$.\footnote{The choice function $\Ch_s$ is responsive to priorities if for each $T'\subseteq T$, $\Ch_s(T',q_s)$ is obtained by choosing the highest-priority students in $T'$ until $q_s$ students are chosen.}

 $T^{D_i}$ and $S^{D_i}$ denote the non-empty sets of students and schools in district $D_i$. A {population} $P$ is the union of some (possibly all) districts. Each district $D_i$ has $q n_i$ students, $n_i+k_i$ schools and $q(n_i+k_i)$ school seats, where $k_i$ is a positive integer that reflects the imbalance between the supply and demand for school seats in each district. We use $K \coloneqq \sum_i^r k_i > 0$, i.e. the society as a whole is under-demanded and the size of its unbalance is $K$.\footnote{This assumption is satisfied in our data. Subsequent work by \cite{kamali2023welfare} studies the scenario in which there are more students that school seats in the integrated market.} We also use $N\coloneqq \sum_i^r n_i$.

Given a population $P$ with students $T^P$ and schools $S^P$, a {matching} $\mu:T^P \cup S^P \mapsto T^P \cup S^P$ is a correspondence such that for each $(t,s)\in T^P \times S^P$, $\mu(t)\in S^P$, $\mu(s)\subseteq T^P$, $\left\vert \mu(s) \right\vert \leq q_s$ and $\mu(t)=s$ if and only if $t \in \mu(s)$. 
A {matching scheme} $\sigma$ is a function that specifies a matching for each district $D_i$, denoted by $\sigma(\cdot, D_i)$, as well as for the society as a whole, denoted by $\sigma(\cdot,\Omega)$.\footnote{As no confusion shall arise, when referring to an arbitrary district, we will simply write $\sigma(\cdot, D)$.} 
The matchings $\sigma(\cdot, D)$ and $\sigma(\cdot, \Omega)$ denote the assignment of students to schools before and after consolidation occurs, respectively.
A matching $\mu$ is {stable} if $\nexists (t,s) \in T^P\times S^P$ such that i) $\mu(t)=t$ and $\left\vert \mu(s) \right\vert < q_s$, or ii) $s \succ_t \mu(t)$ and $t \rhd_s t' \in \mu(s)$. 
The matching $\mu_\sosm$ is the student-optimal stable matching if it is stable and all students weakly prefer over any other stable matching.

We are interested in comparing the number of students who benefit after consolidation occurs against those who become worse off. The sets $T^+(\sigma) \coloneqq \{t \in T:\sigma(t,\Omega) \succ_t \sigma(t,D)\}$ and $T^-(\sigma) \coloneqq \{t \in T:\sigma(t,D)\succ_t \sigma(t,\Omega)\}$ represent the students who benefit and lose from consolidation under the matching scheme $\sigma$. 
For some ESCP, we have that $T^-(\sigma)>T^+(\sigma)$ even when we choose the SOSM before and after consolidation, as in the following example.

\begin{example}
	\label{example1}
	Consider two balanced school districts $D_1$ and $D_2$, the first one with schools $s_1,s_2$ and students $t_1,t_2$, whereas the second one has school $s_3$ and student $s_3$. All schools have capacity one. The preferences and priorities appear below. The SOSM before consolidation occurs appears in squares, whereas the SOSM after consolidation appears in circles.
	\begin{center}
	\begin{tabular}{lll}
		\multirow{2}{*}{$D_1$} & $t_1: \ensquare{$s_2$}\succ \encircle{$s_1$} \succ s_3$  & $\quad s_1: \encircle{$t_1$} \rhd t_3 \rhd \ensquare{$t_2$}$ \\
		& $t_2: \ensquare{$s_1$} \succ \encircle{$s_2$} \succ s_3$  & $\quad s_2: \encircle{$t_2$} \rhd \ensquare{$t_1$} \rhd t_3$ \\
		\midrule
		$D_2$ & $t_3: s_1\succ \ensquare{\encircle{$s_3$}} \succ s_2$  & $\quad s_3: \ensquare{\encircle{$t_3$}} \rhd t_1 \rhd t_2$
	\end{tabular}
	\end{center}	
	Both students from district $D_1$ are harmed by district consolidation, whereas the one student from district $D_2$ retains her initial match. Hence, the number of losers is larger than the number of winners. 
\end{example}

Example \ref{example1} shows how consolidation can be bad for students, and can be generalised to show that, for any ESCP, we can partition the society into districts in such a way that every student is weakly worse off after district consolidation.\footnote{Whenever $\sigma_{\sosm} (t,\Omega)$ is Pareto optimal and different from $\sigma_{\sosm} (t,D)$, at least one student benefits from district consolidation. Conditions on schools' priorities such as acyclicity, guarantee that the SOSM is Pareto optimal for students \citep{ergin2002efficient}. Many other conditions on preferences and priorities have been proposed, see \cite{cantillon2022respecting} and references therein.}
But what can we expect on an average instance of an ESCP? We answer this question next using random markets.

\paragraph{Random Markets} Another way to analyse students' welfare changes is to quantify the gains from district consolidation in terms of the expected rank of their assigned school in random ESCPs, in which the schools' priorities and students' preferences are generated uniformly at random.\footnote{Ranks are a common welfare measure in matching markets \citep{wilson1972, knuth1976, pittel1989,knoblauch2009,che2019efficiency,ortega2023cost}.}
The {\it absolute rank} of a school $s$ in the preference order of a student $t$ (over all potential schools in the society) is defined by $\rk_t(s) \coloneqq \left\vert \{ s' \in S: s' \succcurlyeq_t s \} \right\vert$. Given a matching $\mu$, the {\it students' absolute average rank of schools} can be  defined by
\begin{eqnarray*}
	\rk_T (\mu) &\coloneqq& \frac{1}{\left\vert \overline{T} \right\vert} \sum_{t \in \overline{T}} \rk_t(\mu(t))
\end{eqnarray*}
where $\overline{T}$ is the set of students assigned to a school under matching $\mu$. Then, the welfare gains from consolidation for students of district $D_i$ are defined as $	\gamma_T (\sigma_\sosm) = \rk_T ( \sigma_\sosm(\cdot,D_i) ) - \rk_T ( \sigma_\sosm(\cdot,\Omega) )$. We can approximate the students' expected welfare gains from consolidation as function of $n_i$ and $k_i$, obtaining a set of interesting comparative statics as corollaries.

\begin{proposition}
	\label{thm:prop3}
	In a random ESCP, the expected welfare gains from consolidation for students $\gamma_T (\sigma_\sosm)$ can be approximated by 
$$
 \frac{N+K}{q} \left[\frac{\log \left(\frac{n_i+k_i}{k_i}\right) + (q-1) \log \log \left( \frac{n_i+k_i}{k_i} \right)}{n_i} - \frac{ \log\left(\frac{N+K}{K}\right) + (q-1) \log \log \left(\frac{N+K}{K}\right)}{N} \right]
$$
\end{proposition}

The above expression has three testable implications for empirical studies, namely:
\begin{corollary}
\label{thm:corollary1}

		 	The average gains from consolidation are positive for all districts.\footnote{Under mild restrictions on the parameters, detailed in  \ref{sec:Appendix_A}.}

\end{corollary}

\begin{corollary}
	\label{thm:corollary2}

		  Students from more under-demanded districts benefit less from consolidation.

\end{corollary}

\begin{corollary}
	\label{thm:corollary3}
	
 A smaller district size leads to larger gains from consolidation. 
\end{corollary}

The proof of Proposition \ref{thm:prop3} extends the computation of the average rank in random one-to-one to many-to-one markets using an extension of the coupon collector problem by \cite{newman1960double}, and is postponed to \ref{sec:Appendix_A}. However, we provide here some intuition for the comparative statics above. It is well-known that, in an unbalanced two-sided matching problem the agents in the short side choose whereas the agents in the large side get chosen, a phenomenon that increases as the imbalance between the two sides  of the market grows \citep{ashlagi2017}. Thus, if a local district is highly under-demanded (i.e. if $k_i$ is relatively large), students get assigned to highly ranked schools before consolidation, which makes the gains from consolidation smaller, and vice versa for  students who belong to an almost balanced district (i.e. if $k_i$ is close to zero). This explains our first comparative statics (Corollary 2). 

The second comparative statics (Corollary 3) is due to the relationship between relative and absolute rankings. In small districts, even if students are assigned to some of their preferred schools within their district, it is unlikely that those schools are in the top of their preference list. Thus, in small districts, there is large potential for welfare gains.

\paragraph{Correlated Preferences} We have assumed that preferences and priorities are drawn independently and uniformly, even though these tend to display particular correlation patterns in practice (which we document for Budapest in Section \ref{sec:results}). A reasonable question is whether our three corollaries extend to the case in which preferences are correlated. 
While we are unable to provide a theoretical answer, we address this issue by introducing two parameters, denoted by $\rho_t$ and $\rho_s$, reflecting the correlation on preferences and priorities, respectively (with $0\leq \rho_t, \rho_s \leq 1$ and 1 representing identical preferences). We then conduct a simulation exercise for a large number of correlation structures in \ref{sec:simcorrelation}. 
Correlation on students' preferences further increases the gains from integration (achieving a peak when $\rho_t=0.75$). We continue observing that students from smaller and more under-demanded districts benefit more from consolidation even with correlated preferences and priorities.

\section{Data}
\label{sec:data}
Hungary has a nationwide integrated school market in which every student can apply to any school in the entire country, and uses a centralised assignment mechanism to allocate students to schools. In this system, every student submits a rank-ordered list (ROL) of arbitrary length, ranking the school programmes that he would like to attend. Similarly, each school programme ranks all the students that applied to it according to several criteria such as grades, additional exams and entrance interviews. The specific weighting of these criteria is decided upon by each school but must comply with specific governmental regulations (e.g. the weight of the interview score must not be more than 25\%). School programmes submit a strict ranking of their more preferred students, whereas the remaining students are simply deemed unacceptable and are not ranked against each other. The assignment of students to secondary schools is conducted using the student-proposing deferred acceptance algorithm \citep{biro2008student}. 
This algorithm has been used since 2000 in a fully consolidated fashion.\footnote{See \cite{biro2012mip} for a detailed description of its implementation.} 

For our empirical analysis, we use data from the national centralised matching of students to secondary schools in Hungary, the so-called KIFIR dataset,\footnote{KIFIR stands for \emph{Középiskolai Felvételi Információs Rendszer}, which translates to ``Information System on Secondary School Entrance Exams''.} along with student-level data from the national assessment of basic competencies (NABC), both from the year 2015. Our data encompasses the universe of all students in Hungary who apply to a secondary school programme in 2015 (at an age of 14, with some exceptions). Each secondary school offers general or specialised study programmes with different quotas that are known ex-ante by students. \ref{sec:Appendix_Data} discusses in detail the original data sources. Due to data protection arrangements, access to these data was restricted and our estimation routines were run by officials at the Hungarian Ministry of Education on their local computer.

We restrict our attention to the greater Budapest area which comprises 23 well-defined districts, so as to obtain a realistic setting within which the (un)consolidation of school districts can be studied. Budapest lends itself to this type of analysis because it is a geographically relatively small market that is tightly integrated, and yet the market is large enough to permit a meaningful study of the decomposition of a unified admission system into smaller and well-defined districts. Figure \ref{fig:districtflow} shows the geographical area of Budapest with school district borders, and with arrows between districts that send their students to study to other districts. Figure \ref{fig:districtflow} also shows that there is a considerable amount of inter-district movements, especially in the inner parts of the city.
\begin{figure}[h!]
	\begin{center}
		\includegraphics[clip, trim=0cm 2.5cm 0cm 2.5cm, width=\textwidth]{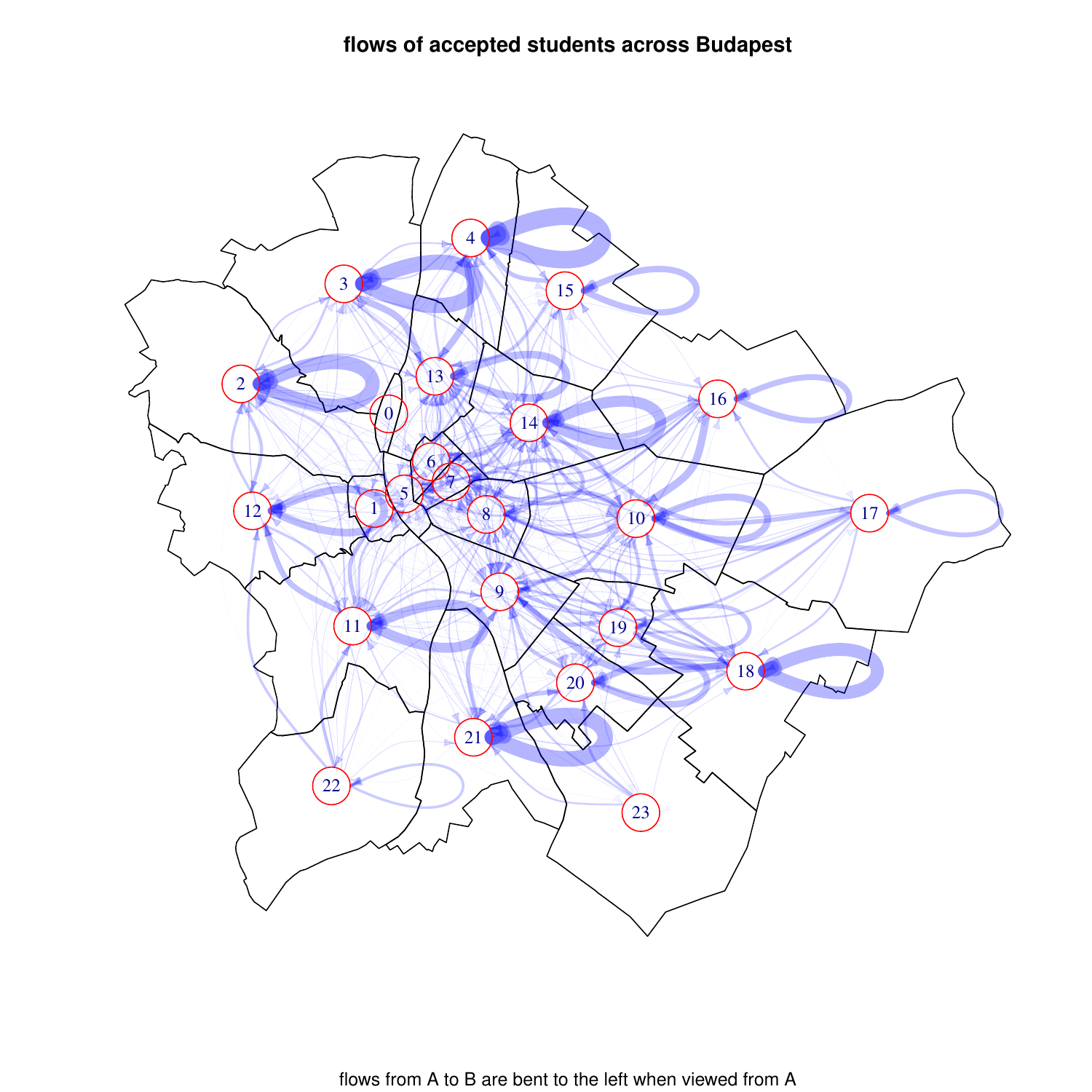}
	\end{center}
	\caption[Flows of accepted students in Budapest.]{Flows of accepted students between school districts in Budapest. Flows from one district $D$ to another district $D'$ are bent to the left when viewed from $D'$. The width of the arrows from $D$ to $D'$ is proportional to the number of students who live in $D$ and who were accepted at a school in $D'$.}
	\label{fig:districtflow}
\end{figure}

We link the application records in the KIFIR database to the corresponding information in the NABC dataset for 10,880 students who applied for a secondary school place in Budapest in 2015.
In order to attain comparable competitive conditions, we adjust the schools' capacities by removing any seats that were assigned to students not in our sample, i.e. outside Budapest. In total, there are 881 school programmes of 246 schools in our sample. A school programme sometimes contains several particular classes in which students specialize on languages or computer science, for instance. Thus, schools can offer multiple programmes within the same age cohort. We aggregate school programmes at the school level in order to reduce the sample size and the associated computational burden, which is not negligible in our context.\footnote{We converted students' ROLs  to the school level by keeping the most preferred school programme of every school. Combining the 246 schools with 10,880 students still leaves us with almost 2.7 million possible student-school combinations to be considered.} 
We focus on three school types -- four-year grammar, vocational secondary, and vocational schools -- to which students apply after having completed eight years of primary education. For all students in the sample, their location of residence is approximated by their zip code, and the Open Source Routing Machine \citep{OSRM} was used to compute travel distances from each of Hungary's zip code centroids to every known school location.
\begin{table}[h!]
	\centering
	\caption{Secondary school applicants in Budapest: Summary statistics.} 
		\label{tab:students_summary}
	\begin{footnotesize}
	\begin{threeparttable}
	\begin{tabular}{lrrrrr} 
		\toprule
		Variable&  \multicolumn{1}{c}{Mean} &\multicolumn{1}{c}{Median} & \multicolumn{1}{c}{SD} & \multicolumn{1}{c}{Min} &\multicolumn{1}{c}{Max}\\ 
		\midrule
		 \multicolumn{6}{l}{\it Panel A. Student characteristics $(N=10,880)$}\\
		birth year & 2,000.1 & 2,000 & 0.550 & 1,996 & 2,002 \\ 
		female & 0.50 & 0 & 0.50 & 0 &  1 \\ 
		grade average & 4.06 & 4.20 & 0.69 & 1.00 & 5.00\\ 
		math score (NABC)\textsuperscript{*} & 0.00 & 0.03 & 1.00 & $-$3.83 & 3.52 \\ 
		hungarian score (NABC)\textsuperscript{*} & 0.00 & 0.05 & 1.00 & $-$4.19 & 3.18 \\ 
		ability\textsuperscript{\textdagger} & 1.47 & 1.55 & 1.40 & -3.66 & 6.01 \\
		SES score\textsuperscript{*}& 0.00 & 0.12 & 1.00 & $-$4.11 & 1.65\\ 
		ROL length & 4.09 & 4 & 1.80 & 1 & 24 \\ 
		applies to home district & 0.68 & 1 & 0.47 & 0 & 1 \\ 
		ROL length within home district & 1.05 & 1 & 0.97 & 0 & 7   \\ 
		&&&&&\\	
		
		 \multicolumn{6}{l}{\it Panel B. Attributes of first-choice school ($N=10,880$)}\\
		distance (km) & 7.10 & 6.35 & 4.63 & 0.11 & 36.65 \\ 
		ave. math score (enrolled students) & 0.32 & 0.37 & 0.72 & $-$1.97 & 1.75  \\ 
		ave. hungarian score (enrolled students) & 0.35 & 0.41 & 0.70 & $-$2.01& 1.69 \\ 
		ave. SES score (enrolled students) & 0.09 & 0.10 & 0.58 & $-$1.89 & 1.21 \\ 
		&&&&&\\	
		
		 \multicolumn{6}{l}{\it		\it Panel C. Attributes of assigned school ($N=9,783$)}\\
		match rank & 1.48 & 1 & 0.92 & 1.00 & 11.00 \\ 
		matched to first choice & 0.71 & 1 & 0.45 & 0.00& 1.00  \\ 
		distance (km) & 7.06 & 6.22 & 4.65 & 0.11 & 36.65\\ 
		assigned to home district & 0.30 & 0 & 0.46 & 0.00 & 1.00 \\ 
		ave. math score (enrolled students) & 0.20 & 0.25 & 0.69 & $-$1.97 & 1.75 \\ 
		ave. hungarian score (enrolled students) & 0.23 & 0.24 & 0.67 & $-$2.01 & 1.69 \\ 
		ave. SES score (enrolled students) & $-$0.01 & -0.04 & 0.57 & $-$1.89 & 1.21 \\ 
		\bottomrule
	\end{tabular} 	
	\begin{tablenotes}
		\tiny \item Variables indicated with an asterisk are z-normalized. The 2015 Hungarian and math test scores are taken by the students as part of the admissions process. \textsuperscript{\textdagger} ability is the first principal component of the joint distribution of students' grades, their math, and their hungarian scores. Socioeconomic status is a composite measure which includes, amongst other variables, the number of books that the household has, or the level of parental education.
	\end{tablenotes}
\end{threeparttable}

	\end{footnotesize}
\end{table}

Table \ref{tab:students_summary} shows student-level summary statistics of our data. Panel A shows that most students were born in 2000, and that there are as many girls as boys, as one would expect. The students' mean grade average in the previous school year is four (on a scale from one to five, where five is the highest grade in the Hungarian grading system). Their math, Hungarian, and SES scores from the NABC\footnote{Where these scores were missing in our data, we imputed the missing values using predictive mean matching, as implemented in the package \texttt{mice} in R \citep{MICE}; see \ref{sec:Appendix_Data}.} were standardised because their absolute numbers have no meaning. The variable measuring students' socio-economic status (SES) is a composite measure that includes, amongst other variables, the number of books that the household has, or the level of parental education. This indicator was also standardized.\footnote{The SES measure is the one used by the Hungarian Ministry of Education for their reporting and is slightly different from the OECD measure used in \cite{horn2013diverging}, \cite{shorrer2023dominated} and \cite{shorrer2017obvious}.} As the students' grade average, their math, and their Hungarian NABC scores are highly correlated, we created a composite measure that we call ``ability'' and which is constructed as the first principal component of these variables. Table \ref{tab:students_summary} shows that the students from Budapest in our sample file applications to about four schools, on average.\footnote{Actually, students apply for course programmes, many of which may be offered by the same school. Thus, the actual length of the students' ROLs is larger than this.} Roughly seventy percent of the students apply to at least one school in their home district, and on average, students include only one school from their home district in their submitted rank order list.

Panel B shows some attributes of students' first choice school, and panel C describes attributes of the students' actual assigned school. Panel C shows that the average match rank is 1.48 (with 1 being the most preferred school), with more than seventy percent of all students being assigned to their top choices.\footnote{Because we aggregated programmes at the school level, the average rank of programmes may be larger than 1.48. For example, if a student applies to two programmes within the first choice school and is assigned to her second best programme, this would be recorded in our data as being assigned to her first choice. The aggregation of programmes at the school level is done for practical reasons, see footnote 15.} This is partially due to the fact that there is excess capacity: the schools in the sample have more seats than there are students.\footnote{The schools' excess capacity has been confirmed in conversation with officials from the Hungarian ministry of education on several occasions. In subsection \ref{sec:Appendix_D}, we replicate our analysis perform several robustness tests reducing schools' capacities under different counterfactuals. The students' welfare gains remain positive and large in four out of five specifications, becoming zero if the demand and supply for school seats are exactly balanced in each district.} The distribution of the number of programmes the students apply to, and of the actual match rank in the 2015 matching round, are shown in Figure \ref{fig:kifir2015}. This diagram confirms that most students submit rather short ROLs, and the majority of students are assigned to their submitted top choice.
\begin{figure}[h!]
	\begin{subfigure}[c]{0.49\textwidth}
		\includegraphics[width=\textwidth]{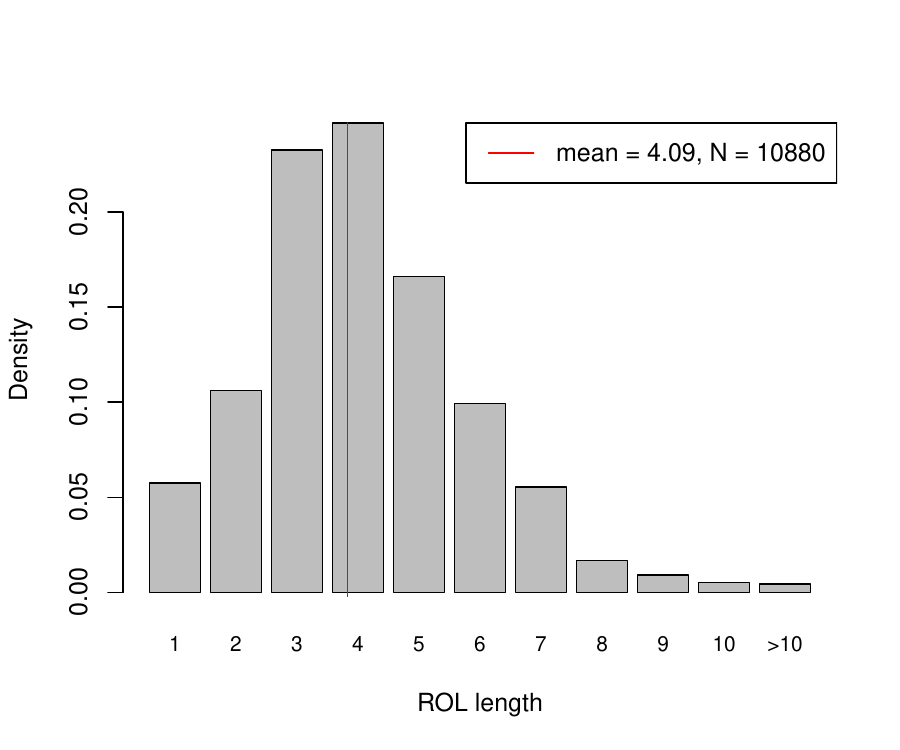}
	\end{subfigure}
	\begin{subfigure}[c]{0.49\textwidth}
		\includegraphics[width=\textwidth]{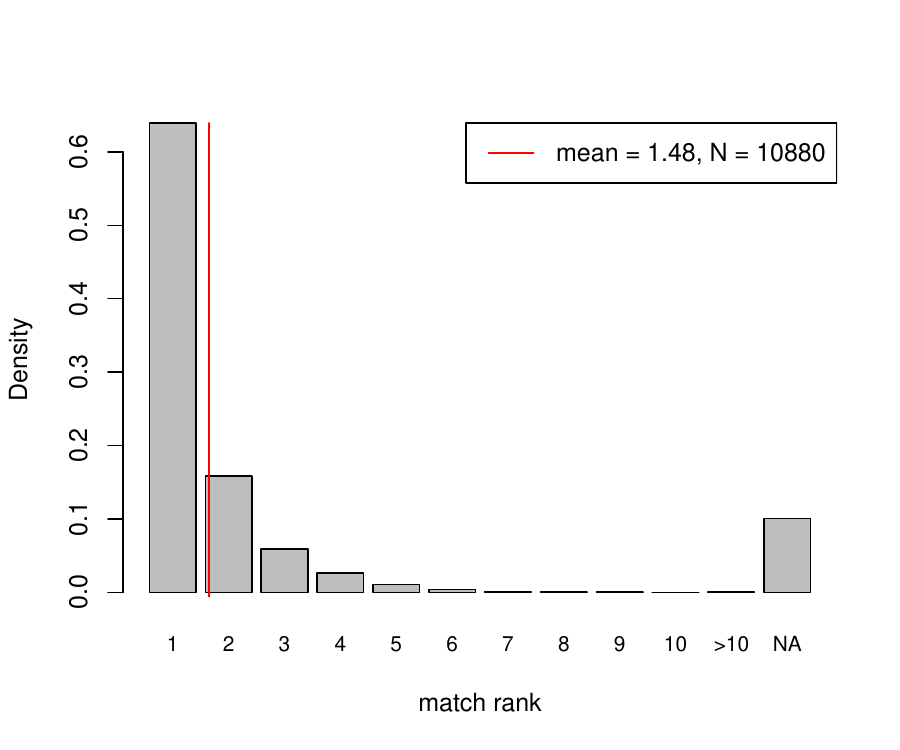}
	\end{subfigure}
	\caption{Distribution of the length of students' ROLs and of their realised match rank.}
	\label{fig:kifir2015}
\end{figure}

Table \ref{tab:schools_summary} shows the school-level summary statistics. School programmes in Budapest are very attractive so that many students from outside Budapest rank a school in Budapest as their top choice. Therefore, students from Budapest face stiff competition in their ``domestic'' school  market, and restricting the attention to students from Budapest will likely lead to a more relaxed assignment problem. In order to circumvent this problem, we subtracted the number of admitted students from outside Budapest from the schools' capacity so as to maintain the original ``tightness'' of the market -- this is the adjusted capacity that is used throughout our analysis. 
\begin{table}[h!]
	\centering
	\caption{Secondary schools in Budapest: Summary statistics ($N=246$).}
		\label{tab:schools_summary}
	\begin{footnotesize}
\begin{tabular}{@{\extracolsep{5pt}}lccccc} 
\toprule

	Variable &  \multicolumn{1}{c}{Mean} & \multicolumn{1}{c}{Median} & \multicolumn{1}{c}{SD} & \multicolumn{1}{c}{Min} & \multicolumn{1}{c}{Max}\\ 
\midrule
	capacity  & 137.10 & 120 & 96.31 & 6 & 502 \\ 
	adj. capacity & 116.45 & 94.5 & 90.57 & 6 & 498 \\ 
	applications & 411.20 & 229.5 & 456.93 & 7 & 2,392 \\ 
	ROL1 applications & 41.18 & 29 & 41.35 & 0 & 234 \\ 
	acceptable applications  & 126.29 & 89 & 119.20 & 0 & 654 \\ 
	assigned students & 39.77 & 34 & 31.50 & 0 & 157 \\ 
	ave. match rank  & 47.23 & 42.38 & 34.01 & 2.25 & 187.30 \\ 
	math (enrolled students)  & $-$0.15 & $-$0.12 & 0.79 & $-$2.01&  1.75 \\ 
	hungarian (enrolled students)  & $-$0.11 & $-$0.10 & 0.75 & $-$2.03 & 1.67 \\ 
	SES (enrolled students)  & $-$0.19 & $-$0.19 & 0.65 & $-$1.91 & 1.22 \\ 
	math (assigned students)  & $-$0.28 & $-$0.32 & 0.68 & $-$2.40 & 1.66 \\ 
	hungarian (assigned students) & $-$0.29 & $-$0.40 & 0.70 & $-$2.15 & 1.45 \\ 
	SES (assigned students)  & $-$0.13 & $-$0.19 & 0.63 & $-$1.66 & 1.52 \\ 
\bottomrule
\end{tabular} 

	\end{footnotesize}
\end{table}

The average school receives over four hundred applications, of which only 130 are deemed acceptable. About forty students are assigned to each school on average. The relatively small number of acceptable applications could indicate that it is quite costly for schools to rank all their applicants consistently, and so they focus on only ranking those students which are most likely to be admitted to the school. Our estimation approach assumes that schools submit their priority lists truthfully, i.e. that every student who is labelled ``unacceptable'' is really less preferred than any other applicant that is actually ranked by the school. This assumption could be violated if schools strategically choose to omit high achieving students who they anticipate will attend a different school. However, we think that this is probably a minor problem and schools are overall truth-telling.

We also collected data on whether a school holds an additional entrance interview, and we found that about forty percent of all schools do so.\footnote{This information was manually collected from the website \hyperlink{https://felvizsga.eu/felvi.php}{https://felvizsga.eu/felvi.php} which provides information about admission procedures at different Hungarian schools. Last accessed on 11 November 2019.} Table \ref{tab:schools_summary} also summarises the school-level averages of admitted and currently enrolled students. The standard deviation of these school-level averages is more than two-thirds of the total variance across students, which is normalised to one. Thus, there is evidence of a substantial amount of sorting by ability and socio-economic status.

\section{Empirical strategy}
\label{sec:strategy}
Our empirical strategy to estimate the gains from district consolidation is to compute the SOSM in an unconsolidated school market and compare it to the SOSM in the consolidated school market. 
In a first pass, we use the submitted ROLs to obtain an \emph{ad hoc} measure of the consolidation gains. This approach, deferred to \ref{sec:reportedpref}, has an important shortcoming:
thirty percent of all students have not included any school from their home district in their submitted ROLs,  and as a result, they remain unmatched in a counter-factual, disintegrated school market.

To circumvent this issue, we estimate the complete ROLs of all market participants. This allows us to compute a more complete SOSM in the unconsolidated market, and also to compare utility outcomes, rather than ordinal ranks. Figure \ref{fig:empirical_strategy} summarises our empirical strategy. In the end, as we describe in Section \ref{sec:results}, both approaches lead to very similar conclusions regarding the effect of district consolidation, but estimating the preferences allows us to understand such gains in terms of students' utilities.

\begin{figure}[h!]
	\center
	\includegraphics[width=0.7\textwidth]{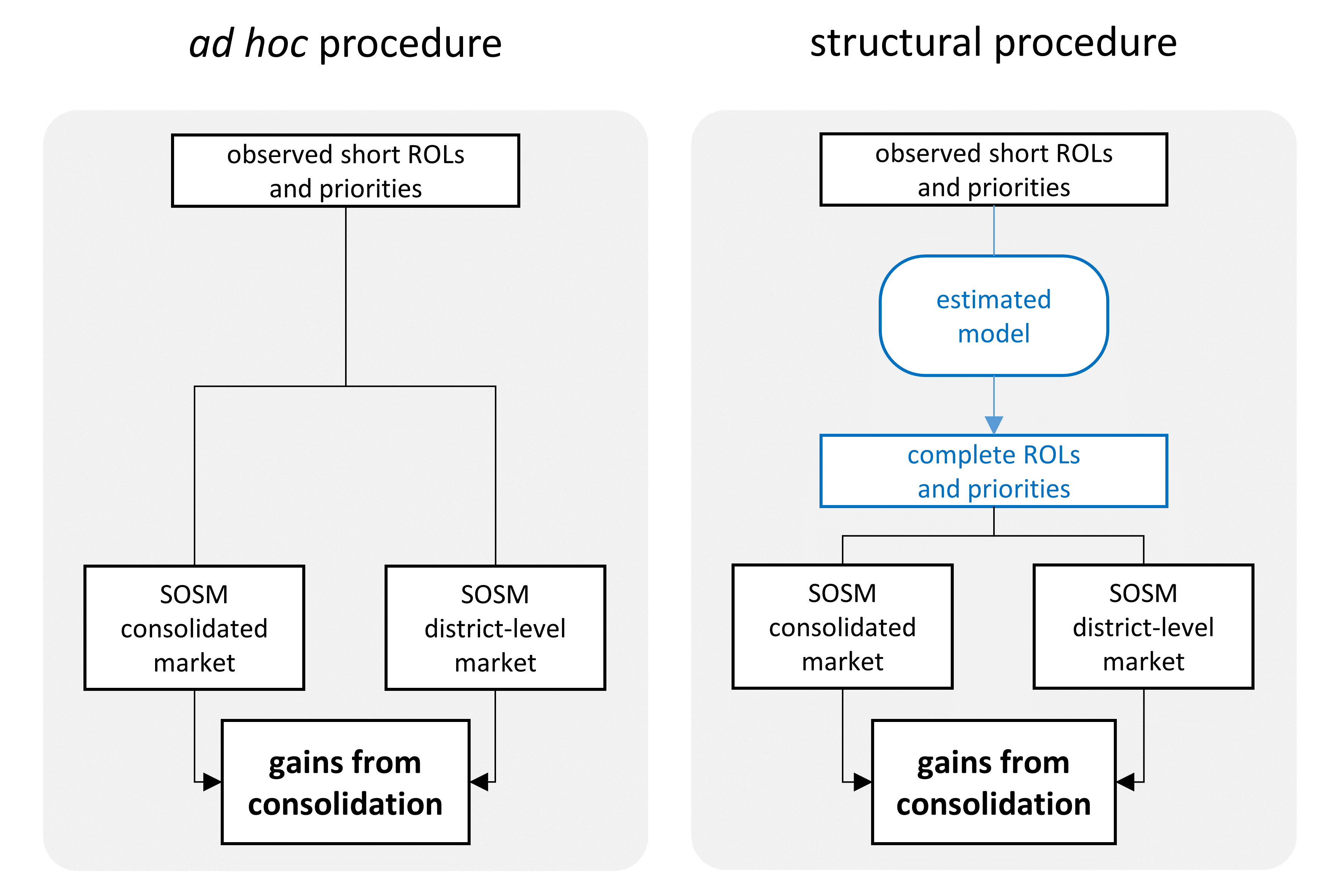}
	\caption{An overview of our empirical strategy (the {\it ad hoc} procedure is presented in \ref{sec:reportedpref}).}
	\label{fig:empirical_strategy}
\end{figure}

\subsection{Estimation methodology}
\label{sec:strategy:prefest}
We observe a school choice market with a set of students ($T$) and a set of schools ($S$). We write students' utilities over the set of schools $U_t(s)$, and schools' valuations over the set of students $V_s(t)$ as\begin{eqnarray}
	U_{t}(s) &=& U_{t0} + \mathbf{X}_{ts}\beta + \epsilon_{ts} \label{Eqn:StudentPref} \\
	V_{s}(t) &=& V_{s0} + \mathbf{W}_{st}\gamma + \eta_{st} \label{Eqn:SchoolPref}
\end{eqnarray}

where $\mathbf{X}_{ts}$ and $\mathbf{W}_{st}$ are observed characteristics that are specific to the school-student match $st$. $\mathbf{X}_{ts}$ could, for instance, include a school fixed effect or the travel distance from $t$ to $s$. The terms $U_{t0}$ and $V_{s0}$ are the outside utilities of not being matched to any student or school. These are assumed to be zero, so that the latent utilities represent the net utility of being matched. The match valuations $U_t(s)$ and $V_s(t)$ are treated as latent variables that are to be estimated along with the structural parameters $\beta$ and $\gamma$. 

We denote by $\mathbf{U}_t$ the vector of student $t$'s utilities over the entire set of schools, and by $\mathbf{V}_s$ school $s$'s valuations over the entire set of students. 
The elements of vector $\mathbf{Z}$ that do not refer to the student-school pair $ts$ are denoted by $\mathbf{Z}_{-ts}$, i.e. $\mathbf{U}_{-ts}$ denotes the entire set of utility numbers but for $U_t(s)$.
We further assume that the structural error terms $\epsilon_{ts}$ and $\eta_{st}$ are independent across alternatives, and normally distributed with unit variance.

We observe students' submitted ROLs over schools, $\mathbf{rk}$, and schools' submitted partial priority orderings over students, $\mathbf{pr}$. 
We denote the observed ROLs of student $t$ as $L_t = (s_t^1, s_t^2, \ldots, s_t^{K_t})$, where $s_t^k\in \mathcal{S}$ is some school. Denote the rank that student $t$ assigns to school $s$ as $rk_t(s)$, with $1 \leq rk_t(s) \leq K_t$ if $s \in L_t$ and $rk_t(s) = \emptyset$ else. The observed ROLs $\mathbf{rk}$ encompass all individually observed rankings $rk_t(s)$. Similarly, denote the set of students who apply to school $s$ as $L_s$, and let the priority number that school $s$ assigns to student $t$ be $pr_s(t)$. Priority numbers are like ranks, in that they take discrete values, and a lower priority number means higher priority. Schools are required to prioritise all students who apply to them, but they may rank some students as unacceptable. We say that $pr_s(t)=+\infty$ if student $t$ is unacceptable to school $s$, and $pr_s(t)=\emptyset$ if student $t$ did not apply at school $s$. Thus, $pr_s(t) \in \{1,2,\ldots,|L_s|,\infty,\emptyset\}$.

Given the specification of the error terms and the observed rankings, equations \eqref{Eqn:StudentPref} and  \eqref{Eqn:SchoolPref} can be regarded as representing two distinct rank-ordered probit models \citep[p.181]{Train2009}. However, the complications outlined in the introductory part of this section imply that an estimation as such is unlikely to obtain the true preference parameters. Because schools only rank students who apply to them, and geographical distance is not an admission criterion, we cannot follow the approach of \cite{Burgess2015} to construct the feasible choice set of each student in order to identify her true preferences. For the same reason, the construction of the stability-based estimator that is proposed in \cite{fack2019beyond} cannot be applied. Still, we follow their idea in that we use a combination of identifying assumptions to identify the model parameters. These are described in turn. 

We chose a Bayesian data augmentation approach, owing to its flexibility and because it allows us to directly estimate the latent variables $\mathbf{U}$ and $\mathbf{V}$ which are our prime objects of interest.\footnote{Similar approaches have been used by \cite{Logan2008} and \cite{Menzel2013} in one-to-one matching markets.} Following \citet{Lancaster2004}, we simulate draws from the posterior density of the structural preference parameters $p(\beta,\gamma|data)$ by considering the component conditionals $p(\mathbf{U}|\beta,\gamma, \mathbf{V}, data)$, $p(\mathbf{V}|\beta, \gamma, \mathbf{U}, data)$, $p(\beta|\gamma,\mathbf{U},\mathbf{V},data)$ and $p(\gamma|\beta,\mathbf{U},\mathbf{V},data)$. We assume a flat prior for the structural preference parameters $\gamma$ and $\beta$.\footnote{Details of the conditional posterior distributions are spelled out in \ref{sec:posterior}.}
Our \textit{data} comprises the co-variates $\mathbf{X}$ and $\mathbf{W}$, of the assignment $\mu$ and of the submitted rank order and priority lists. The Gibbs algorithm to sample from the posterior density repeats the following steps $N$ times:
\begin{enumerate}
	\item $\forall\, t,s$: draw $U_t(s)$ from $p(U_t(s)|\beta,\gamma, \mathbf{U}_{-ts},\mathbf{V}, data) = N(\mathbf{X}_{is} \beta,1)$, truncated to $[\underline{U}_{t}(s),\overline{U}_t(s)]$
	\item $\forall\, s,t$: draw $V_s(t)$ from $p(V_s(t)|\beta,\gamma, \mathbf{V}_{-st},\mathbf{U}, data) = N(\mathbf{W}_{st} \gamma,1),$ truncated to $[\underline{V}_s(t),\overline{V}_s(t)]$
	\item draw $\beta$ from $p(\beta|\gamma,\mathbf{U},\mathbf{V},data) = N\left(b, (\mathbf{X}'\mathbf{X})^{-1}\right)$, with $b=(\mathbf{X}'\mathbf{X})^{-1} \mathbf{X}' \mathbf{U}$
	\item draw $\gamma$ from $p(\gamma|\beta,\mathbf{U},\mathbf{V},data) = N\left(g, (\mathbf{W}'\mathbf{W})^{-1}\right)$, with $g=(\mathbf{W}'\mathbf{W})^{-1} \mathbf{W}' \mathbf{V}$
\end{enumerate}
Key to our estimation methodology are the truncation intervals for $U_t(s)$ and $V_s(t)$. These intervals are functions of the data and the latent variables in the model, and they are specific to the particular identifying restrictions used. We describe possible identifying restrictions below, and outline how they can be used to construct these truncation intervals; a detailed derivation of the truncation intervals is deferred to \ref{sec:Appendix_B}.

\subsection{Identification restrictions}
\paragraph{Weak truth-telling (WTT)}
Weak truth-telling requires that the student truthfully submits her top $k_t$ choices, and that any unranked alternative is valued less than any ranked alternative. Formally, this implies that $U_i(s) \geq U_i(s')$ if (but not only if) $rk_t(s) < rk_t(s')$ or $s' \notin L_t$. A similar reasoning can be applied to schools' priorities over students, with the difference that a school $s$ cannot rank a student $t$ unless $t$ applies to $s$. However, a school can label a student as ``unacceptable'' which implies that all students labelled in this manner are valued less than any other ranked student. So we can bound $V_s(t) \geq V_s(t')$ if $s \in L_t \cap L_{t'}$ and $pr_s(t) < pr_s(t')$ or $pr_s(t')=+\infty$. Taken together, these bounds pin down the truncation intervals and the component conditionals in steps 1 and 2 above. WTT is at odds with students strategically omitting top schools in their ROLs.

\paragraph{Undominated strategies (UNDOM)}
The assumption of undominated strategies is similar to that of WTT, but is restricted to the submitted rank order lists, i.e. there is no inference between a ranked and an unranked school. Thus we can bound $U_t(s) \geq U_t(s')$ if $s,s' \in L_t$ and $rk_t(s) < rk_t(s')$. The bounds for the school's valuation over students are the same as in WTT because a school cannot decide not to rank a student; it must at least decide whether the student is acceptable or not. Therefore, UNDOM is a weaker condition than WTT.

\paragraph{Stability (STAB)}
If we assume that the matching of students to schools is \emph{stable} in the sense outlined in Section \ref{sec:model}, a different set of bounds can be applied to the latent valuations. Denote the observed matching as $\mu$ such that $\mu(t)=s$ and $t\in\mu(s)$ if student $t$ is assigned to school $s$. Stability implies that there is no pair of a student $t$ and a school $s$ such that $V_s(t) > \min_{t' \in \mu(s)} V_s(t')$ and $U_t(s) > U_t(\mu(t))$. These conditions imply that we can bound the realization of $U_t(s)$ conditional on the matching $\mu$, and on the match valuations $\mathbf{U}_{-ts}$ and $\mathbf{V}_{-ts}$. Analogous bounds can  be placed on $V_s(t)$ with straightforward extensions for cases where schools are not operating at full capacity. These bounds are spelt out in \ref{sec:Appendix_B} in greater detail.

\paragraph{STAB + UNDOM}
The combination of undominated strategies and stability is our preferred set of identification assumptions to estimate students' and schools' preference parameters. By assuming that the resulting assignment is stable and that students submit undominated ROLs, the bounds for the latent valuations can be tightened, imposing whichever bound is stricter. One may be concerned that the conditions imposed by both assumptions may be incompatible, but as we show in Lemma \ref{thm:lemma1} in \ref{sec:Appendix_B}, the resulting set of latent valuations is non-empty under reasonable conditions. 

The assumption that applicants use undominated strategies allows for partial parameter identification through the inequality bounds \citep{fack2019beyond}. To see the benefit of combining stability with UNDOM, consider the following example. With stability, the preference order over two schools is mainly estimated using the subset of students who are assigned to one of these schools \textit{and} have both schools in their stable choice set (i.e., cleared the latent admissions cutoffs at both schools). UNDOM instead uses the set of all students who ranked both schools, including students who were not assigned there or did not clear the cutoffs. Thus, while STAB uses stability conditions to infer the preference order over ranked and unranked choices, UNDOM uses additional identifying information for ranked choices. 

The results of an extensive Monte-Carlo study, outlined in \ref{sec:appendix:MCsim}, shows that this combination of identification restrictions allows for point identification of preference and priority parameters and is robust to the strategic submission of preference lists.

\section{Empirical results}
\label{sec:results}

\subsection{Preference estimation results}
\label{sec:results:prefs}
 
We now turn to the key building block of our structural approach to computing the gains from consolidation. We assume that students' preferences over schools depend on the geographical distance and on the squared distance between a student's place of residence and the schools' location. To proxy for a school's academic quality, we computed the average of the mean NABC scores in math and Hungarian of students currently enrolled at that school. Also, we computed the average SES score of those students. Finally, we included the interaction terms of the students' math, Hungarian, and socio-economic scores with their respective school-level means in order to test whether there is evidence for assortative matching.
 
To account for any unobserved heterogeneity across schools, we include school dummies, as we have a small set of observable school-level characteristics.\footnote{Because we are essentially estimating a discrete choice model over the set of schools, the preference specification cannot include an intercept, as this would not be identified. For the same reason, the first school dummy was omitted lest an intercept is introduced by means of a linear combination of school dummies. In the empirical specification, it turned out that some multicollinearity problems arose even when excluding one school dummy, possibly due to numerical inaccuracies or the presence of interactions. Thus, some more school dummies had to be excluded. To this end, we chose the following approach: In a first step, all fixed effects for schools numbered 2 through to 246 were used to generate a design matrix $\mathbf{X}$ for the problem at hand. In step $k$, we checked whether the matrix $\mathbf{X}'\mathbf{X}$ had full rank. If not, we dropped one school fixed effect and continued with step $k+1$, else we stopped. This procedure resulted in a set of fixed effects for the schools numbered 2 through to 243. \label{fn:multicoll}} 
We assume that schools select their students based on their gender, math and Hungarian NABC scores, and the SES score.  
We estimated a separate set of coefficients for each tier of the Hungarian school system.
Our Gibbs sampler was initialized with zero values for all parameters and valuations. Because the estimation procedure is rather time consuming, we let it run for only ten thousand iterations and discarded the first five thousand iterations. To reduce the serial correlation, only every tenths estimate of the remaining five thousand iterations was used so that the posterior means are averaged across five hundred iterations. We confirmed that the coefficient estimates had converged to their stationary posterior distribution after about two thousand iterations. 

The posterior means of the parameter estimates for two different identifying assumptions (WTT and STAB + UNDOM) are shown in Table \ref{tab:prefest} below and will be discussed in turn. Notice that our Bayesian estimation approach allows us to directly sample from the posterior parameter distribution, so that we do not need to rely on asymptotic results as in conventional estimation approaches. Therefore, Table \ref{tab:prefest} does not include asymptotic p-values but instead shows the 95\% confidence intervals of the posterior distribution.
\begin{table}[h!]
	\centering
	\caption{Posterior parameter means under two different identifying assumptions.}
	\label{tab:prefest}
	\begin{small}
		\begin{threeparttable}
			\begin{tabular}{lrrrr}
				\toprule
				{Student's selection of schools}      & \multicolumn{2}{c}{STAB + UNDOM} & \multicolumn{2}{c}{WTT} \\
				& $\bar{\beta}$ & \textit{95\% CI} & $\bar{\beta}$ & \textit{95\% CI} \\
				\midrule
				distance (km) & -0.15 & [-0.15;-0.14] & -0.34 & [-0.34;-0.34] \\
				distance (km\textsuperscript{2}) &  0.00 & [ 0.00;  0.00] &  0.01 & [ 0.01; 0.01] \\
				academic quality &  0.75 & [ 0.68;  0.82] &  1.49 & [ 1.46; 1.52] \\
				avg. SES &  1.52 & [ 1.41;  1.65] &  0.46 & [ 0.42; 0.51] \\
				oral entrance exam & -1.46 & [-1.70;-1.24] & -4.44 & [-4.59;-4.29] \\
				math $\times$ ave. math &  0.18 & [ 0.17;  0.20] &  0.19 & [ 0.18; 0.20] \\
				hungarian $\times$ ave. Hungarian &  0.22 & [ 0.21;  0.24] &  0.30 & [ 0.29; 0.32] \\
				SES $\times$ ave. SES &  0.29 & [ 0.30;  0.31] &  0.36 & [ 0.35; 0.37] \\
				\midrule
				{Schools' selection of students} & $\bar{\gamma}$ & \textit{95\% CI} & $\bar{\gamma}$ & \textit{95\% CI} \\
				\midrule
				\emph{gymnazium} &       &       &       &  \\
				~ female & -0.93 & [-0.95;-0.91] & -0.01 & [-0.04; 0.01] \\
				~ math score &  0.05 & [ 0.03;  0.07] &  0.19 & [ 0.17; 0.22] \\
				~ Hungarian score &  0.39 & [ 0.38;  0.41] &  0.22 & [ 0.20; 0.25] \\
				~ SES score &  0.04 & [ 0.02;  0.05] &  0.10 & [ 0.08; 0.12] \\
				\emph{secondary school} &       &       &       &  \\
				~ female & -0.44 & [-0.48;-0.40] &  0.12 & [ 0.09; 0.16] \\
				~ math score &  0.18 & [ 0.16;  0.21] &  0.24 & [ 0.21; 0.27] \\
				~ Hungarian score &  0.29 & [ 0.26;  0.32] &  0.21 & [ 0.20; 0.26] \\
				~ SES score &  0.05 & [ 0.03;  0.07] &  0.10 & [ 0.08; 0.12] \\
				\emph{vocational school} &       &       &       &  \\
				~ female &  0.09 & [ 0.04;  0.16] &  0.05 & [-0.03; 0.13] \\
				~ math score &  0.10 & [ 0.06;  0.14] &  0.08 & [ 0.03; 0.13] \\
				~ Hungarian score &  0.19 & [ 0.15;  0.23] &  0.14 & [ 0.09; 0.20] \\
				~ SES score &  0.01 & [-0.02;  0.04] &  0.02 & [-0.02; 0.05] \\
				\bottomrule
			\end{tabular}%
			\begin{tablenotes}
				\tiny \item Posterior means of preference and priority parameters under two different identifying assumptions. Fixed effects for schools numbered 2 through to 243 were included in students' preference equation and are not reported here. Confidence intervals from the posterior parameter distribution of the Gibbs sampler.
			\end{tablenotes}
		\end{threeparttable}
	\end{small}
\end{table}

First, consider the results of the college selection equation (top panel of Table \ref{tab:prefest}) across the two identifying assumptions. These results are qualitatively similar to each other: students dislike schools that are further away, but the marginal disutility of travelling is \emph{decreasing} because the squared distance term is positive. Students also value academic quality and prefer schools with a higher average SES score, but they dislike the presence of an oral entrance exam. The coefficient for the presence of an entrance exam is much smaller (\emph{i.e.} more negative) in the WTT result: this is an indication that students strategically omit highly competitive schools which hold an entrance exam, so that the WTT estimates of the oral interview are biased downwards, whereas our stability based estimator corrects for this bias. This result confirms the importance of correcting for strategic reporting when estimating students' preferences. 
The interaction terms are all positive, which suggests that there is sorting on both academic ability and on socioeconomic background. Both estimation approaches yield results that are qualitatively quite similar. The variance of the interacted variables is much larger than that of the school-level variables, so that the interaction terms' contribution towards explaining student preferences is actually quite large.

The results of the student selection equation (bottom panel of Table \ref{tab:prefest}) show that students' math and Hungarian scores are important variables that schools condition their choices on. 
Somewhat surprisingly, the female coefficient is negative in the STAB + UNDOM specification, whereas it is minimal in the WTT specification. The large negative estimated coefficients for the female indicator is due to the stability requirement: in the data, female students have higher Hungarian scores than male students.\footnote{See Table \ref{tab:gender_differences} in \ref{sec:Appendix_Data}.} At the same time, the Hungarian score is also a key determinant of the schools' priority decision. But in the aggregate, roughly as many female students as male students are admitted to each school, and so the negative female coefficient is needed to ensure that not too many female students form instabilities with school seats occupied by male students. Hence, we think that the negative female coefficient merely reflects the schools' desire to have a balanced gender composition, but it does not indicate discrimination of female students per se.

All schools except for vocational schools appear to select on the students' socioeconomic status although the coefficient is small compared to the Hungarian score. Yet, in combination with the students' taste for schools with a higher average socio-economic status, and the tendency of students with higher socio-economic backgrounds to prefer schools with a  higher average socio-economic status, these results may be indicative of social sorting patterns that could be interesting in their own right.

\paragraph{Constructing complete preference lists}
In order to obtain complete preference lists for the entire market, we use the estimated coefficients of the student and school selection equations as represented in Table \ref{tab:prefest} and combine them with one set of draws from the distribution of error terms that respect the upper and lower bounds derived from stability and imposed by submitted preference lists. Thus, the estimated utility for student $i$ attending school $s$ is 
\[
	\hat{U}_t(s) = \mathbf{X}_{ts} \bar{\beta} + \hat{\epsilon}_{ts}
\]
where $\hat{\epsilon}_{ts}$ is one particular realization of the latent error distribution such that $\hat{U}_t(s)$ respects the bounds that are imposed by the identifying assumptions. This estimated latent utility comes straight from the Gibbs sampler. Schools' latent match utilities are constructed analogously. These estimates of the latent valuations can then be used to construct, for each market participant, a complete preference ordering of the other market side. Note, however, that every such set of valuations is only one particular draw from an infinite manifold of possible realizations. We only use a single realization of the valuations, and believe that the large market size validates this approach.

\subsection{Gains from consolidation: using estimated preferences} 
\label{sec:disintegration:stated}

This subsection reports our estimates of the gains from consolidation. We only present the results obtained when using estimated preferences. These are very similar to those obtained with reported preferences, which we postpone to \ref{sec:reportedpref}. In summary, we find that 48\% and 73\% of matched students strictly benefit from district consolidation with reported and estimated preferences, respectively; whereas the corresponding share of students harmed by consolidation is 7\% and 2\%.

We compare the outcome of a consolidated city-wide match to the district-level matching scheme. 
In addition to the rank order gains, we computed the average gains in latent utility. For a student $t$, this is defined as the utility difference between attending the assigned school in the consolidated market, $\mu_{\Omega}(t)$, and the assigned school in the district level market, $\mu_D(t)$:
\[
	\Delta U_t \equiv \hat{U}_t(\mu_{\Omega}(t)) - \hat{U}_t(\mu_D(t))
\]
Utility is a unitless quantity which is hard to interpret per se, but our utility specification allows us to express these gains in terms of travel distances
\[
	\Delta U_t^{km} \approx \frac{\Delta U_t}{ \left|\frac{\partial \hat{U}_t(\mu_{\Omega}(t)) }{\partial d_{t \mu_{\Omega}(t)}}\right| },
\]
with $d_{t \mu_{\Omega}(t)}$ being the travel distance between student $t$'s zip code of residence and her assigned school in the consolidated market.\footnote{Because distance travelled enters the utility specification quadratically, it matters in principle whether the partial derivative is evaluated at the district level matching $\mu_D$, or at the integrated matching $\mu_{\Omega}$. However, the estimated quadratic term is very small (see Table \ref{tab:prefest}), which allows us to use the following approximation:
\[
	\left| \frac{\partial \hat{U}_t(\mu_{\Omega}(t)) }{\partial d_{t \mu_{\Omega}(t)}} \right| = \left| -0.148 + 2 (0.002 d_{t \mu_{\Omega}(t)}) \right| \approx 0.148.
\]
Hence, one utility unit is approximately worth seven kilometres of avoided travel distance.}
Because students dislike travel, we use the absolute value in the denominator, so that $\Delta U_t^{km}>0$ corresponds to a positive welfare gain due to market consolidation. Therefore, $\Delta U_t^{km}$ is a measure of the additional travel time that a student would be willing to incur in order to attend the school in the consolidated market, rather than the assigned school in the district level school market.\footnote{As before, we merge district 23 to its neighbouring district number 20. }

Table \ref{tab:fullROL:matchstats} shows summary statistics of the resulting district-level and consolidated, city-wide matchings. It shows that 894 students remain unmatched in the district-level matching. In the consolidated market, all students are matched because there is enough capacity in the aggregate and preference lists are complete.
\begin{table}[h!]
	\centering
	\caption{Matching statistics (with estimated preferences).}
	\label{tab:fullROL:matchstats}
	\begin{footnotesize}
		\begin{tabular}{lr} 
			\toprule
			\it Panel A. Unconsolidated matching&\\
			~~ matched students & 9,986 \\ 
			~~ share top choice match & 0.83 \\ 
			~~ avg. match distance [km] & 3.55 \\  
			\midrule
			\it Panel B. Consolidated matching&\\
			~~ matched students& 10,880 \\ 
			~~ share top choice match & 0.66 \\ 
			~~ share matched in home district & 0.29 \\ 
			~~ avg. match distance [km] & 7.14 \\  
			\bottomrule
		\end{tabular} 
	\end{footnotesize}
\end{table}

Table \ref{tab:district_integration_est} shows that the vast majority of students (67\% of all students, 73\% of matched students) strictly benefit from participating in a consolidated market. Only 2\% of the students are assigned to a less preferred school under the consolidated assignment. We observe that more students benefit from district consolidation than those that are harmed, and this observation holds for every single school district (we also observe this phenomenon when analysing gains from consolidation with reported preferences). 8\% of the students remain unassigned in the unconsolidated market because there are three school districts with more students than school seats.

The first row in Table \ref{tab:fullROL:gainstats} shows summary statistics of the consolidation gains $\Delta U_t$. Because not all students are matched in the unconsolidated market, those gains cannot be computed for all students. The average gains are positive, but some students also lose due to market consolidation. However, the median is positive so that the majority of all students gain. The second row of that table shows the utility gains, converted to distance units $\Delta U_t^{km}$. It shows that the average student's gains are equivalent to saving more than five kilometres in travel distance, even though students actually incur longer travel distances in the consolidated market, as Table \ref{tab:fullROL:matchstats} shows. Accordingly, the utility gains greatly outweigh the additional travel distances that are incurred in the consolidated market. The third row of the table presents utility gains converted into ranks. The rank measure is obtained by converting estimated latent utilities into ranks and computing rank gains for each student. With a mean of 12 and a median consolidation gain of three ranks, the gains on this measure are also large.

\begin{table}[h!]
	\centering
	\caption{Losers ($-$) and winners ($+$) from consolidation (with estimated preferences).}
	\label{tab:district_integration_est}
	\begin{footnotesize}

	\begin{threeparttable}
	\begin{tabular}{lrrrrrrr} 
		\toprule 
		District & Seats & Students & Excess Seats & $-$ & $0$ & $+$ & Unmatched \\ 
		\midrule
		1  & 338   & 95  & 243   & 3  & 9   & 83  & 0   \\
2  & 1,191 & 634 & 557   & 3  & 216 & 415 & 0   \\
3  & 928   & 743 & 185   & 3  & 253 & 487 & 0   \\
4  & 865   & 746 & 119   & 8  & 273 & 465 & 0   \\
5  & 625   & 217 & 408   & 2  & 43  & 172 & 0   \\
6  & 1,243 & 172 & 1,071 & 8  & 20  & 144 & 0   \\
7  & 1,312 & 212 & 1,100 & 3  & 81  & 128 & 0   \\
8  & 2,524 & 290 & 2,234 & 18 & 64  & 208 & 0   \\
9  & 2,116 & 275 & 1,841 & 24 & 71  & 180 & 0   \\
10 & 2,012 & 591 & 1,421 & 59 & 123 & 409 & 0   \\
11 & 1,025 & 713 & 312   & 8  & 153 & 552 & 0   \\
12 & 956   & 359 & 597   & 8  & 134 & 217 & 0   \\
13 & 3,290 & 449 & 2,841 & 4  & 163 & 282 & 0   \\
14 & 2,893 & 796 & 2,097 & 51 & 196 & 549 & 0   \\
15 & 701   & 454 & 247   & 0  & 92  & 362 & 0   \\
16 & 770   & 659 & 111   & 0  & 24  & 635 & 0   \\
17 & 147   & 628 & -481  & 0  & 6   & 141 & 481 \\
18 & 503   & 873 & -370  & 0  & 31  & 472 & 370 \\
19 & 773   & 444 & 329   & 33 & 70  & 341 & 0   \\
20 & 1,643 & 573 & 1,070 & 6  & 166 & 401 & 0   \\
21 & 2,518 & 641 & 1,877 & 0  & 265 & 376 & 0   \\
22 & 273   & 316 & -43   & 0  & 30  & 243 & 43 \\
\midrule
Total & 28,646 & 10,880 & 17,766 & 241 & 2,483 & 7,262&  894  \\ 
\bottomrule

		\end{tabular} 	
		\begin{tablenotes}
			\tiny \item Seats refers to number of seats after removing those given to students from outside Budapest. Excess seats refers to seats minus students. The symbols $-$, $0$ and $+$ denote the number of losers, indifferences and winners from consolidation, respectively. Data obtained using estimated preferences.
		\end{tablenotes}
	\end{threeparttable}

	\end{footnotesize}
\end{table}

\begin{table}[h!]
	\centering
		\caption{Measures of consolidation gains in latent utility changes.}
	\label{tab:fullROL:gainstats}
	\begin{footnotesize}
\begin{tabular}{lrrrrrr}
	\toprule
	& Mean  & SD    & Min   & Median & Max   & N \\
	\midrule
	\textit{total gains} &       &       &       &       &       &  \\
	~~ in latent utility units & 0.819 & 0.916 & -1.895 & 0.600 & 5.799 & 9,986 \\
	~~ in equivalent kilometres & 5.532 & 6.187 & -12.805 & 4.054 & 39.180 & 9,986 \\
	~~ in equivalent ranks &   12.250    &   24.782    &  -8.000     &   3.000    &   232.000    & 9,986  \\
	\bottomrule
\end{tabular}%

	\end{footnotesize}
\end{table}

Next, we examine the relationship between average consolidation gains, measured in latent utility units per district, and two key district characteristics: size and capacity. Figure \ref{fig:fullROL:Vcgains} illustrates a weak negative correlation between the average utility gains and district size, as well as a negative correlation between average gains and district-level excess capacity.
Corresponding empirical tests, based on regressions of district-level average gains on district characteristics, are presented in Table \ref{tab:fullROL:integration_gains_tests}.\footnote{In our analysis, we treat the dependent variables as observations, although they are estimates based on the first regression in Table \ref{tab:prefest} combined with draws from the distribution of error terms. This may introduce measurement error in the dependent variable, affecting the tests in Tables \ref{tab:fullROL:integration_gains_tests}, \ref{tab:explaining_gains1} and \ref{tab:explaining_gains}. If this error is uncorrelated with regressors, the OLS estimates yield consistent but less precise coefficient estimates \citep[as per][pp.\ 77-78]{wooldridge2010econometric}. The reported standard errors are thus conservative upper bounds.} Columns (1) and (2) reveal that both district size (measured in hundreds of students) and district capacity (measured by relative excess capacity) exert a significant negative partial effect on average latent utility and distance gains, respectively. Column (3) conducts a direct examination of the theoretical framework in Section \ref{sec:model}, assessing welfare gains through rank gains. Regressing the average rank gains on size and capacity offers a direct evaluation of the predictions in Corollaries \ref{thm:corollary2} and \ref{thm:corollary3}. The partial effects remain consistent with those found in Columns (1) and (2) in both direction and significance, providing empirical support for the theoretical predictions. 

\begin{figure}[h!]
	\begin{subfigure}[c]{0.49\textwidth}
		\begin{center}
			\includegraphics[width=\textwidth]{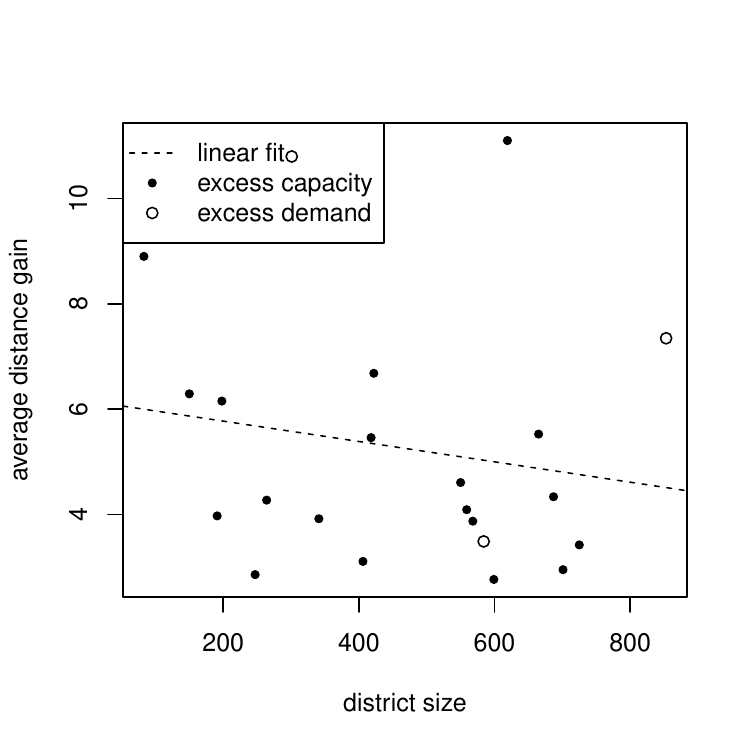}
		\end{center}
		\caption{Average distance gains and district sizes}
		\label{fig:fullROL:Vcgains:size}
	\end{subfigure}
	\begin{subfigure}[c]{0.49\textwidth}
		\begin{center}
			\includegraphics[width=\textwidth]{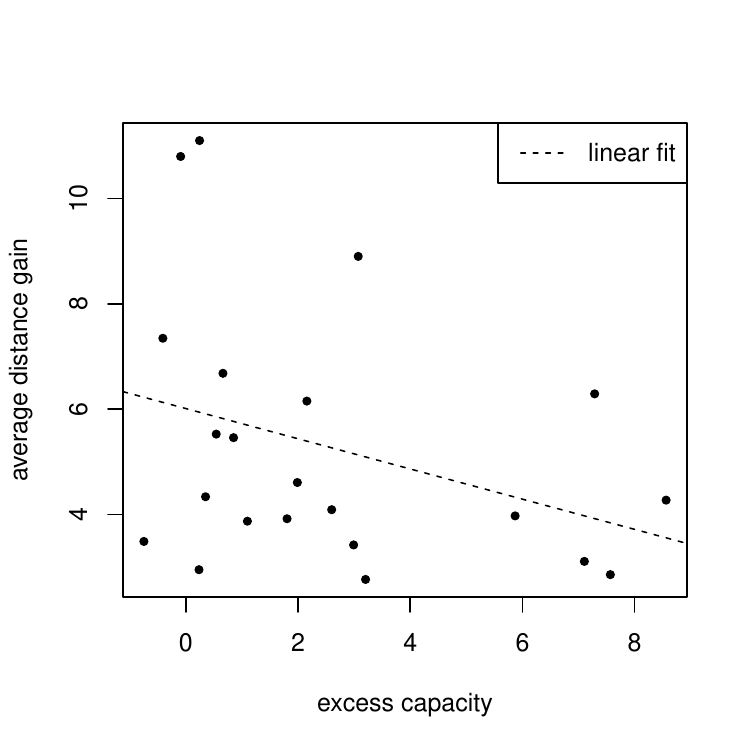}
		\end{center}
		\caption{Average distance gains and excess capacity}
		\label{fig:fullROL:Vcgains:capacity}
	\end{subfigure}
	\caption{Average latent utility gains converted to distance kilometres ($\Delta U_t^{km}$) of an integrated market in Budapest, using inferred complete preference lists. One observation denotes average statistics in one district.}
	\label{fig:fullROL:Vcgains}
\end{figure}

\begin{table}[h!]
	\centering
	\caption{Testing the relationship between consolidation gains and district statistics.}
	\label{tab:fullROL:integration_gains_tests}
	\begin{footnotesize}
		\begin{threeparttable}
			\begin{tabular}{lccc} 
				\toprule
				& \multicolumn{3}{c}{\textit{Dependent variable:}} \\ 
				\cmidrule{2-4} 
				& avg.\ latent utility gain & avg.\ distance gain & avg.\ rank gain \\ 
				\midrule
				district size  & $-$0.0018$^{***}$ & $-$0.01216$^{***}$ & $-$0.0238$^{**}$  \\ 
				& (0.0009)         & (0.0061)         & (0.0111)  \\ 
				& & & \\ 
				excess capacity & $-$0.1454$^{***}$ & $-$0.9824$^{***}$ & $-$2.5403$^{***}$  \\ 
				& (0.0670)         & (0.4527)          & (0.8340)  \\ 
				& & & \\ 
				\midrule
				Observations & 22 & 22 & 22 \\ 
				\bottomrule
			\end{tabular} 
			\begin{tablenotes}
				\tiny \item $^{*}$p$<$0.1; $^{**}$p$<$0.05; $^{***}$p$<$0.01. Standard errors in parentheses, intercept not shown. One observation denotes one district. Statistics obtained using inferred complete preferences.
			\end{tablenotes}
		\end{threeparttable}
	\end{footnotesize}
\end{table}

\subsection{Heterogeneity in utility gains} 
\label{subsec:heterogeneity}

In order to explain the large welfare gains we have documented, we perform two exercises. In a first exercise, we isolate the effects of choice and competition in a decomposition exercise. 
District consolidation has two effects on students' welfare: first, it leads to more choice, which is unambiguously good, and second, it may increase or decrease competition. Increased competition means that it becomes more difficult for a given student to be admitted to his or her favourite schools. Whether competition increases or decreases depends on many factors. If the schools in some sub-market are very attractive, or if this market is not as tight as the aggregate market (from the students' perspectives), then district consolidation will lead to more competition, so that domestic students may be harmed. 
\ref{subsec:compute_decomposition} presents in detail how to construct this decomposition. The results show that the choice effects account for the vast share of total welfare gains, whereas the average competition effects are much smaller in magnitude by a factor ranging between 7 and 22, dependent on the method used.

In a second exercise, we regress the student-level gains on student- and district-level observables. 
Table \ref{tab:explaining_gains1} 
shows the results of this linear regression analysis. The coefficients describe a ``consolidation premium'' that can be ascribed to various observable student characteristics. 
The first column of this table shows that students with a higher socio-economic status (SES) benefit \emph{relatively} more from district consolidation. The italicised adverb is important because students benefit on average, but some students benefit more than others. However, we cannot reject the null hypothesis of there being no effect at all. A similar, but exacerbated pattern can be observed for students with higher academic ability. High-ability students benefit more from district consolidation than average students. This effect is statistically significant.

Students in larger districts or those districts with a lot of excess capacity, benefit significantly less than other students. This is consistent with the predictions of Corollaries \ref{thm:corollary2} and \ref{thm:corollary3} and with the district-level findings reported in Table \ref{tab:fullROL:integration_gains_tests}. Consistent to what we find in the district-level analysis, 
the negative effect of district size on the consolidation gains is also estimated to be significantly different from zero in the student-level analysis.

The results imply that there is a consolidation premium for high-ability students. As Table \ref{tab:students_summary} shows, the explanatory variable SES is standardized and has unit variance, whereas the variance of ``ability'' is about $1.5$. Because the estimated coefficient in Table \ref{tab:explaining_gains1} is also larger for ``ability''  than for SES, it follows that an increase in student ability by one standard deviation increases the consolidation premium by about $1.5 \times 0.0966 \approx 0.15$ kilometres, whereas an increase of the socio-economic status indicator by one standard deviation increases the consolidation premium by only $0.06$ kilometres. So besides being insignificant, the estimated effect of a higher socio-economics status on consolidation gains is also much less relevant.
Thus, it appears that the highly selective consolidated Hungarian school system benefits high-ability students more than those from higher socio-economic background. 

\begin{table}[h!]
	\centering
	\caption[Explaining consolidation gains]{Explaining gains from consolidation with students observables.} 
\label{tab:explaining_gains1}
\begin{footnotesize}
	\begin{threeparttable}
		\begin{tabular}{lc}
			\toprule
			& \multicolumn{1}{c}{\textit{Dependent variable}: distance gains} \\
			\midrule
			\multicolumn{1}{l}{socio-economic } & 0.0574  \\
			status (SES)   & (0.0655)   \\
			&    \\			
			\multicolumn{1}{l}{ability} & 0.0966$^{**}$ \\
			& (0.0473) \\
			&    \\
			\multicolumn{1}{l}{district size } & -0.9514$^{***}$ \\
			   & (0.1162)\\
			&        \\
			\multicolumn{1}{l}{relative excess } & -2.0946$^{***}$  \\
			capacity   & (0.1662)   \\
			&         \\
			\multicolumn{1}{l}{gymnazium} & -0.8703$^{***}$   \\
			& (0.2368) \\
			&         \\
			\multicolumn{1}{l}{secondary} & -0.1946 \\
			& (0.2230) \\
			&        \\
			\multicolumn{1}{l}{constant} & 15.4959$^{***}$  \\
			& (0.5174)  \\
			&      \\
			\midrule
			\multicolumn{1}{l}{district FE} & Yes   \\
			\midrule
			\multicolumn{1}{l}{Observations} & 9,986  \\
			\bottomrule
		\end{tabular} 
		\begin{tablenotes}
			\tiny \item The table shows regression coefficients of students' gains on student observables.	Variables 'district size' and 'relative excess capacity' refer to the students' home districts; the school type refers to the school type of the assigned school in the integrated market. $^{*}$p$<$0.1; $^{**}$p$<$0.05; $^{***}$p$<$0.01. Standard errors in parentheses.
		\end{tablenotes}
	\end{threeparttable}
\end{footnotesize}
\end{table}

However, there are some caveats to the above conclusion. First, the variables measuring SES and student ability are highly correlated ($r=0.47$) and so there will be a large overlap of high-SES and high-ability students among those who benefit a lot from market consolidation. Second, the overall effects are rather small compared to the total variance of the consolidation gains, which is more than five kilometres (see Table \ref{tab:fullROL:gainstats}). On that account, the systematic factors driving the consolidation gains are rather small, and idiosyncratic factors seem to be the most important determinants.\footnote{This finding may in part be due to measurement error in our explanatory variables that is likely to attenuate our parameter estimates towards zero. As was described in Section \ref{sec:data}, we do not exactly observe the students' characteristics which the school can condition their admission choices on. Instead, we must rely on supplementary information from the NABC, and we also make use of imputed data because it is important to have a complete set of students for our empirical approach. Therefore, we may overestimate the contribution of the unobserved idiosyncratic preference and priority shocks to the formation of students' preferences and schools' priorities.}

\subsection{Robustness tests with different excess capacities}
\label{sec:Appendix_D}

The school system in Budapest is characterized by a large overcapacity: in 2015 there were 28,646 school places available for 10,880 students, i.e. 2.63 available seats per student.\footnote{The schools' excess capacity has been confirmed in conversation with officials from the Hungarian ministry of education on several occasions.} A natural question is whether the large gains from consolidation that we have documented crucially depend on this large imbalance between the number of available seats and the number of students. In this subsection, we address this issue by conducting a series of robustness tests in which we reduce the capacity of schools using different approaches that we think are meaningful.

First, we consider the length of the school's submitted ROL as an upper bound of its capacity, i.e. we reduce the capacity of a school to the number of students who were ranked by a school. This bound, that we refer to as ROL-LENGTH, is appropriate because students cannot be assigned to schools who did not rank them. 
A second, tighter upper bound for a school's capacity is the number of accepted students. We refer to this as ACCEPTED. The rationale for this second upper bound is that for an over-demanded school, the accepted students are its de facto capacity because there are students who would prefer to enrol in this school but cannot.  
Both bounds result in more balanced markets: we obtain an excess capacity of 1.75 and 1.21 available seats per student (down from previously 2.63). 
These levels of overcapacity in Budapest are in line with those reported for other countries.\footnote{At the country level, there is evidence that secondary schools in England have had overcapacity of school seats of between 9\% to 20\% of student demand, for all the academic years for which data is available. In 2021-22, there were 464,998 additional school places, i.e.\ 13\% more school seats than students \citep{walker2020goodschools}, latest data from \url{https://explore-education-statistics.service.gov.uk/find-statistics/school-capacity}. 	Overcapacity in secondary schools has also been documented in Northern Ireland 	and Scotland \citep{teelken2005frictions}. In the Netherlands, the city of Amsterdam has 2,434 places for 1,915 students, which corresponds to a 21\% excess capacity \citep[Table 3]{de2023performance}.	Further, at the school level, 24\% of secondary schools in the United States have an overcapacity of at least 25\%, another 36\% report an overcapacity of at least 6\% \citep[page 22, Table 1]{chaney2007public}. Finally, there is also evidence that such underenrollment is desirable from a policy perspective. Chicago Public Schools' (CPS) capacity utilization methodology, for example, states that ``a high school's ideal capacity is 80\% of its maximum capacity''. }

The rationale for reducing capacity parallels the endogenous decision-making of educational administrators. Faced with such choices, administrators may prioritize cutting capacity at schools that are not currently over-demanded. Both capacity bounds preserve the level of competition in the market, because only unassigned seats are removed. We emphasize that the percentage of schools that are over-demanded in our data (72\%) is already in line with the corresponding percentages observed in several US school systems.\footnote{\citet{angrist2020simple} report that 66\% of NYC middle schools, 71\% of NYC high schools, and 82\% of Denver middle schools are over-demanded.} Therefore, to ensure comparability with other cities on this measure, it appears sensible to consider counterfactuals that sustain this level of competition rather than increasing it. With the BALANCED counterfactual, introduced shortly below, we create a scenario where competition intensifies.

Furthermore, we adapt the scenarios ROL-LENGTH and ACCEPTED by guaranteeing that every school district is able to guarantee a local seat for local students.\footnote{To achieve this, for each school, its capacity is multiplied by the district capacity per student in the school's district.} These two cases, which we denote ROL-LENGTH + and ACCEPTED +, are depicted in panels (d) and (e) in Figure \ref{fig:excesscap}. We can see that in our four robustness tests (panels b to e), the school system has a lower excess capacity than the original market, but remaining the same characteristics and capacity distribution as in the original market (panel a). Finally, we perform a final robustness test, which we achieved by scaling the schools' capacities proportionally within each district until the total number of seats equals the total number of students (up to the integer constraint). This is presented in panel (f) in Figure \ref{fig:excesscap}. 
We refer to this specification as BALANCED. Table \ref{tab:district_integration_est_balanced} contains a summary of the excess seats for each setting.

\begin{figure}[h!]
	    \centering 
	\begin{subfigure}{0.25\textwidth}
		\begin{center}
			\includegraphics[width=\textwidth]{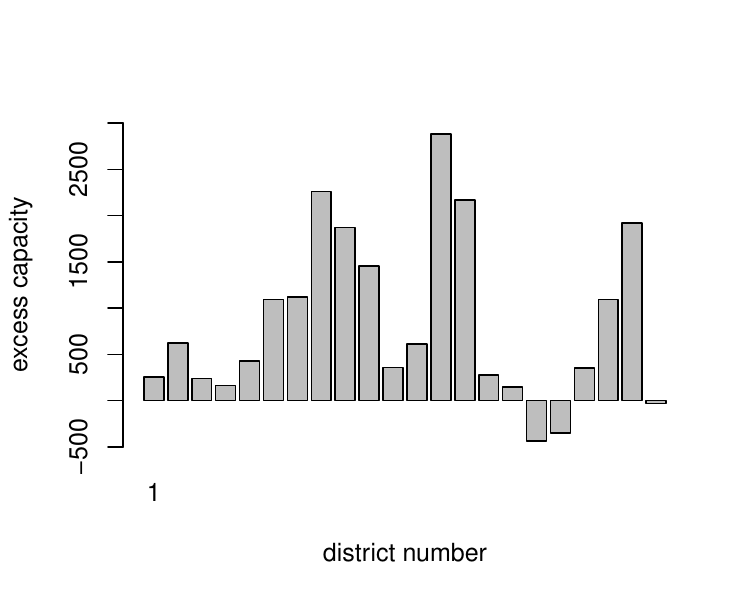}
		\end{center}
		\caption{Reported}
		\label{fig:excesscap:observed}
	\end{subfigure}
	\begin{subfigure}{0.25\textwidth}
		\begin{center}
			\includegraphics[width=\textwidth]{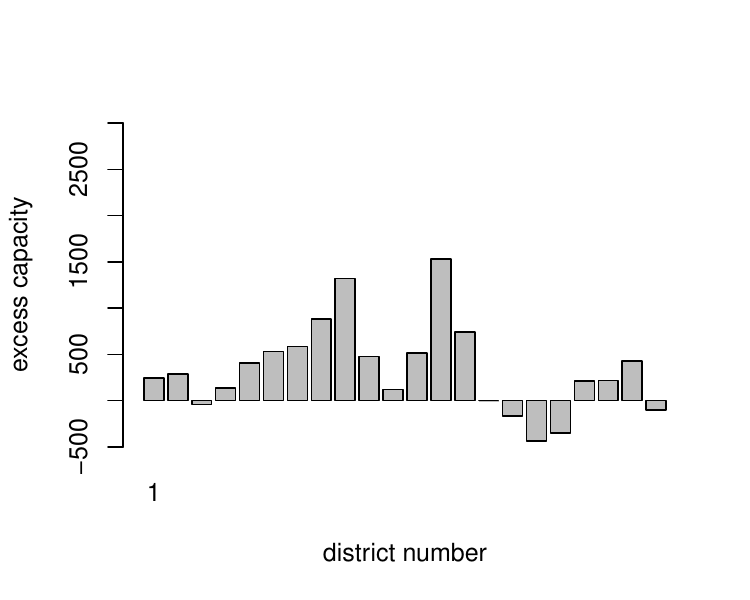}
		\end{center}
		\caption{Using ROL length}
		\label{fig:excesscap:ROL}
	\end{subfigure}
	\begin{subfigure}{0.25\textwidth}
	\begin{center}
		\includegraphics[width=\textwidth]{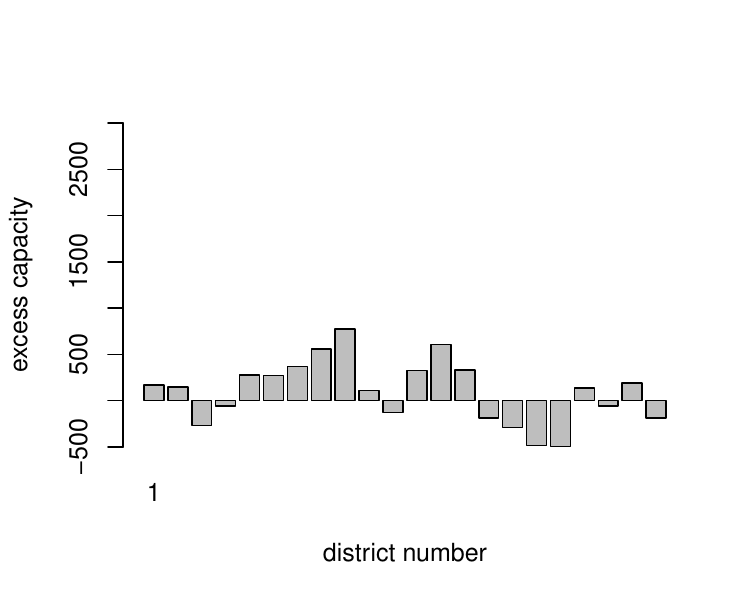}
	\end{center}
	\caption{Using accepted students}
	\label{fig:excesscap:accepted}
\end{subfigure}
\medskip

\centering
	\begin{subfigure}{0.25\textwidth}
		\begin{center}
			\includegraphics[width=\textwidth]{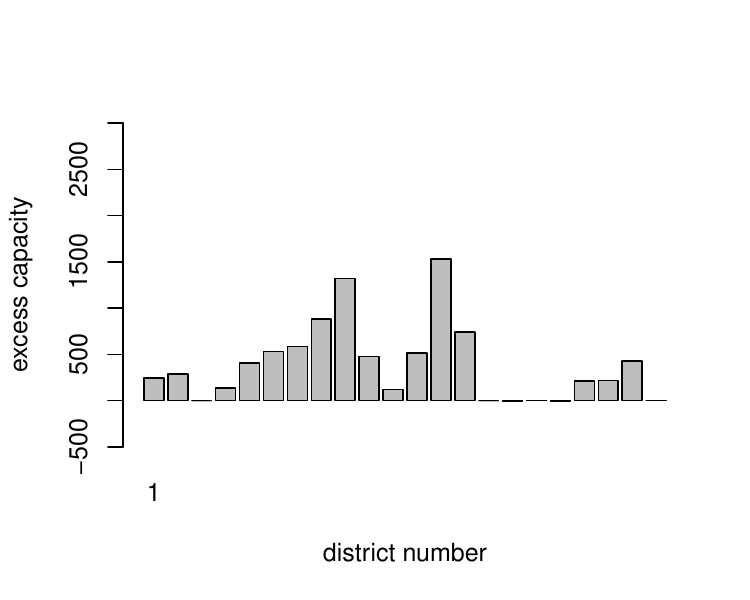}
		\end{center}
		\caption{ROL length +}
		\label{fig:excesscap:ROL-balanced}
	\end{subfigure}
	\begin{subfigure}{0.25\textwidth}
		\begin{center}
			\includegraphics[width=\textwidth]{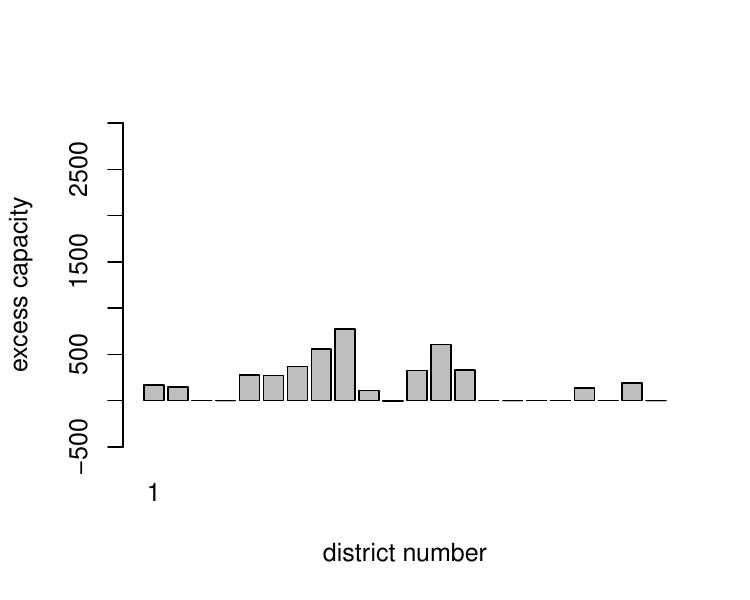}
		\end{center}
		\caption{Accepted students +}
		\label{fig:excesscap:accepted-balanced}
	\end{subfigure}	
	\begin{subfigure}{0.25\textwidth}
	\begin{center}
		\includegraphics[width=\textwidth]{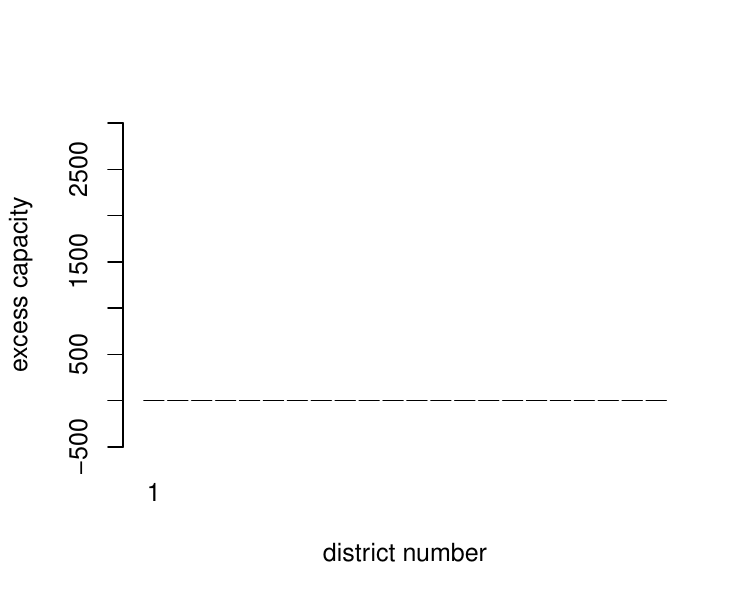}
	\end{center}
	\caption{Fully balanced}
	\label{fig:excesscap:balanced}
\end{subfigure}

	\caption{Excess capacity per district for different specifications of school capacities. + indicates a that excess capacity has been balanced for over-demanded districts, i.e. those with a shortage of places.}
	\label{fig:excesscap}
\end{figure}

We repeat the analysis from the main part of the paper (Table \ref{tab:district_integration_est}) with the different specifications and report results on the share of winners and losers in Table \ref{tab:district_integration_est_balanced}. All figures are reported in percentages to allow for a meaningful comparison between the specifications. The columns with the symbols $-$, $0$ and $+$ denote the share of losers, indifferences and winners from consolidation among the matched students, respectively. With respect to winners and losers, all alternative specifications generate very similar results to those reported in the main part of the paper. The intuition is that in all specifications -- except for the BALANCED one -- the assignment does not change in the consolidated market. But reducing the excess seats reduces welfare in the unconsolidated market. Thus, the reduction of excess seats makes the consolidated market more attractive, producing more winners. This latter effect is relatively small, however, with the share of winners ranging between 73\% and 77\% for all specifications (this share was 73\% in the original specification), except for BALANCED. 

In the consolidated market, all students are assigned to the same school in all specifications because we only drop unused capacities.
In the unconsolidated market, however, we still find that large share of students is assigned to the same school (last column of Table \ref{tab:district_integration_est_balanced}).  We thus expect similar utility gains for all specifications.
This is confirmed in the cardinal utility gains presented in the boxplots in Figure \ref{fig:gains:specifications}. 
The cardinal consolidation gains are the virtually identical in most of our specifications, namely ROL-LENGTH, ACCEPTED, ROL-LENGTH + and ACCEPTED +, both when measured in latent utility units or in equivalent kilometres. 
Only the BALANCED specification, characterised by increased competition for school places, yields gains close to zero. Consequently, welfare gains diminish as a reduction in excess capacity leads to intensified competition within the consolidated market.

\begin{center}
	\begin{footnotesize}
		
	\begin{threeparttable}
		\caption{Share of losers ($-$) and winners ($+$) from consolidation (with estimated preferences),  for different specifications of school capacity.}
		\label{tab:district_integration_est_balanced}
		\begin{tabular}{lr|rrr|rr} 
			\toprule 
			Specification  & Excess seats & $-$ & $0$ & $+$ & Unmatched & Same School \\ 
			\midrule 
			Reported    & 163\%& 2\%  & 25\% & 73\% & 8\%  & 100\% \\
			ROL-LENGTH  & 75\% & 2\%  & 25\% & 73\% & 11\% & 92\% \\
			ROL-LENGTH + & 85\% & 2\%  & 25\% & 73\% & 0\%  & 90\% \\		
			ACCEPTED  & 21\% & 3\%  & 20\% & 77\% & 21\% & 70\% \\
			ACCEPTED + & 42\% & 2\%  & 23\% & 75\% & 0\%  & 74\%  \\
			BALANCED  &  0\% & 41\% & 13\% & 46\% & 0\%  & 40\% \\		
			\bottomrule
		\end{tabular} 	
		\begin{tablenotes}
			\tiny \item Excess seats refers to percentage of excess seats per student. The symbols $-$, $0$ and $+$ denote the share of losers, indifferences and winners from consolidation among the matched students, respectively. The second last column gives the share of unmatched students. The last column gives the share of matched students assigned to the same school as in the original specification in the unconsolidated market. Data obtained using estimated preferences.
		\end{tablenotes}
	\end{threeparttable}

\end{footnotesize}
\end{center}

The robustness tests conducted demonstrate that, as long as competition for school places remains the same, a majority of students will benefit from district consolidation, even when the excess capacity of schools is lower than that indicated by the Hungarian Education Authority. Our findings are expected to generalise to school markets with similar levels of competition to those observed in Budapest. This encompasses markets for middle and high schools in New York City, where the share of oversubscribed schools is between 66\% and 71\% \citep{angrist2020simple} and therefore lower than in Budapest. In such settings, sizeable gains are anticipated, provided there is an overall excess capacity, as indicated in our specifications (a) to (e). Conversely, in markets characterised by a lack of excess capacity, as in our specification (f), gains are likely to approach zero.

\begin{figure}
	\begin{subfigure}[c]{0.49\textwidth}
		\begin{center}
			\includegraphics[width=\textwidth]{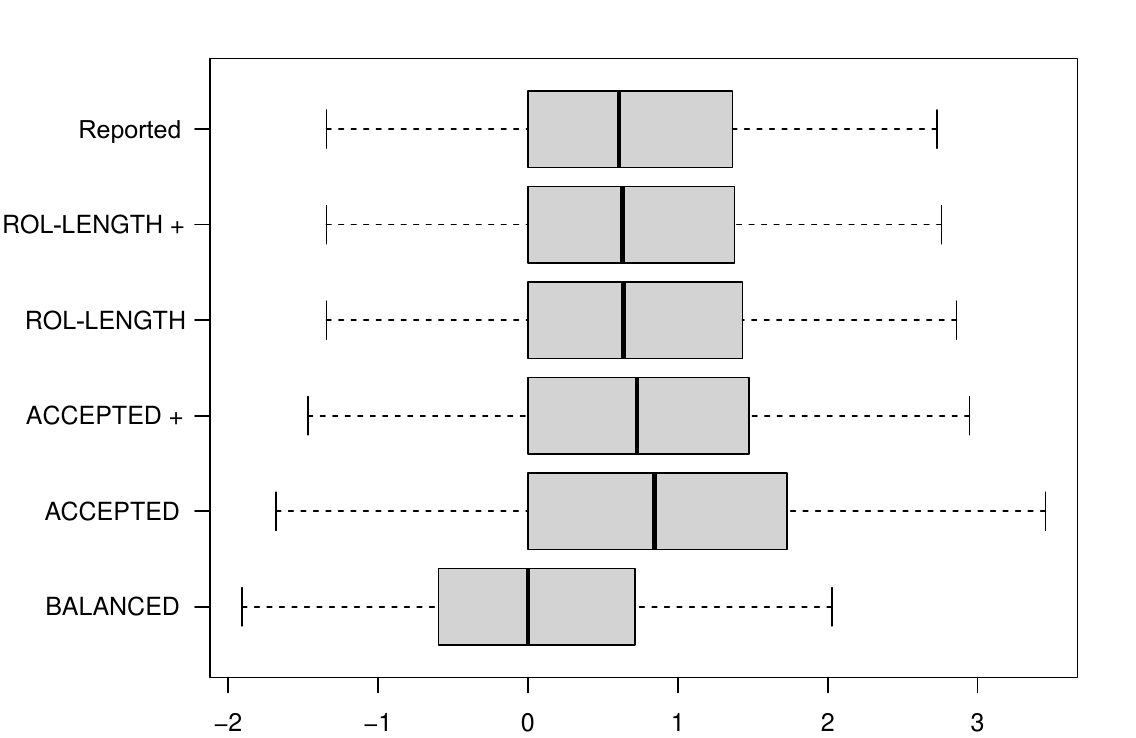}
		\end{center}
		\caption{Total gains in latent utility units}
		\label{fig:gains:utility}
	\end{subfigure}
	\hfill
	\begin{subfigure}[c]{0.49\textwidth}
		\begin{center}
			\includegraphics[width=\textwidth]{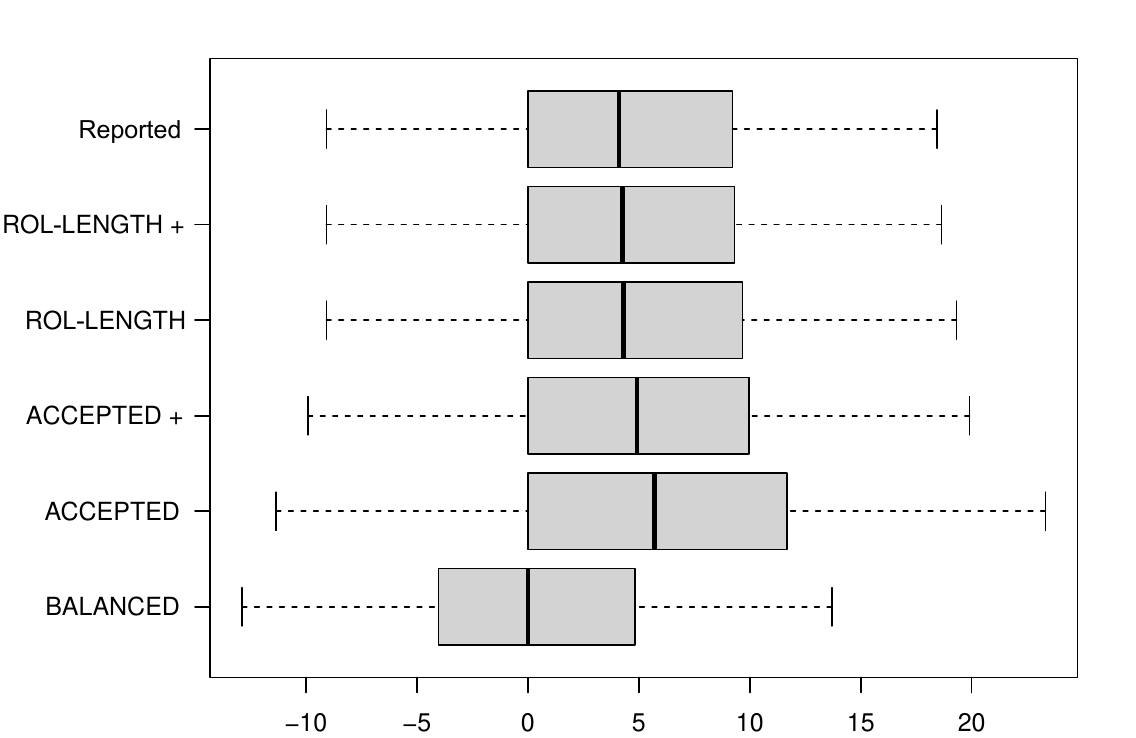}
		\end{center}
		\caption{Total gains in equivalent kilometres}
		\label{fig:gains:km}
	\end{subfigure}
	\caption{Consolidation gains with estimated preferences for different specifications of school capacity.}
	\label{fig:gains:specifications}
\end{figure}

\section{Conclusion}
\label{sec:conclusion}
Our article provides the first study of school district consolidation using both a market design model and empirical evidence from the Hungarian secondary school admission system.  
The theoretical predictions show that market consolidation leads to substantial welfare gains for students, and that students who live in smaller markets, or in markets with fewer available school seats, are expected to have larger welfare gains. Our empirical results confirm that the average student greatly benefits from having a consolidated school market, and that more than half of all students are better off in the consolidated school market.  We find that the gains from consolidation are larger in school districts which have little capacity compared to the number of students, and in smaller districts. By and large, these results are independent of whether we use students' stated preferences or an inferred complete preference ranking. Moreover, our results indicate that high-ability students benefit more from market consolidation than other students. 

As a by-product, we establish a method to consistently estimate students' preferences in school markets with school-specific admission criteria unknown to the researcher. Our estimation approach avoids a bias that is otherwise introduced by students' strategic reporting of their preferences. We show by means of a Monte Carlo study that this method works as intended. We find that students favour nearby schools which have a high academic reputation and peers with a high socio-economic status, but dislike having to sit school-specific entrance exams. We also find that there is evidence for sorting on academic ability, and social status. Schools appear to base their admissions mostly on the students' abilities in Hungarian, with math scores and socio-economic background being less important.

Some interesting questions remain open for future research. Perhaps the main one is that we computed consolidation gains under the assumption that the students' and schools' characteristics remain fixed throughout, whereas only the admission system is changed. But, of course, there are other effects that could be taken into account in future work, such as residential mobility or changes in schools' quality in response to district consolidation. Incorporating these effects would be a thorny issue even if we had data before and after district consolidation has taken place, because of likely anticipatory effects.\footnote{It could appear to the reader that one could estimate the second order consequences of district consolidation by means of an iterative procedure whereby the schools' average academic qualities, and students' preferences, are updated in turns until a ``steady state'' is reached. But in our opinion, such a mechanistic steady state analysis is unlikely to mirror the multitude of individual and institutional responses and would thus be highly speculative. We refrained from this approach, focusing on what we can measure, and not on what we cannot measure.} Our article is transparent about these limitations and instead takes a well-defined partial equilibrium approach to school district (dis-)integration. Thus, our results should be interpreted as measuring the isolated, or partial effect of the admission system on students' welfare. We think that we can accurately describe and measure this partial effect, and that it is a valuable statistic in itself to inform the debate on the merits of centralized assignment mechanisms.

\singlespacing

{\bf Acknowledgements.}
We are particularly grateful to two anonymous reviewers and the associate editor for detailed advice that improved the contents and exposition of this paper. We also acknowledge helpful comments from Inácio Bó, Li Chen, Gabrielle Fack, Ellen Greaves, Karol Mazur, Juan Sebastián Pereyra, Olmo Silva, Bertan Turhan, M. Bumin Yenmez and audiences at the Royal Economic Society conference, the  World congress of the Game Theory Society, the  Matching in Practice workshop, the WZB Berlin workshop on designing and evaluating matching markets, the Public Economic Theory conference and the Irish Economic Association conference. We are thankful to the Hungarian Educational Authority in Budapest for running our code on their data on-site and sharing the aggregate results presented in the paper. We acknowledge funding from the Leibniz Association as part of project K125/2018: Improving school admissions for diversity and better learning outcomes. Sarah Fox and Ashleigh Neill proofread the paper. This  manuscript was previously circulated under the name ``What happens when separate and unequal school districts merge?". Any errors are our own.

\setlength{\bibsep}{0cm}

\bibliographystyle{ecta}

\def\appendixname{Appendix}

\clearpage

\appendix

\renewcommand{\theequation}{A\arabic{equation}} 
\setcounter{equation}{0}
\renewcommand{\thefigure}{A\arabic{figure}} 
\setcounter{figure}{0}
\renewcommand{\thetable}{A\arabic{table}} 
\setcounter{table}{0}

\section{Proofs}
\label{sec:Appendix_A}
\begin{proof}[Proof of Proposition \ref{thm:prop3}.] 
	
	A well-known result in matching markets is that, in a one-to-one Gale-Shapley matching market with $N$ agents on each side, there are approximately $N \log (N)$ proposals made by students in the student-proposing execution of the algorithm. This result is obtained using the analogy of how long will it take to collect $N$ different coupons when a random coupon is collected each day (a problem known as {\it the coupon collector problem}). Because students apply to schools based on their preferences, starting from their most preferred one, this result implies that the average student is matched to a school with expected rank equal to $\log (N)$ \citep{wilson1972,knuth1976,pittel1989}. 
	
	The coupon collector problem has been generalized to study how long will it take to collect $q$ times each of the $N$ different coupons. \cite{newman1960double} show that it takes $N[\log(N)+(q-1)\log\log(N)+o(1)]$ days. This result implies that, in a many-to-one matching market with $N$ schools, each with $q$ available seats, and $qN$ students in total, the student-proposing deferred acceptance algorithm assigns the average student to a school approximately ranked $\frac{1}{q}[\log(N)+(q-1)\log\log(N)]$ when $N$ is large and $q$ is constant. One interpretation for this result is that it takes $\log (N)$ applications for a school to fill its first seat, but only takes $\log \log (N)$ extra applications to fill any other seat besides the first. The theoretical prediction by Newman and Shepp is very close to the average student rank observed in simulations, as depicted in Table \ref{tab:manytoone}.
	
	\begin{table}[h!]
		\centering
		\caption{Average Student Rank (average over 1,000 simulations).}
		\label{tab:manytoone}
		
		\begin{tabular}{llllllll}
			\toprule
			$n \times q$ & Observed & Approximation  & $n \times q$ & Observed & Approximation \\
			\midrule
			200 (100 $\times$ 2) & 3.53 & 3.07  & 200 (40 $\times$ 5)  & 2.15 & 1.78 \\
			400 (200 $\times$ 2) & 4.01 & 3.48  & 400 (80 $\times$ 5)  & 2.37 & 2.06 \\
			1,000 (500 $\times$ 2)  & 4.54 & 4.02  & 1,000 (200 $\times$ 5) & 2.66 & 2.39\\
			\bottomrule
		\end{tabular}
	\end{table}
	
	Going back to the original one-to-one setting with $q=1$, consider that the market now has $N$ students and $N+K$ schools, with $k$ being a positive constant. \cite{ashlagi2017} show that, with high probability, the expected rank for students equals $\frac{N+K}{N}\log(\frac{N+K}{K})$.
	
	Combining both expressions, we can approximate the expected rank of students in our many-to-one setup with market imbalances, as follows:
	$$\rk_T ( \sigma_\sosm(\cdot,\Omega) ) \approx \frac{N+K}{q N} \left[ \log\left(\frac{N+K}{K}\right) + (q-1) \log \log \left(\frac{N+K}{K}\right)\right] $$
	
	The approximation above matches the expected rank observed in simulations remarkably well, as shown in Table \ref{tab:mainapprox}.
	\begin{table}[h!]
		\centering
		\caption{Average Student Rank (average over 1,000 simulations).}
		\label{tab:mainapprox}
		\begin{tabular}{lllllll}
			\toprule
			$N \times q$ & \multicolumn{2}{c}{$K=2$} & \multicolumn{2}{c}{$K=5$} &\multicolumn{2}{c}{$K=10$}\\
			&		Obs & Approx  &  Obs & Approx & Obs & Approx   \\
			\midrule
			200 (100 $\times$ 2) &	2.9 	& 3.1	 	&	2.5 	&	 2.7	&	2.1 	& 2.2\\
			200 (40 $\times$ 5) &	1.7 	& 1.8	 	&	1.5 	&	 1.6	&	1.3 	& 1.2\\
			400 (200 $\times$ 2) &	 3.2	& 	3.5 	&	 2.9	&	3.1 	&	2.5 	& 2.6\\
			400 (80 $\times$ 5) &	1.9 	& 2.1	 	&	1.7 	&	1.8 	&	1.5 	& 1.5\\
			1,000 (500 $\times$ 2)  &	3.8 	& 4.0	 	&3.4	 	&	3.6 	&	3.0 	& 3.1\\
			1,000 (200 $\times$ 5)  &	2.1 	& 2.4	 	&	2.0 	&	2.2 	&	1.7 	& 1.8\\
			\bottomrule
		\end{tabular}
	\end{table}

	To compare the gains from consolidation, we only need to approximate $\rk_T ( \sigma_\sosm(\cdot,D) )$. To do this, we define the {\it relative rank} of a school $s$ in the preference order of a student $t \in T^{D_i}$ (over potential schools in within his own district) as $\hat \rk_t(s) \coloneqq \left\vert \{ s' \in S^{D_i}: s' \succcurlyeq_t s \} \right\vert$. Given a matching $\mu$, the {\it students' relative average rank of schools} is defined by
	\begin{eqnarray*}
		\hat \rk_T (\mu) &\coloneqq& \frac{1}{\left\vert \overline{T} \right\vert} \sum_{t \in \overline{T}} \hat \rk_t(\mu(t))
	\end{eqnarray*}
	where $\overline{T}$ is the set of students assigned to a school under matching $\mu$.
	
	In a district with $qn_i$ students, $q(k_i+n_i)$ school seats and with $k_i>0$, our previous approximation tells us that the relative rank equal:
	\begin{equation}
		\label{eq:ap1}
		\hat \rk_T ( \sigma_\sosm(\cdot,D) ) \approx \frac{n_i+k_i}{q n_i} \left[\log \left(\frac{n_i+k_i}{k_i}\right) + (q-1) \log \log \left( \frac{n_i+k_i}{k_i} \right)  \right]
	\end{equation}

	The final step in the proof follows the proof of Proposition 3 in \cite{ortega2018}. To relate the students' relative average rank of schools before consolidation to the absolute ranking, suppose that a school is ranked $h$ among all $n_i+k_i$ schools in its district. A random school from another district could be better ranked than school 1, between schools 1 and 2, ..., between schools $h-1$ and $h$, ..., between schools $n_i+k_i-1$ and $n_i+k_i$, or after school $n_i+k_i$. Therefore, a random school from another district is in any of those gaps with probability $1/(n_i+k_i+1)$ and thus has $h/(n_i+k_i+1)$ chances of being more highly ranked than our original school with the relative rank $h$. There are $N+K-n_i-k_i$ schools from other districts. On average, $\frac{h(N+K-n_i-k_i)}{n_i+k_i+1}$ schools will be ranked better than it. Furthermore, there were already $h$ schools in its own district better ranked than it. This implies that his expected ranking is $h+\frac{h(N+K-n_i-k_i)}{n_i+k_i+1} =\frac{h(N+K+1)}{n_i+k_i+1}  \approx \frac{h(N+K)}{n_i+k_i}$. Substituting $h$ for \eqref{eq:ap1}, we obtain students' relative average rank of schools before consolidation, which equals
	$$		\rk_T ( \sigma_\sosm(\cdot,D) ) \approx \frac{N+K}{q n_i} \left[\log \left(\frac{n_i+k_i}{k_i}\right) + (q-1) \log  \log \left( \frac{n_i+k_i}{k_i}  \right) \right]$$
	
	After some algebra, it follows that 
	\begin{equation}
		\label{eq:main}
		\gamma_T (\sigma_\sosm) \approx \frac{N+K}{q} \left[\frac{\log \left(\frac{n_i+k_i}{k_i}\right) + (q-1) \log \log \left( \frac{n_i+k_i}{k_i} \right)}{n_i} - \frac{ \log\left(\frac{N+K}{K}\right) + (q-1) \log  \log \left(\frac{N+K}{K} \right)}{N} \right]
	\end{equation}
\end{proof}

Note that expression \ref{eq:main} is positive if:
$$ \left[ \frac{\log \left(\frac{n_i+k_i}{k_i}\right)}{n_i} -  \frac{\log\left(\frac{N+K}{K}\right)}{N} \right] + (q-1)\left[ \frac{\log \log \left( \frac{n_i+k_i}{k_i} \right)}{n_i} -  \frac{\log  \log \left(\frac{N+K}{K} \right)}{N} \right] >0$$

In particular, expression \ref{eq:main} is positive when both expressions \refeq{eq:part1} and \refeq{eq:part2} below hold
\begin{equation}
\label{eq:part1}
 \frac{\log \left(\frac{n_i+k_i}{k_i}\right)}{n_i} >  \frac{\log\left(\frac{N+K}{K}\right)}{N} 
\end{equation} 

and 

\begin{equation}
\label{eq:part2}
 \frac{\log \log \left( \frac{n_i+k_i}{k_i} \right)}{n_i} >  \frac{\log  \log \left(\frac{N+K}{K} \right)}{N} 
\end{equation}

Fixing $\alpha_i=N /  n_i$, $\beta_i=K/k_i$, $x=\frac{N+K}{K}$ and $y_i=\frac{n_i+k_i}{k_i}$, we can rewrite expressions \refeq{eq:part1} and \refeq{eq:part2} as
  \begin{equation}
	\alpha_i>  \frac{\log(x)}{\log (y_i)} 	
	\text{ and }
	\alpha_i>  \frac{\log\log(x)}{\log \log(y_i)} 	
\end{equation}

Assuming $\log\log(y_i)>1$, if the first expression holds (i.e. $\alpha_i>  \frac{\log(x)}{\log (y_i)}$), the second one must hold as well (i.e. $	\alpha_i>  \frac{\log\log(x)}{\log \log(y_i)}$), because
\begin{equation}
	\alpha_i>\log(\alpha_i)+1>\frac{\log(\alpha_i)+\log\log(y_i)}{\log\log(y_i)}=\frac{\log[\alpha_i \cdot \log(y_i)]}{\log\log(y_i)}>\frac{\log\log(x)}{\log\log(y_i)}
\end{equation}

Where the first inequality holds because $\alpha_i>1$ and the last inequality uses the assumption of $\alpha_i>  \frac{\log(x)}{\log (y_i)}$, which implies  $\log[\alpha_i \log(y_i)]>\log\log(x)$ because the $\log$ function is strictly increasing. 

Therefore, to check that expression \ref{eq:main} is positive, we can simply show that $\alpha_i>\log(x)/\log(y_i)$. 
Provided that $\alpha_i>\beta_i$, this condition holds because
\begin{eqnarray}
	\label{eq:expl}
	\alpha_i&>&\log(\alpha_i)+1\nonumber \\
	&>&\log\left(\frac{\alpha_i}{\beta_i}\right)+1=\log\left(\frac{\alpha_i}{\beta_i}\right)+\frac{\log \left(\frac{n_i+k_i}{k_i}\right) }{\log \left(\frac{n_i+k_i}{k_i}\right)}\nonumber \\
	&>&\frac{\log\left(\frac{\alpha_i}{\beta_i}\right)+\log \left(\frac{n_i+k_i}{k_i}\right) }{\log \left(\frac{n_i+k_i}{k_i}\right)}=\frac{\log \left(\frac{\alpha_i n_i\, +\, \alpha_i k_i}{\beta_i k_i}\right) }{\log \left(\frac{n_i+k_i}{k_i}\right)}\nonumber \\
	&>&\frac{\log \left(\frac{\alpha_i n_i\, +\, \beta_i k_i}{\beta_i k_i}\right) }{\log \left(\frac{n_i+k_i}{k_i}\right)}
\end{eqnarray}
 where the first inequality holds because $\alpha_i>1$ by definition, the second one holds because $\beta_i>1$ by definition, the third inequality assumes that $\log(y_i)>1$ (recall that previously we further assumed that $\log\log(y_i)>1$), and the last inequality uses our assumption that $\alpha_i>\beta_i$.
 
 We note that the conditions imposed are sufficient to guarantee that the gains from consolidations are positive, but not necessary. In particular, we do not need to assume that $\alpha_i>\beta_i$. The same conclusion holds can be obtained as long as $\alpha_i>\log(\beta_i)+1$, following almost the same steps as in equation \ref{eq:expl}.%

\newpage
\section{Correlated preferences and priorities}
\label{sec:simcorrelation}

The simulations are conducted for two markets, each with $q=2$ seats per school. 
In the market with two districts, district size is held constant with $n_1=n_2=100$ students per district but overcapacity varies with $q\cdot k_1=2\cdot 5 = 10$ seats in district one and balancedness ($q\cdot k_2 = 2\cdot 0 = 0$) in district two. 
In the market with three districts, overcapacity is kept fixed with $k_1=k_2=k_3=1$ but district sizes vary with $n_1=40$, $n_2=60$ and $n_3=100$ students.

For each of these two markets, the gains from consolidation are obtained for 1,000 simulated draws with different correlation parameters. We start with a randomly drawn master list of identical school preferences and school priorities. The parameter $\rho_t$ then determines the share of schools that are kept fixed in the students' rank order lists. The remaining share of ($1 -\rho_t$) is randomly perturbed. $\rho_t=0$ is equivalent to a random market. $\rho_t=1$ corresponds to all students submitting the same rank order list over schools. Correspondingly, the parameter $\rho_s$ determines the share of students kept fixed in schools' rank order lists.

Table \ref{tab:correlated_markets_disaggregate} reports the average of the rank gains for all students in the two district market and the three district market, respectively. Panel A gives the aggregated results over all districts. In the original iid random market, $\rho_t=\rho_s=0$ and the average rank gain is 2.19 in the two districts market, and 4.03 in the three districts market. The results indicate that the gains from become even larger when students' preferences become more correlated (except for the corner case $\rho_t=1$, where student preferences are perfectly correlated) and remain largely unchanged for all parameters on priorities' correlation. Importantly, the overall rank gains remain positive for any parameter constellation. 

Panel B presents the rank gains disaggregated by district. The results show that our comparative statics hold: the gains from integration are still larger for smaller and more over demanded districts. 
To see the size effect, consider the three-district case, with district sizes $n_1=20$, $n_2=30$ and $n_3=50$ and $q=2$ (i.e. 40, 60, and 100 students). In the random market, the average rank gains are largest in the smallest district (6.88), followed by the second largest district (4.65). The same ordering holds for any other combination of correlation parameters (except for the corner case  $\rho_t=1$).

To see the imbalance effect, consider the random market in the two-district case with an overcapacity of 10 students for district 1 ($k_1=5,q=2$) and zero for district 2 ($k_2=0$). Gains are larger for the balanced district 2 (average gains of 2.97) than for the under-demanded district 1 (gains of 1.41). For any other parameter constellation, it holds that the more under-demanded district 1 exerts smaller gains from integration.

\begin{table}[!h]
	\begin{center}
		\centering
			\caption{Rank gains from consolidation in markets with correlated preferences. Panel A presents overall rank gains. Panel B displays gains disaggregated by districts.}
		\label{tab:correlated_markets_disaggregate}
		\begin{small}
			\begin{tabular}{rrrrrrrrrrrrrr}
				\toprule
				\multicolumn{14}{l}{\textit{Panel A: Overall rank gains}}\\
				\multicolumn{14}{l}{ }\\
				&&& \multicolumn{5}{c}{Two district market} &       & \multicolumn{5}{c}{Three district market} \\
				\cmidrule{4-8}\cmidrule{10-14} \multicolumn{3}{r}{$\rho_t$} & 0 & 0.25 & 0.5 & 0.75 & 1 &  & 0 & 0.25 & 0.5 & 0.75 & 1 \\
				\midrule
				$\rho_s$ & districts & & \multicolumn{11}{c}{ } \\
				\cmidrule{1-2}
				0 & all && 2.19 & 2.70 & 3.87 & 7.13 & 0.1 & & 4.03 & 4.88 & 6.93 & 11.69 & 0.04 \\
				0.25 & all && 2.18 & 2.73 & 3.79 & 7.01 & 0.1 & & 4.00 & 4.83 & 6.77 & 11.50 & 0.04 \\
				0.5 & all && 2.15 & 2.70 & 3.78 & 6.80 & 0.1 & & 4.00 & 4.76 & 6.67 & 11.26 & 0.04 \\
				0.75 & all && 2.15 & 2.62 & 3.68 & 6.55 & 0.1 & & 3.93 & 4.63 & 6.48 & 10.58 & 0.04 \\
				1 & all & & 2.07 & 2.47 & 3.34 & 5.56 & 0.1 &  & 3.77 & 4.43 & 5.95 & 9.33 & 0.04 \\
				\multicolumn{14}{l}{ }\\
				\midrule
				\multicolumn{14}{l}{\textit{Panel B: Rank gains by district}}\\
				\multicolumn{14}{l}{ }\\
				&&& \multicolumn{5}{c}{Two district market} &       & \multicolumn{5}{c}{Three district market} \\
				\cmidrule{4-8}\cmidrule{10-14} \multicolumn{3}{r}{$\rho_t$} & 0 & 0.25 & 0.5 & 0.75 & 1 &  & 0 & 0.25 & 0.5 & 0.75 & 1 \\
				\midrule
				$\rho_s$ & district & & \multicolumn{11}{c}{ } \\
				\cmidrule{1-2}			
				0 & 1 && 1.41 & 1.68 & 2.41 & 4.43 & -1.67 & & 6.88 & 7.97 & 11.16 & 17.43 & -0.88 \\
				0 & 2 && 2.97 & 3.72 & 5.33 & 9.84 & 1.87 & & 4.65 & 5.73 & 7.96 & 13.43 & 0.19 \\
				0 & 3 && -- & -- & -- & -- & -- & & 2.52 & 3.13 & 4.61 & 8.36 & 0.32 \\
				\midrule
				0.25 & 1 && 1.38 & 1.70 & 2.38 & 4.38 & -1.71 & & 6.86 & 7.95 & 10.90 & 17.18 & -1.24 \\
				0.25 & 2 && 2.97 & 3.75 & 5.21 & 9.64 & 1.91 & & 4.61 & 5.56 & 7.78 & 13.30 & -0.14 \\
				0.25 & 3 && -- & -- & -- & -- & -- & & 2.49 & 3.14 & 4.51 & 8.15 & 0.66 \\			
				\midrule
				0.5 & 1 && 1.39 & 1.71 & 2.40 & 4.20 & -1.77 & & 6.80 & 7.85 & 10.62 & 16.75 & -1.07 \\
				0.5 & 2 && 2.92 & 3.69 & 5.16 & 9.39 & 1.98 & & 4.60 & 5.50 & 7.87 & 12.92 & -0.02 \\
				0.5 & 3 && -- & -- & -- & -- & -- & & 2.52 & 3.08 & 4.36 & 8.06 & 0.52 \\
				\midrule
				0.75 & 1 && 1.39 & 1.67 & 2.37 & 4.20 & -1.82 & & 6.70 & 7.76 & 10.50 & 15.71 & -0.92 \\
				0.75 & 2 && 2.92 & 3.57 & 5.00 & 8.89 & 2.02 & & 4.51 & 5.33 & 7.53 & 12.28 & -0.16 \\
				0.75 & 3 && -- & -- & -- & -- & -- & & 2.46 & 2.96 & 4.25 & 7.52 & 0.54 \\
				\midrule
				1 & 1 && 1.40 & 1.68 & 2.34 & 4.01 & -1.60 &  & 6.51 & 7.60 & 10.00 & 15.12 & -0.79 \\
				1 & 2 && 2.74 & 3.25 & 4.34 & 7.10 & 1.80 &  & 4.39 & 5.10 & 6.88 & 10.87 & -0.31 \\
				1 & 3 &&-- & -- & -- & -- & -- &  & 2.31 & 2.76 & 3.77 & 6.10 & 0.58 \\
				\bottomrule
			\end{tabular}%
		\end{small}
	\end{center}

\end{table}

\clearpage

\section[Computation of the valuation bounds]{Explicit computation of the bounds on latent valuations}
\label{sec:Appendix_B}
The estimation procedure relies on imposing upper and lower bounds on the latent valuations. This section describes explicitly how these bounds can be computed at every step of the estimation procedure, under various identifying restrictions. For convenience, we repeat the notation that is used to describe students' and schools' ordinal preferences and priorities here. 

We denote the observed rank order list of student $i$ of length $L_t$ as $\mathcal{L}_t = (s_t^1, s_t^2, \ldots, s_t^{L_i})$, where $s_t^k\in S$. Denote the rank that student $t$ assigns to school $s$ as $rk_t(s)$, with $1 \leq rk_t(s) \leq L_t$ if $s \in \mathcal{L}_t$ and $rk_t(s) = \emptyset$ else. Collect all observed ranks into $\mathbf{rk} = \{rk_t(s)\}_{t \in T, s \in S}$. The preference orderings that is induced by these observed ranks are a subset of students' unobserved strict preference ordering $\mathbf{\succ} = \{\succ_t\}_{t \in T}$, i.e. $ rk_t(s) < rk_t(s') \Rightarrow s \succ_t s'$ but not vice versa, because students may find it optimal to not rank all schools if the application procedure is costly. This is the ``skipping at the top'' and ``truncation at the bottom'' problem that was discussed in the main text and that precludes the application of standard revealed preference arguments to estimate a reduced-form model of students' preferences.

Similarly, denote the set of students who apply to school $s$ as $\mathcal{L}_s$, and let the the priority number that school $s$ assigns to student $t$ be $pr_s(t)$. Priority numbers are like ranks, in that they take discrete values and a lower priority number means higher priority. Schools are required to prioritize all students who apply to them, but they may rank some students as ``unacceptable''. We say that $pr_s(t)=+\infty$ if student $t$ is unacceptable to school $s$, and $pr_s(t)=\emptyset$ if student $t$ did not apply at school $s$. Furthermore, denote the set of ranked students that are acceptable to school $s$ as $\ell_s = \{t \in \mathcal{L}_s: pr_s(t)<\infty\}$ and define the largest priority number of any school $s$ as $\overline{pr}_s = \max\{pr_s(t): t \in \mathcal{L}_s\} \in \{|\ell_s|, \infty\}$. Thus, $pr_s(t) \in \{1,2,\ldots,|\ell_s|,\infty,\emptyset\}$. The set of all observed priority rankings is given by $\mathbf{pr} = \{pr_s(t)\}_{t \in T, s \in S}$. Again, the priority structure induced by $pr_s$ is a subset of the unobserved true priority ordering $\mathbf{\rhd} = \{\rhd_s\}_{s \in S}$.

Because the bounds depend on the observed ranks and priorities, but also on the latent valuations of students and schools, they must be computed anew in every iteration of the Gibbs sampler. More concretely, the vector of latent utilities at the current iteration step $k$ is constructed as
\begin{equation*}
	\mathbf{U}^{(k)}_{ij} = 
	\begin{cases}
		\mathbf{U}^{(k)}_{ij} & \text{if the pair } ij \text{ has been visited in iteration } k \\
		\mathbf{U}^{(k-1)}_{ij} & \text{else.}
	\end{cases}
\end{equation*}

An analogue updating scheme is used to construct the vector of latent valuations $\mathbf{V}$. This Gauss-Seidel style updating scheme ensures that, at any point in the iteration scheme, the upper and lower bounds are satisfied for the entire vector of latent utilities and valuations, but it comes at a higher computational burden. The alternative would be to compute upper and lower bounds once in every iteration $k$, using only the last estimates of the latent utilities $\mathbf{U}^{(k-1)}_{ij}$. In what follows, we will omit the index of the current iteration round $k$, and assume that any reference to $\mathbf{U}_{is} = U_i(s)$ is made with respect to the most recent available estimate of $U_i(s)$, either from iteration $k$ or from iteration $k-1$.

Lastly, we will in the following exposition use the order $>$ on the set of ranks, or priorities. As either a rank $rk_t(s)$ or a priority $pr_s(t)$ can take the value $\emptyset$, it is necessary to define the behaviour of this operator with respect to $\emptyset$: we will assume that the statement $a>\emptyset$ is false for all values of $a$, whereas $a\geq\emptyset$ is true if, and only if, $a=\emptyset$. Also, as a convention, the minimum of an empty set returns $\infty$ and the maximum of an empty set returns $-\infty$.

\subsubsection*{Weak truth-telling (WTT)}
Having clarified the notation, we now turn to describe how upper and lower bounds implied by the weak truth-telling assumption (WTT) are constructed. WTT posits that, on the side of the students, any unranked alternative school $s: rk_t(s)=\emptyset$ is worse than any ranked alternative $s'$ with $rk_t(s')\neq \emptyset$. Given latent valuations $\mathbf{U}_{-it}$, and observed ranks $\mathbf{rk}$, the upper bounds for utility $U_t(s)$ can be expressed as follows
\begin{equation*}
	\overline{U}_t(s) = \begin{cases}
		+\infty &  rk_t(s) = 1 \\
		\min_{s' \in \mathcal{L}_t}\{U_t(s'): rk_t(s') < rk_t(s) \} & rk_i(s)>1 \\
		\min_{s' \in \mathcal{L}_t}\{U_t(s')\} & rk_i(s) = \emptyset
	\end{cases} 
\end{equation*}
and the lower bounds are given as
\begin{equation*}
	\underline{U}_t(s) = \begin{cases}
		\max_{s' \in \mathcal{L}_t}\{U_t(s'): rk_t(s') > rk_t(s) \} & rk_t(s)<L_t \\
		\max_{s' \notin \mathcal{L}_t}\{U_t(s')\} &  rk_t(s) = L_t < |S| \\
		-\infty &  rk_t(s) = \emptyset \wedge rk_t(s) = |S| \\
	\end{cases} 
\end{equation*}

In our setting, schools only get to see those students who apply to them and hence, $pr_s(t)=\emptyset$ does not imply that the school $s$ considers student $t$ worse than any or all of their ranked students $t'\in \mathcal{L}_s$ that showed up their application list. Therefore, WTT does not allow us to infer anything about the upper and lower valuation bounds for those students that did not apply at school $s$. Schools are required to prioritize all students that apply to them, but if school $s$ deems student $t \in \mathcal{L}_s$ unacceptable, it assigns $pr_s(t)=\infty$ to that student, which implies that this student $t$ is less preferred than any other ranked student $t' \in\mathcal{L}_t: pr_s(t')<\infty$. This, however, does not allow us to infer anything about how school $s$ priorities student $t$ relative to other students that are equally unacceptable. Hence, the upper bounds for school $s$'s valuation of student $t$, $V_s(t)$, conditional on $\mathbf{V}_{-st}$ and observed priorities $\mathbf{pr}$ are given by
\begin{equation*}
	\overline{V}_s(t) = 
	\begin{cases}
		+\infty 	& pr_s(t) \in \{1, \emptyset\} \\
		\min_{t' \in \mathcal{L}_s}\{V_s(t'): pr_s(t') < pr_s(t) \} & 1<pr_s(t) \leq \overline{pr}_s \\
	\end{cases}
\end{equation*}
and the lower bounds by
\begin{equation*}
	\underline{V}_s(t) = 
	\begin{cases}
		-\infty &  pr_t(s) \in \{ \overline{pr}_s, \emptyset\} \\
		\max_{t' \in \mathcal{L}_s}\{V_s(t'): pr_s(t') > pr_s(t) \} & 1 \leq pr_s(t)< \overline{pr}_s \\
	\end{cases}
\end{equation*}

\subsection*{Undominated Strategies (UNDOM)}
Under undominated strategies (UNDOM), unranked alternatives are not assumed to be worse, from the students' perspective. Therefore, UNDOM imposes fewer restrictions than WTT. Given latent valuations $\mathbf{U}_{-it}$, and observed ranks $\mathbf{rk}$, the upper bounds for utility $U_t(s)$ can be expressed as follows
\begin{equation*}
	\overline{U}_t(s) = \begin{cases}
		+\infty &  rk_t(s) \in \{1, \emptyset \}\\
		\min_{s' \in \mathcal{L}_t}\{U_t(s'): rk_t(s') < rk_t(s) \} & rk_i(s)>1
	\end{cases}
\end{equation*}
and the lower bounds are given as
\begin{equation*}	
	\underline{U}_t(s) = \begin{cases}
		-\infty &  rk_t(s) \in \{L_t, \emptyset \}\\
		\max_{s' \in \mathcal{L}_t}\{U_t(s'): rk_t(s') > rk_t(s) \} & rk_t(s)<L_t
	\end{cases} 
\end{equation*}
Because schools are cannot choose to intentionally not rank a student who applies there, the upper and lower bounds under UNDOM are exactly the same that were derived under WTT.

\subsection*{Stability}
Finally, consider an observed matching $\mu$ where $\mu(s)$ denotes the set of all students that are assigned to school $s$, and $\mu(t)$ denotes the assigned school of student $t$ (a student can only be assigned to one school at once). If student $t$ is unassigned, $\mu(t)=t$. Every school can accommodate at most $q_s$ students, so we define the convenience function 
\[
	\chi(s) = \mathbf{1}\left( |\mu(s)| = q_s \right)
\]
that indicates whether a school is at full capacity or not. Further, define the \emph{feasible set} of student $t$ as the set of schools that do not classify student $t$ as unacceptable or have not ranked student $t$, and that either have some vacant seats, or would favour student $t$ over one of their currently admitted students:
\[
	\mathcal{F}_t = \left\{s \in S: \left( pr_s(t)<\infty \vee pr_s(t) = \emptyset \right) \wedge \left(\neg\chi(s) \vee V_s(t)>\min_{t'\in\mu(s)} V_s(t') \right)\right\}.
\]
This feasible set of student $t$ is unobserved (it is a latent set) because it depends on the latent valuations $\mathbf{V}$. 

We now outline conditions on the valuations and utilities that, if satisfied, guarantee that the observed matching $\mu$ is stable. \cite{Logan2008} have used similar conditions to estimate the parameters of a one-to-one marriage market model, and we adapt their setting to a many-to-one matching market. Before we proceed, we introduce the following assumption:
\begin{assumption}[Non-wastefulness]
	The matching $\mu$ is non-wasteful: all schools operate at full capacity ($|\mu(s)|=q_s$) or no student is unmatched ($\mu(t)\neq t$).
\end{assumption}

This assumption is convenient in order to ensure that one can always find utilities and valuations that are consistent with a stable matching and it is also the approach that was taken by \citet[p.2732]{Sorensen2007}. Without this assumption, it would be necessary to specify outside options for agents, which would complicate the analysis, but pose no substantial challenges to it. Conditional on the latent set $\mathcal{F}_t$, stability requires that student $t$'s utility for any school in this latent set be less than that of her currently assigned school. Therefore, the upper bound for a student $t$'s valuation of school $s$ is given by
\begin{equation*}
	\overline{U}_t(s) = 
	\begin{cases}
		U_t(\mu(t)) & \mu(t) \notin \{ s, t\} \wedge s \in \mathcal{F}_t \\
		+ \infty  & \text{else}
	\end{cases}
\end{equation*}
Similarly, the lower bounds are given by
\begin{equation*}
	\underline{U}_t(s) = 
	\begin{cases}
		\max_{s'\in\mathcal{F}_t\setminus \{s\}}\{ U_t(s')\} & \mu(t) = s \\
		- \infty & \text{else}
	\end{cases}
\end{equation*}
Note that we assume that all schools are acceptable to the student. This implies that if student $t$ is unmatched ($\mu(t)=t$), then we cannot bound her utility for any school, be it in her feasible set or not. Instead, stability requires that her feasible set be empty. This, places bounds on the schools' valuations for student $t$ which will be described shortly.

We define school $s$'s feasible set as the set of students who are acceptable to school $s$, and who would prefer going to school $s$ than to their current school, or are unassigned under the matching $\mu$. We chose to include only students that are acceptable to school $s$ in this set because it simplifies the notation below. Thus, the feasible set is given by
\[
	\mathcal{F}_s = \left\{t \in T: pr_s(t)<\infty \wedge \large( U_t(s)>U_t(\mu(s)) \vee \mu(t)=t \large) \right\}.
\]
Again, this is a latent set that depends on the latent student utilities $\mathbf{U}$.
Then, upper and lower bounds of school $s$'s valuation of student $t$ can  be constructed if school $s$ is at full capacity, i.e. if $\chi(s)$ is true:
\begin{equation*}
	\overline{V}_s(t) = 
	\begin{cases}
		\min_{t'\in\mu(s)}\{V_s(t')\} & \chi(s) \wedge t\notin\mu(s) \wedge t\in \mathcal{F}_s \\
		+\infty & \text{else.} 
	\end{cases}
\end{equation*}
Similarly, the lower bounds are given by
\begin{equation*}
	\underline{V}_s(t) = 
	\begin{cases}
		\max_{t' \in \mathcal{F}_s \setminus \mu(s)}\{ V_s(t')\} & \chi(s) \wedge t \in \mu(s) \\
		- \infty & \text{else.}
	\end{cases}
\end{equation*}
In general, the upper and lower bounds on utilities and valuations are interdependent, and are not unique. 

\subsection*{Combination of UNDOM and Stability}
The combination of the two assumptions that students and schools play undominated strategies, and that the assignment is stable, allows us to tighten the bounds. For instance, let $[ \underline{U}^{rk}_t(s), \overline{U}^{rk}_t(s) ]$ be the bound that is imposed by the assumption of undominated strategies on the valuation $U_t(s)$, and let $[ \underline{U}^{\mu}_t(s), \overline{U}^{\mu}_t(s) ]$ be the bounds that follow from the requirement that the observed matching $\mu$ be stable. An obvious way to combine these two bounds is to simply set
\begin{align*}
	\underline{U}_t(s) &= \max\left\{ \underline{U}^{rk}_t(s),\underline{U}^{\mu}_t(s) \right\} \\
	\overline{U}_t(s)  &= \min\left\{ \overline{U}^{rk}_t(s), \overline{U}^{\mu}_t(s) \right\}
\end{align*}
and for $V_t(s)$ in an analogous manner. Now, the question is whether so truncation intervals that are constructed in this way are non-empty, \emph{i.e.} whether $\underline{U}_t(s) \leq \overline{U}_t(s)$. We will show that, for any given stable matching $\mu$, observed priories $\mathbf{pr}$ and preference ranks $\mathbf{rk}$, there is at least one set of preferences $\mathbf{U}$ and valuations $\mathbf{V}$ such that the assumptions UNDOM and stability are satisfied:

\begin{lemma} \label{thm:lemma1}
	Consider any given non-wasteful stable matching $\mu$ that is derived from the observed partial rankings $\mathbf{rk}$ and priority structures $\mathbf{pr}$. Then, there exists a complete preference structure $\succ$ and priority ordering $\rhd$ such that
	\begin{enumerate}
		\item $\succ$ and $\rhd$ are consistent with $\mathbf{rk}$ and $\mathbf{pr}$, respectively and
		\item $\mu$ is stable under $\succ$ and $\rhd$.
	\end{enumerate}
	Thus, the set of utilities $\mathbf{U}$ and valuations $\mathbf{V}$ that satisfies the bounds imposed by \emph{UNDOM} and \emph{stability} is non-empty for any observed matching $\mu$.
\end{lemma}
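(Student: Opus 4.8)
The plan is to make the completions fully explicit and then check the two requirements by hand. For every student $t$, define $\succ_t$ by listing the schools of $\mathcal{L}_t$ in the order induced by $rk_t$ and appending the remaining schools $S\setminus\mathcal{L}_t$ below them in an arbitrary strict order; symmetrically, for every school $s$, define $\rhd_s$ by listing the acceptable applicants $\ell_s$ in the order induced by $pr_s$ and appending the remaining students $T\setminus\ell_s$ (unacceptable applicants and non-applicants alike) below them. Consistency with $\mathbf{rk}$ and $\mathbf{pr}$ (part 1 of the Lemma) is then immediate, since $rk_t(s)<rk_t(s')$ forces $s,s'\in\mathcal{L}_t$ with $s$ above $s'$, and $pr_s(t)<pr_s(t')$ or $pr_s(t)<\infty=pr_s(t')$ forces $t$ above $t'$ in $\rhd_s$. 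Any utility profile $\mathbf{U}$ that is strictly decreasing along each $\succ_t$ and any valuation profile $\mathbf{V}$ strictly decreasing along each $\rhd_s$ then lie inside the truncation intervals cut out by \emph{UNDOM}, and — by the very way the stability bounds in \ref{sec:Appendix_B} are defined — they lie inside the stability intervals exactly when $\mu$ is stable with respect to $\succ$ and $\rhd$. So the whole argument reduces to proving that $\mu$ is stable under $(\succ,\rhd)$.

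For that, suppose toward a contradiction that $(t,s)$ blocks $\mu$ under $(\succ,\rhd)$. A type-(i) block ($\mu(t)=t$ and $|\mu(s)|<q_s$) is impossible, because Assumption~1 (non-wastefulness) forces either every school to be full or every student to be matched. Hence the block is of type (ii): $s\succ_t\mu(t)$ and $t\rhd_s t'$ for some $t'\in\mu(s)$. Since $\mu$ is derived from the observed lists it is individually rational for them, so $\mu(t)\in\mathcal{L}_t\cup\{t\}$ and $\mu(s)\subseteq\ell_s$. The relation $t\rhd_s t'$ with $t'\in\mu(s)\subseteq\ell_s$ then forces, by our construction, $t\in\ell_s$ and $pr_s(t)>pr_s(t')$ (everyone outside $\ell_s$ sits strictly below all of $\ell_s$). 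On the other side, $s\succ_t\mu(t)$ forces $s\in\mathcal{L}_t$ with either $\mu(t)=t$ or $rk_t(s)<rk_t(\mu(t))$ (schools outside $\mathcal{L}_t$ sit below $\mu(t)$). Thus $(t,s)$ is a pair in which $t$ applied to $s$, is acceptable to $s$, and weakly prefers $s$ to her $\mu$-assignment \emph{within the observed lists}. But stability of $\mu$ with respect to $\mathbf{rk},\mathbf{pr}$ already excludes this: it requires $|\mu(s)|=q_s$ and $pr_s(t'')<pr_s(t)$ for every $t''\in\mu(s)$, contradicting $pr_s(t)>pr_s(t')$. Hence no blocking pair exists, $\mu$ is stable under $(\succ,\rhd)$, and taking any representation yields $\mathbf{U},\mathbf{V}$ satisfying the combined bounds, which are therefore non-empty.

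The step I expect to require the most care is the one hidden in the sentence ``the bounds are satisfied exactly when $\mu$ is stable'': on the student side a school $s$ that $t$ ranked \emph{above} $\mu(t)$ gets the \emph{UNDOM} constraint $U_t(s)>U_t(\mu(t))$ while the stability bound would impose $\overline U_t(s)=U_t(\mu(t))$ whenever $s\in\mathcal{F}_t$, so one must verify that no such school ever lies in $\mathcal{F}_t$. This is precisely the computation above read through the feasible-set definition: if $t\in\ell_s$ then stability of $\mu$ w.r.t.\ the observed lists gives $\chi(s)$ true and $V_s(t'')>V_s(t)$ for all $t''\in\mu(s)$, killing the second conjunct of $\mathcal{F}_t$; and if $t$ applied to $s$ but is unacceptable, $pr_s(t)=\infty$ kills the first conjunct. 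I would present this identification once, in whichever of the two languages (blocking pairs vs.\ truncation intervals) reads more cleanly, rather than duplicating it; the analogous point on the school side is automatic since $\mu(s)\subseteq\ell_s$ occupies the top of $\rhd_s$ by construction.
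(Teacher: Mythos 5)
Your proof is correct and takes essentially the same route as the paper's: complete each ordering by appending unranked schools and non-listed/unacceptable students at the bottom, invoke non-wastefulness to kill the unmatched-student/vacant-seat case, and reduce any remaining blocking pair to one that would already block $\mu$ under the observed lists (your closing paragraph on why no $s$ ranked above $\mu(t)$ can lie in $\mathcal{F}_t$ is in fact more careful than the paper's treatment of that point). One sign slip to fix: $t \rhd_s t'$ with $t' \in \ell_s$ forces $pr_s(t) < pr_s(t')$, not $pr_s(t) > pr_s(t')$, and it is that inequality which contradicts the stability requirement $pr_s(t'') < pr_s(t)$ for all $t'' \in \mu(s)$ --- as literally written, your final step asserts that a statement contradicts itself.
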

\begin{proof}
The first point is obvious: fix an arbitrary set of utility numbers $\{U_t(s): s \in \mathcal{L}_t \}$ and valuation numbers $\{V_s(t): t \in \mathcal{L}_s\}$ that respect the ordering implied by the observed ranks $\mathbf{rk}$ and priorities $\mathbf{pr}$; there will always be such numbers. For the second point, note that we can equivalently express students' preferences and schools' priorities in terms of their partial rank and priority order lists, or in terms of their utilities and valuations. As the observed matching $\mu$ is stable under the former, it must also be stable under the latter representation and so, any set of utility and priority numbers that respects the bounds imposed by \emph{UNDOM} also satisfies the bounds that are imposed by \emph{stability}. Next, we need to show that there are always utility and valuation numbers for the remaining non-ranked pairs such that there are no \emph{blocking pairs}. Consider any such pair $t,s$ such that $s\notin \mathcal{L}_t$. Under the student-proposing deferred acceptance mechanism, no student can be assigned to a school that she did not include in her stated rank order list $\mathbf{rk}_t$, and hence $s \neq \mu(t)$. Then there are four remaining cases to consider:
\begin{description}
	\item[Case 1] Student $t$ is not unmatched, and school $s$ is at full capacity, i.e. $\mu(t) \neq t$ and $|\mu(s)|=q_s$.
	Stability is satisfied if $U_t(s)<U_t(\mu(t))$ or $V_s(t) < \min_{t' \in \mu(s)} V_s(t')$, or both.
	\item[Case 2] Student $t$ is not unmatched, and school $s$ has spare capacity.
	Stability is satisfied for all $U_t(s)<U_t(\mu(t))$ and $V_s(t) \in \mathbb{R}$.
	\item[Case 3] Student $t$ is unmatched, and school $s$ is at full capacity.
	Stability is satisfied for all $V_s(t) < \min_{t' \in \mu(s)} V_s(t')$ and $U_t(s) \in \mathbb{R}$.
	\item[Case 4] Student $t$ is unmatched, and school $s$ has spare capacity. This case is ruled out under the assumption that $\mu$ is non-wasteful.
\end{description}
Hence, if the matching $\mu$ is non-wasteful, it will always be possible to find utilities and valuations that respect both the partially observed rank and priority structures, and stability properties.
\end{proof}

However, we observe in our dataset that roughly ten percent of all students are not assigned to a school in the first matching round (\emph{c.f.} Table \ref{tab:students_summary}) so that the allocation is not non-wasteful in the sense outlined above, and the last case of the proof does not go through.\footnote{In the Hungarian school choice system, the main matching round is followed by a subsequent round in which any unmatched students are assigned to the closest feasible school.} This could appear to be a problem for our estimation approach, because the existence of an unmatched student $t$ and a school that has spare capacity $s$ necessarily leads to instability in our estimation approach. The solution would be to endogenously determine ``latent'' unacceptable students, to exclude such students from the sample, or to artificially label them as being ``unacceptable'', neither approach of which is very attractive. Instead, we note that if there exists a student $t$ who is unmatched, and a school $s$ with spare capacity, it must either be that $t$ did not apply to $s$, in which case the bounds on the latent utility and on the latent valuation are $\pm \infty$, or that student $t$ did rank school $s$, but school $s$ ranked student $t$ as unacceptable, in which case the valuation and utility bounds are well defined. Only the former case represents a case of true instability, whereas the latter case is well covered by our estimation approach. Most importantly, if such a case of true instability should occur, it will not affect the parameter estimates in either direction, because the utilities and valuations are not restricted and simply add some white noise to the parameter updates.

\clearpage
\section{Posterior distributions}
\label{sec:posterior}

\noindent The Bayesian estimator uses the data augmentation approach \citep[proposed by][]{albert1993bayesian} that treats the latent valuation variables as nuisance parameters. This section describes the components conditionals of the Gibbs sampler that is used to sample from the posterior distribution of the parameters of interest $\beta$ and $\gamma$, $p(\beta,\gamma|data)$ where the $data$ are observed co-variates, and possibly rank and priority structures or matching information.

\subsubsection*{Conditional distribution of utilities and valuations}
Recall that it is assumed that $\epsilon_{ts},\eta_st \sim N(0,1)$, as is customary and necessary in the discrete choice literature. Then, the component conditionals for the unobserved latent utilities and valuations are given by
\begin{footnotesize}
\begin{align*}
	p(U_t(s)|\beta,\gamma, \mathbf{U}_{-ts},\mathbf{V}, data) &\propto \exp\left\{ \frac{-(U_t(s) - \mathbf{X}_{ts}\beta)^2}{2} \right\} \mathbf{1}(U_t(s) \in [ \underline{U}_t(s) \overline{U}_t(s) ] ) \\
	p(V_s(t)|\beta,\gamma, \mathbf{V}_{-st},\mathbf{U}, data) &\propto  \exp\left\{ \frac{- (V_s(t) - \mathbf{W}_{st}\gamma)^2}{2} \right\} \mathbf{1}(V_s(t) \in [\underline{V}_s(t), \overline{V}_s(t)])
\end{align*}
\end{footnotesize}
Note that, although the error terms are uncorrelated and independent across alternatives, the utilities are not because their truncation intervals are endogenously determined. For example, if we observe a student's ranking across three different schools $A$, $B$, and $C$ such that $rk_t(A)<rk_t(B)<rk_t(C)$, this implies that $U_t(A)>U_t(B)>U_t(C)$. Therefore, the distribution of utilities across schools is not iid normal, but rather a multivariate normal distribution subject to a system of linear inequality constraints. Commonly known techniques for sampling from these distributions with linear constraints are rather slow when the number of alternatives is very large, as is the case in our setting with thousands of students, and hundreds of schools.\footnote{The function \texttt{rtmvnorm2} in the R package tmvtnorm (\url{https://cran.r-project.org/package=tmvtnorm}, version 1.4-10) does provide such a a method} Instead, we embed the sampling from this intractable distribution into our Gibbs sampler. However, we found that this procedure is rather slow to converge, and also exhibits very strong serial correlation so that a sufficiently large number of Gibbs samples must be drawn.

\subsubsection*{Conditional distribution of utility and valuation parameters}
We assume a flat prior for the structural parameters $\beta$ and $\gamma$ which, together with the assumption that the error terms have unit variance, implies that the posteriors of $\beta$ and $\gamma$ follow a normal distribution.\footnote{A flat, uninformative, or vague prior assumes that the density is constant on the entire parameter space. As the parameter space is unbounded, the prior is ``improper'' because it does not integrate to any positive constant. Nonetheless, the posterior density of a linear model is well defined even though the prior is improper \citep[p.120]{Lancaster2004}.}. Also, we note that the scale and the location of the utilities and valuations are not identified, as in any discrete choice model. Our assumption that the idiosyncratic errors have unit variance pins down the scale of utility, and the assumption that these errors are zero in expectation pins down the location of utilities. 
Hence the component conditional distribution of the utility parameter is given by
\[
	p(\beta|\gamma,\mathbf{U},\mathbf{V},data) = p(\beta|\mathbf{U},data) = N\left(b, (\mathbf{X}'\mathbf{X})^{-1}\right)
\]
for $b=(\mathbf{X}'\mathbf{X})^{-1} \mathbf{X}' \mathbf{U}$, and similarly, the conditional component for the priority parameter $\gamma$ reads
\[
	p(\gamma|\beta,\mathbf{U},\mathbf{V},data) = p(\gamma|\mathbf{V},data) = N\left(g, (\mathbf{W}'\mathbf{W})^{-1}\right)
\]
for $g=(\mathbf{W}'\mathbf{W})^{-1} \mathbf{W}' \mathbf{V}$.

\clearpage
\section{Monte Carlo results}
\label{sec:appendix:MCsim}

Monte Carlo simulations provide evidence that our estimation methodology can recover the true underlying preference parameters, even if students behave strategically and submit incomplete ROLs. Our data generating process encompasses students who submit strategic reports so that the observed rank order lists are incomplete. In particular, many reports suffer from ``skipping at the top'' and ``truncation at the bottom'' problems already mentioned. Still, our estimation procedure ``\emph{stability} + \emph{UNDOM}'' recovers the true and unbiased preference parameters. Moreover, the estimation error is comparable to the benchmark scenario with completely observed ROLs. 

We further demonstrate that, in the estimation, the share of students affected by a multiplicity of stable matchings converges to zero as the number of students per school grows large. We also provide evidence that our estimation methodology is robust to the strategic submission of ROLs. Our methodology yields unbiased estimates no matter how many submitted ROLs violate the \emph{WTT} assumption, while alternative estimation procedures suffer from a bias that is increasing in the share of ROLs which violate the \emph{WTT} assumption. These results are presented towards the end of this subsection.

The data generating process of the Monte Carlo study is borrowed from \cite{fack2019beyond}.\footnote{Their data generating process is described, and the code is made available, in their online appendix. We  use normally distributed errors on both sides of the market instead of type-I extreme value distributed errors.} However, we depart from their assumption that a student's priority at each school is known to the econometrician. Instead, we only use the ranking of students who actually applied for a particular school. We consider markets with 
200 students, six schools and capacities of $q \in \{0.95, 1.20\}$ with total capacities of $q \cdot 200$ seats. So we have one scenario with slight excess demand of 5\%, as in the original analysis in \citet{fack2019beyond} and one scenario with an under-demand of 20\%, resembling the Budapest context.\footnote{The ACCEPTED specification in \ref{sec:Appendix_D} results in an under-demand of 21\%.} Students' utility over schools is given by
\[
U_{t}(s) = \delta_s - d_{ts} + 3 \cdot (a_t \cdot \bar{a}_s) + \epsilon_{ts}
\]
where $\delta_s$ is a school fixed effect, $d_{ts}$ is the distance from student $t$ to school $s$, $a_t$ is the students' grade and $\bar{a}_s$ is the average grade of all students at school $s$ (or put differently, the schools' academic quality). Hence, the true preference parameter in the data generating process is a vector $\beta_0 = (1,-1,3)'$. 
$\epsilon_{ts}$ follows a standard normal distribution. For the exposition, we assume that $\delta_s$ is known to the econometrician and therefore enters the estimation as an additional co-variate. The schools' valuation over students (which translates into the students' priorities) is given by
\[
V_s(t) = a_t + \eta_{st}
\]
where $\eta_{is}$ is also standard normally distributed. Here, the true priority parameter $\gamma_0$ is a scalar equal to one. We subsume all preference and priority parameters as $\theta_0 = (\beta_0',\gamma_0)'$.

In the market, students choose their optimal application portfolio, given their equilibrium beliefs about admission probabilities, and a small application cost. This leads some students to skip seemingly unattainable top choices, or to truncate their ROL at the bottom. As a result, the submitted ROLs are likely to violate the assumption of WTT. Based on the simulated submitted ROLs, students and school seats are matched according to the SOSM. We refer the reader to the online appendix of Fack et al. for further details. 

For our Monte Carlo study, we simulated one hundred independent realisations of these markets. In the simulated markets,
a share of 0.69 of the submitted rank order lists satisfied WTT across all simulations.
For every sample $k$, we estimated students' preferences over schools ($\hat{\beta}_k$), and schools' priorities over students ($\hat{\gamma}_k$) using the data augmentation approach described above. We used the following assumptions to compute the truncation intervals based on the submitted ROLs: i) WTT, ii) STAB, iii)	UNDOM, and iv) STAB + UNDOM. As a benchmark, we estimated the model under the assumption of undominated strategies based on true and complete ROLs. We let the Gibbs sampler run for 20,000 iterations, with a burn-in period of 10,000 iterations. To reduce the parameter estimates' serial correlation, we used only every fifth sample, and discarded the rest.

Figures \ref{fig:MCsim} and \ref{fig:MCsim_underdemanded} show box plots of the estimation errors ($\hat{\theta}_k - \theta_0$) across the one hundred realised data sets, for different estimation approaches. The analysis in Figure \ref{fig:MCsim} is based on markets with 190 seats and 200 students, which corresponds to an over-demand of 5\%, as in the original analysis in \citet{fack2019beyond}. In Figure \ref{fig:MCsim_underdemanded}, we use an under-demand of 20\%, which is in line with the observed excess capacity in the Budapest data. The results in both experiments are very similar and we therefore focus on the interpretation of the over-demand case in the following. Table \ref{tab:MCsim200} shows the corresponding mean squared error and bias statistics.

The first three panels of Figure \ref{fig:MCsim} depict the distribution of the estimation errors of students' preference parameters ($\hat{\beta}_k-\beta_0$). As expected, the benchmark case where the complete ROLs are known on both sides allows us to identify the parameters very precisely. Furthermore, the estimates for student preferences that are derived under the assumption of weak truth-telling are biased. This too is to be expected because the assumption of weak truth-telling does not hold in the data generating process.

\begin{figure}[h!]
	\centering
	\includegraphics[width=\textwidth]{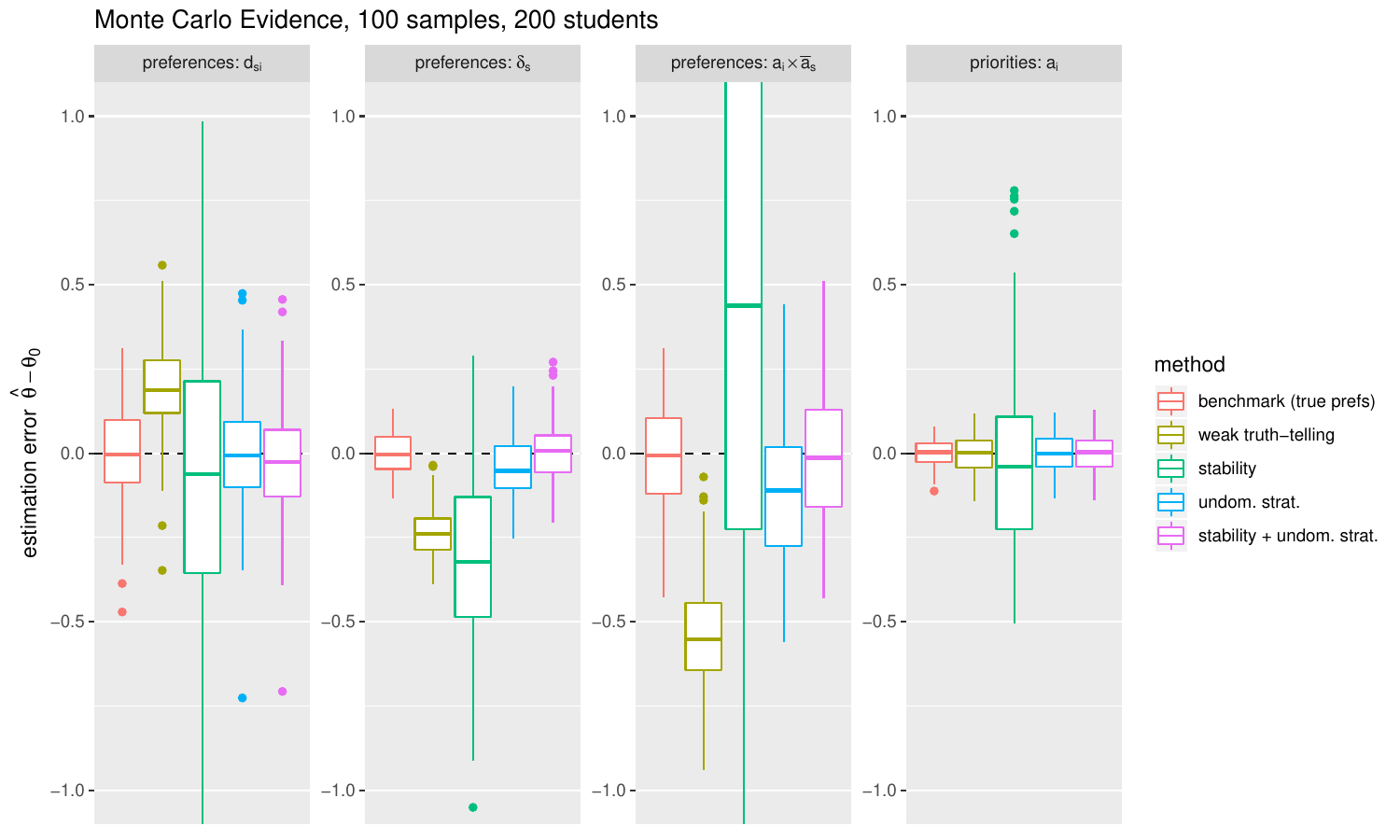}
	\caption[Monte Carlo results]{Box plots of the distributions of estimation errors across one hundred simulated markets (six schools with 190 seats and 200 students).}
	\label{fig:MCsim}
\end{figure}

\begin{figure}[h!]
	\centering
	\includegraphics[width=\textwidth]{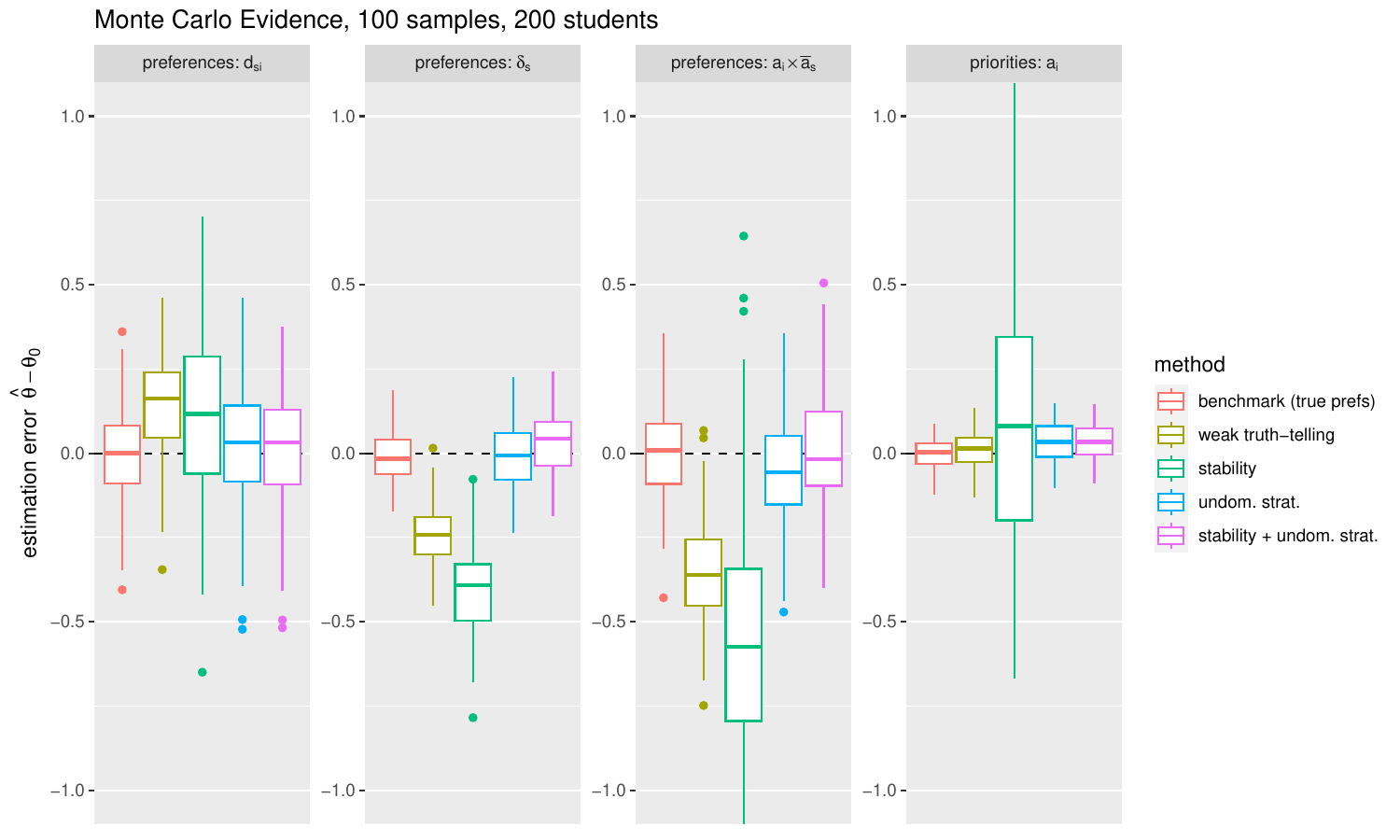}
	\caption[Monte Carlo results]{Box plots of the distributions of estimation errors across one hundred simulated markets (six schools with 240 seats and 200 students).}
	\label{fig:MCsim_underdemanded}
\end{figure}

When the estimation is conducted using only the stability assumption, the results are noisy and biased. Under the stability assumption, the best estimation results are those for the coefficient on travel distances $d_{ts}$, but worse results are obtained for the schools' quality $\delta_s$ and for the interaction parameter.
This is in line with the previous literature on stability based estimators of preferences in small two-sided matching markets. That literature has reached a consensus that the preference parameters are only identified under certain assumptions on the observable characteristics \citep[pp.158-168]{Weldon2016} or certain preference structures such as perfectly aligned preferences \citep{diamond2017latent}, and may not be identified at all in other circumstances. Note that this is not necessarily at odds with \cite{fack2019beyond} who argue that a stability based estimator can be used to point-identify preference parameters, for their stability-based estimator is based on the assumption that students' feasible choice sets are known, whereas we assume that this is not the case. 
\begin{table}[h!]
	\centering
	\caption{MSE and bias statistics on the Monte Carlo Simulation.}
	\label{tab:MCsim200}
	\begin{footnotesize}
		\begin{tabular}{lrrrr}
			\toprule
			Method & \multicolumn{3}{c}{Preferences} & \multicolumn{1}{c}{Priorities} \\
			\cmidrule{2-4}
			& \multicolumn{1}{c}{$d_{is}$} & \multicolumn{1}{c}{$\delta_s$} & \multicolumn{1}{c}{$a_i\cdot\bar{a}_s$} & \multicolumn{1}{c}{$a_i$} \\
			\midrule
			\it Panel A. Mean squared error (MSE) &&&\\
			benchmark (true prefs.) & 0.0187 & 0.0038 & 0.0227 & 0.0016 \\
			WTT & 0.0598 & 0.0581 & 0.3243 & 0.0032 \\
			STAB & 0.2903 & 0.1597 & 4.8612 & 0.0788 \\
			UNDOM & 0.0338 & 0.0103 & 0.0539 & 0.0030 \\
			STAB + UNDOM & 0.0323 & 0.0088 & 0.0448 & 0.0030 \\
			\midrule	\it Panel B. Bias &&&\\
			benchmark (true prefs.) & -0.0066 & -0.0023 & -0.0027 & -0.0009 \\
			WTT & 0.1937 & -0.2302 & -0.5425 & 0.0004 \\
			STAB & -0.1273 & -0.3132 & 0.9949 & -0.0204 \\
			UNDOM & 0.0055 & -0.0421 & -0.1179 & 0.0001 \\
			STAB + UNDOM & -0.0219 & 0.0134 & -0.0183 & 0.0026 \\
			\bottomrule
		\end{tabular}%
	\end{footnotesize}
\end{table}

The estimates that are derived under undominated strategies are much more precise, but also appear to suffer from a slight bias, which could be a result of the small sample size. Finally, when we combine stability and undominated strategies, our estimates are virtually indistinguishable from the benchmark estimates that are derived using the true and complete ROLs.

\paragraph{Set of stable matchings}
The Monte-Carlo simulations allow us to test for the size and convergence of set of stable matchings. In each of the 100 sample markets, we obtain for iterations 2, 5, 10, 20, ... up to 2000 all stable matchings that would result from the latent match valuations of the Gibbs draws for students and schools (\textbf{U} and \textbf{V}).

Figure \ref{fig:equilibria} plots the average number of stable matchings by iteration for markets with a fixed number of six schools and 100, 500 and 2000 students, respectively. The number of stable matchings quickly converges to 2 with increasing iterations of the Gibbs sampler and is invariant with respect to the number of students. The latter result suggests that the share of students affected by the multiplicity of stable matchings converges to zero as the number of students per school grows large. This is confirmed in 
Figure \ref{fig:SOSMvsCOSM}, where the average percentage of students with different student-optimal stable matching (SOSM) and college-optimal stable matching (COSM) is plotted by iteration. 

These results confirm the applicability of our estimator in the context of large school choice markets. If the share of students who are matched to the same school in SOSM and COSM converges to one, 
then there is no need to impose additional uniqueness constraints in the estimation, which would be computationally costly.

\begin{figure}
	\begin{subfigure}[c]{0.46\textwidth}
		\begin{center}
			\includegraphics[width=\textwidth]{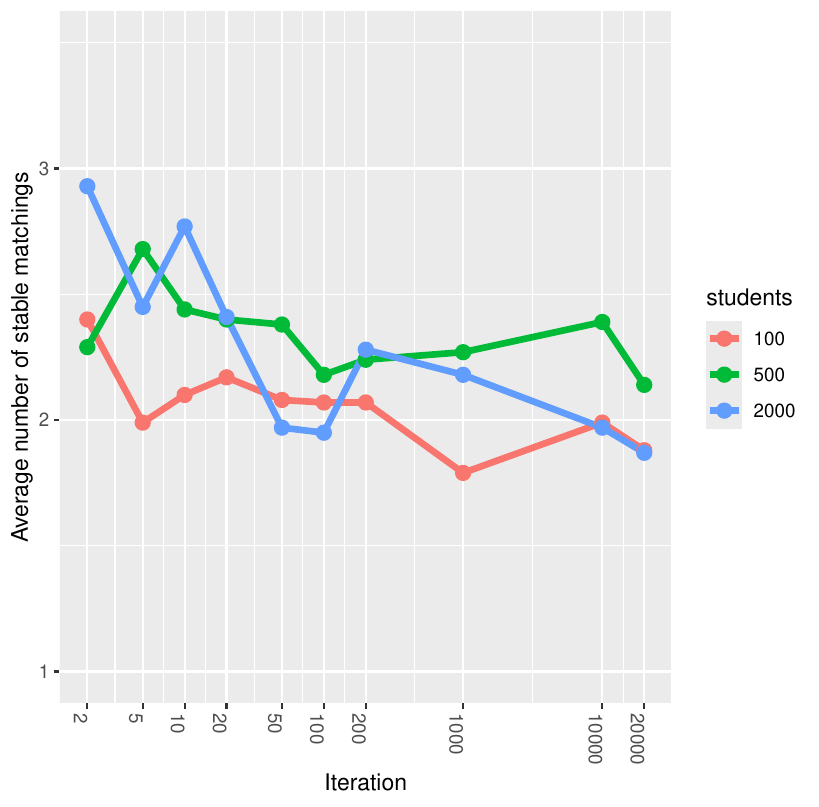}
		\end{center}
		\caption{Average number of stable matchings by Gibbs sampling iteration is invariant with respect to the number of students.}
		\label{fig:equilibria}
	\end{subfigure}
	\quad \quad
	\begin{subfigure}[c]{0.46\textwidth}
		\begin{center}
			\includegraphics[width=\textwidth]{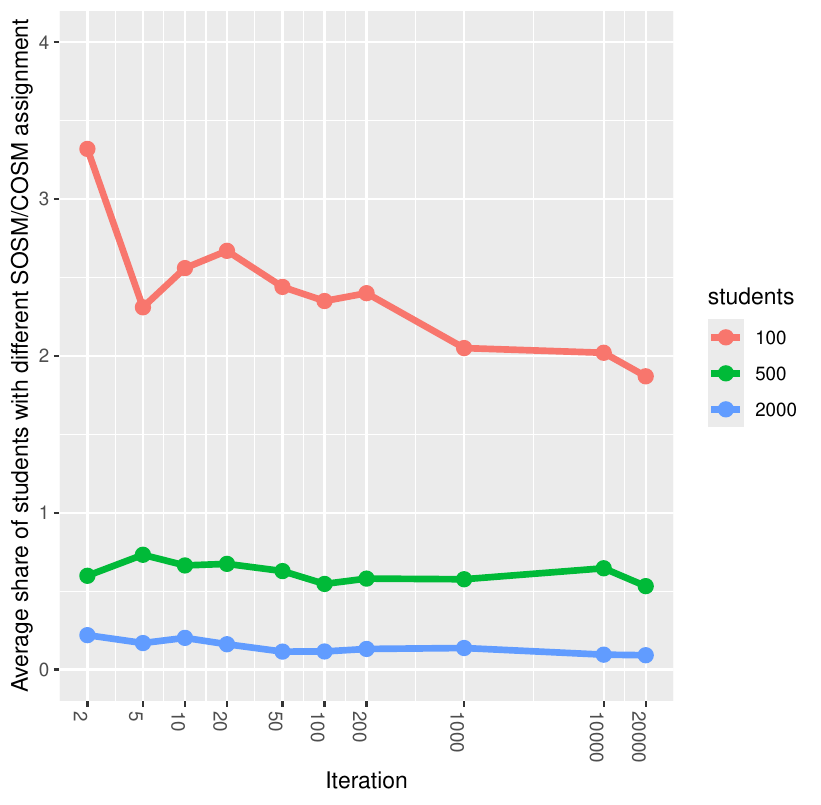}
		\end{center}
		\caption{Average share of students with different student- and college-optimal assignments by Gibbs sampling iteration converges to zero.}
		\label{fig:SOSMvsCOSM}
	\end{subfigure}
	\caption{Convergence results of the share of students involved in multiple stable matchings and the number of stable matchings. Based on markets with 100, 500 and 2000 students and fixed number of schools.}
	\label{fig:uniqueness}
\end{figure}

\clearpage

\begin{figure}
	\begin{center}
		\includegraphics[width=\textwidth]{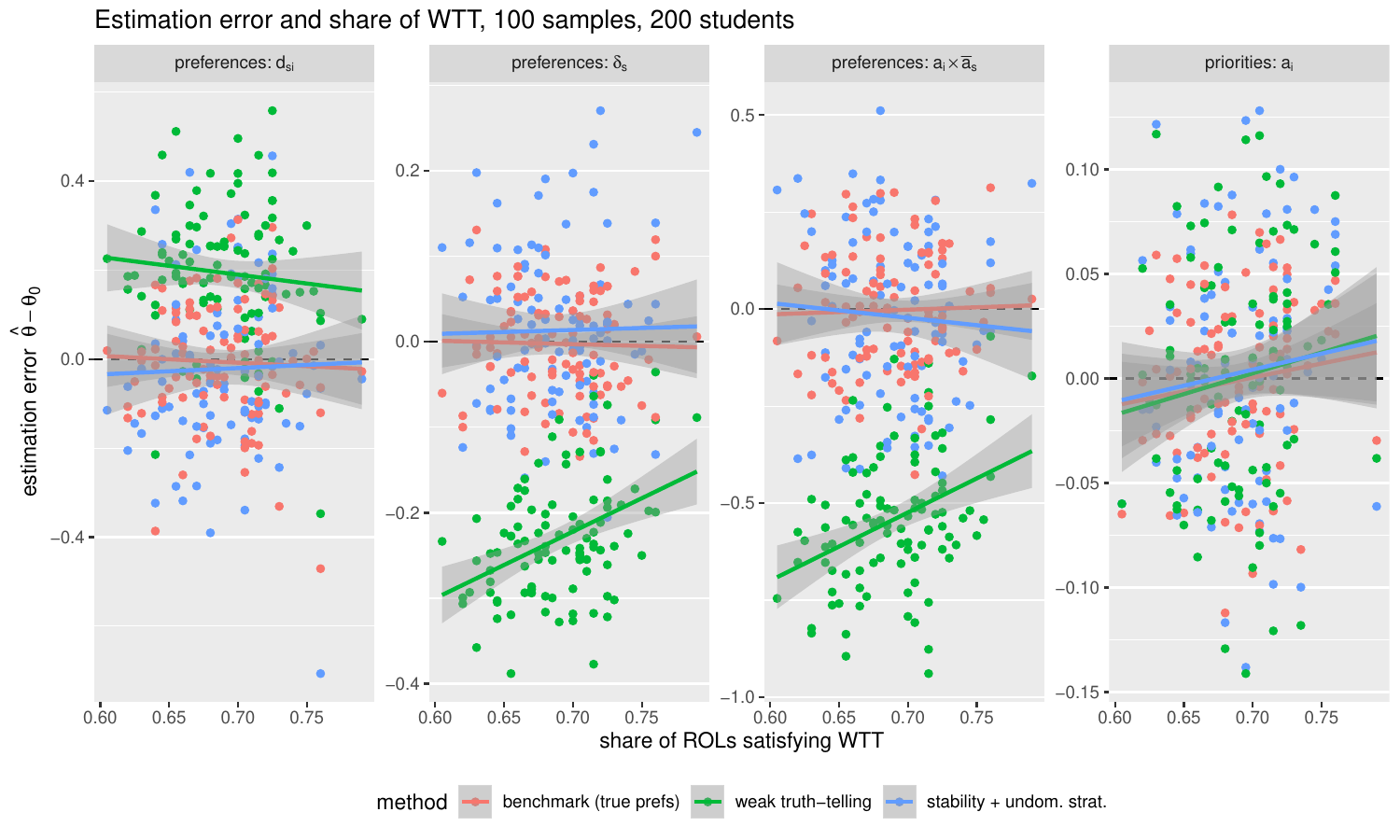}
	\end{center}
	\caption[Estimation errors and WTT]{Dependence of the estimation error in different specifications on the share of submitted ROLs that satisfy the WTT assumption. Every dot represents one parameter estimate in one sample market. One hundred simulated markets, six schools with 190 seats, and 200 students.}
	\label{fig:WTTshare}
\end{figure}

\begin{table}[ht!]
	\begin{center}
		\caption[Robustness to WTT violations]{Robustness of estimation procedures to violations of WTT. }
		\label{tab:WTTshare}
		\begin{footnotesize}
			\begin{threeparttable}
				\begin{tabular}{lr@{}lr@{}lr@{}lr@{}l}
					\toprule
					Method	& \multicolumn{6}{c}{Preferences}               & \multicolumn{2}{c}{Priorities} \\
					\cmidrule{2-6}
					& $d_{is}$ &       & $\delta_s$ &       & $a_i\cdot\bar{a}_s$ &       & $a_i$ &  \\
					\midrule
					benchmark (true prefs.) & -0.12 &       & -0.04 &       & 0.12 &       & 0.13 &  \\
					WTT & -0.40 &       & 0.78 & \textsuperscript{***} & 1.76 & \textsuperscript{***} & 0.20 &  \\
					STAB & 0.49 &       & 0.14 &       & -10.34 & \textsuperscript{*} & 1.67 & \textsuperscript{**} \\
					UNDOM & 0.16 &       & 0.08 &       & -0.31 &       & 0.18 &  \\
					STAB + UNDOM & 0.14 &       & 0.05 &       & -0.38 &       & 0.15 &  \\
					\bottomrule
				\end{tabular}%
				\begin{tablenotes}
					\tiny \item p-values indicated by ${}^*<0.1; {}^{**}<0.05; {}^{***}<0.01$. The table shows the coefficients from separate linear regressions of the estimation error on the share of ROLs satisfying WTT, by estimation approach and parameter. For an estimation approach to be robust to violations of the WTT assumption, the estimation error should not depend on the share of ROLs satisfying WTT.
					
				\end{tablenotes}
			\end{threeparttable}
			
		\end{footnotesize}
	\end{center}
\end{table}

\paragraph{Strategic reporting}
To confirm that the combination of stability and undominated strategies is indeed able to correct the estimation bias due to strategic reporting, we compute the share of submitted ROLs satisfying WTT in each sample market, and plot this share against the parameter estimate in that sample. This is done in Figure \ref{fig:WTTshare}. Each dot in that Figure represents one parameter estimate in one single simulated market. The lines represent the least square estimates for the relation between the share of ROLs that satisfy WTT and the estimation error. 
The corresponding regression coefficients are shown in Table \ref{tab:WTTshare} and asterisks indicate their significance.
The leftmost three panels of that Figure show that the estimation error for students' utility parameters under the WTT assumption decreases in absolute terms as the share of submitted ROLs satisfying WTT increases (green line). On the other hand, the benchmark estimates and the estimates under stability and undominated strategies are not dependent on the share of ROLs that satisfy WTT. For schools' priority parameters, there is no significant relation between either of the estimates and the WTT share, although the point estimates are weakly positive. We conclude that the proposed estimation approach that relies on a combination of undominated strategies and stability is robust to the strategic submission of preference lists.

\clearpage

\section{Data sources}
\label{sec:Appendix_Data}
As the data that we use contain sensitive information, we had no direct access to it but instead submitted our code to the Hungarian Education Authority (HEA) who executed it on their local computes. In order to develop our estimation routines, we were able to use an example dataset that closely resembled the actual data structure. This appendix is intended to provide some more information on the construction of our working data set.

Table \ref{tab:NABC_summary_stats} shows summary statistics of the student-level NABC data. Most students are fifteen years old at the time of the NABC test (in 2015). The NABC scores in Hungarian and mathematics are the results of a standardized test procedure. The socio-economic status (SES) is a composite measure that is based on responses given by students in an accompanying survey, so that this variable has more missing data. Also, the grade average is based on student's own responses and may thus be biased. Therefore, we use the NABC scores as a proxy for student's academic ability.

\begin{table}[!h]
	\begin{center}
			\caption{Summary statistics of the original NABC (2015) data}
		\label{tab:NABC_summary_stats}
	\begin{footnotesize}
\begin{tabular}{@{\extracolsep{5pt}} lccccc} 

\toprule
	& Mean & SD & Min & Max & N \\ 
\midrule
 \multicolumn{6}{l}{\it{	Panel A. Entire country} }\\ 
	Birth year & 2000.1 & 0.58216 & 1996 & 2002 & 88959 \\ 
	Female & 0.494 & 0.5 & 0 & 1 & 88967 \\ 
	Last grade average & 3.9837 & 0.7668 & 1 & 5 & 60843 \\ 
	NABC score Hungarian & 1559.9 & 202.36 & 820.97 & 2199.2 & 82237 \\ 
	NABC score math & 1612.1 & 196.5 & 907.81 & 2307.3 & 82176 \\ 
	Socioeconomic status (SES) & -0.0226 & 1.01 & -3.15 & 1.88 & 64971 \\ 
	 & & & & & \\ 
	 \multicolumn{6}{l}{\it	Panel B. Budapest  }\\ 
	Birth year & 2000.1 & 0.55451 & 1996 & 2002 & 13609 \\ 
	Female & 0.497 & 0.5 & 0 & 1 & 13611 \\ 
	Last grade average & 4.1929 & 0.65506 & 1 & 5 & 8392 \\ 
	NABC score Hungarian & 1634.9 & 190.44 & 820.97 & 2199.2 & 12480 \\ 
	NABC score math & 1685.7 & 190.24 & 947.4 & 2307.3 & 12467 \\ 
	Socioeconomic status (SES) & 0.616 & 0.88 & -3.15 & 1.88 & 9029 \\ 
\bottomrule
\end{tabular} 

	\end{footnotesize}
	\end{center}

\end{table}

Table \ref{tab:gender_differences} shows that there are significant differences in test outcomes and between male and female students. Female students perform much better in Hungarian on average (almost one third of a standard deviation), whereas male students perform better in math on average (one tenth of a standard deviation). Also, female students obtain a slightly better SES index (five percent of a standard deviation) but notice that the SES index is based on self reporting, so it could be due to different reporting behaviour.
In all cases, the differences in means are significant at the one percent level.

\begin{table}[!h]
\begin{center}
\caption[Gender differences in test outcomes.]{Gender differences in test outcomes. Raw NABC (2015) data. Two-sample t-test with equal variance.}
\label{tab:gender_differences}
\begin{footnotesize}
\begin{tabular}{lrrrrrrr}
\toprule
      & \multicolumn{3}{c}{mean} &       & \multicolumn{3}{c}{t-test} \\
\cmidrule{2-4}\cmidrule{6-8}Variable & all   & male  & female &       & diff. & t     & p \\
\midrule
NABC score Hungarian & 1,560 & 1,538 & 1,583 &       & -45.04 & -32.109 & $<$0.001 \\
NABC score math & 1,612 & 1,621 & 1,603 &       & 18.89 & 13.793 & $<$0.001 \\
Socioeconomic status (SES) & -0.023 & -0.037 & -0.008 &       & -0.029 & -3.644 & $<$0.001 \\
\bottomrule
\end{tabular}%
\end{footnotesize}
\end{center}

\end{table}

Table \ref{tab:KIFIR_summary_stats} shows key statistics of the nation-wide matching scheme. The data comprises almost four hundred thousand applications from almost ninety thousand students to over six thousand school programs. Each record corresponds to the application of a student to a school and contains the ranks $rk_t(s)$ and $rk_s(t)$, an indicator whether the school finds the student acceptable, and a match indicator.  On average, each student applies to 4.5 school programs, or to 2.8 different schools. Almost 95\% of all students are assigned to a school, of which three quarters are eventually assigned to their top choice program.\footnote{The admission system ensures that any students who are unmatched at the end of the main matching round are assigned to the nearest school which still has free capacity.} We link this data to a school survey in order to obtain the precise location of each school, and the school's district.

\begin{table}[!h]
	\begin{center}
	\caption{Summary statistics of the original application data (KIFIR)}
\label{tab:KIFIR_summary_stats}
\begin{tabular}{lr} 
\toprule 
Variable & Number\\
\midrule
	students & $88,401$ \\ 
	school programs & $6,181$ \\ 
	schools (OMid-telephely-tipus) & $1,793$ \\ 
	student-school applications & $395,222$ \\ 
	length of submitted ROL (school programs) & $4.471$ \\ 
	length of submitted ROL (schools) & $3.002$ \\ 
	assigned students& $83,482$ \\ 
share top choice & $0.759$ \\ 
average match rank & $1.486$ \\ 
\bottomrule
\end{tabular} 

	\end{center}

\end{table}

The HEA used a confidential concordance table to link records from the KIFIR and NABC datasets. As described in the main text, we restricted the linked sample to students who applied to at least one school from Budapest, which leaves us with 10,880 students.
As Table \ref{tab:NABC_summary_stats} shows, the NABC scores and, in particular, the SES are missing for a substantial share of our sample. Because the computation of the student-optimal stable matching depends on the composition of the student sample, we were reluctant to drop records with missing data, as this would have left us with rather few complete records. Instead, we opted for a data imputation approach and used the R package \texttt{mice} to construct a complete dataset. Missing variables were imputed using predictive mean matching, where missing values are replaced by actual values from other records that resemble the incomplete record, conditional on other observed characteristics. As predictors, we used an extended set of variables that included also some results from the 2017 NABC round (where available), and further student level variables that are not shown  here. This procedure is repeated a few times, until the imputed values converge in expectation. 

The following Table \ref{tab:imputation_stats} shows details of the imputation procedure. It can be seen that the imputed mean of the variables referring to academic ability is lower than in the original data. Our imputation procedure naturally introduces measurement error into the data, which, in a classical regression framework, should lead to estimated coefficients that biased towards zero. We expect that this is also true for our estimation procedure which comprises a data augmentation approach with a linear regression. Nevertheless, it is our opinion that the drawbacks of using an imputed data set are greatly outweighed by the benefit of having a comprehensive set of students for the estimation procedure (which relies on stability considerations, and thus, on the entirety of the student population) and for the counter-factual matches (which are more directly dependent on the entire student population).

\begin{table}[!h]
	\begin{center}
	\caption{Results of the imputation procedure, using predictive  mean matching and ten iterations. The p-value is computed for a two-sided t-test with unequal variances.}
\label{tab:imputation_stats}
	\begin{tiny}
\begin{tabular}{@{\extracolsep{5pt}} lccccccc} 
\toprule

	& N & Mean & SD & N.imp & Mean.imp & SD.imp & Pval \\ 
\midrule
	Birth year & $10,879$ & $2,000.06$ & $0.55$ & 1 & $2,001$ & $$ & $0.09$ \\ 
Sex (1=female,2=male) & $10,880$ & $1.50$ & $0.50$ & 0 & $$ & $$ & $$ \\ 
Last grade average & $6,598$ & $4.12$ & $0.68$ & 4282 & $3.97$ & $0.70$ & $0$ \\ 
NABC score Hungarian & $9,934$ & $1,659.63$ & $183.87$ & 946 & $1,612.10$ & $192.57$ & $0$ \\ 
NABC score math & $9,948$ & $1,607.88$ & $186.60$ & 932 & $1,569.42$ & $189.95$ & $0$ \\ 
Socioeconomic status (SES) & $7,097$ & $0.45$ & $0.87$ & 3783 & $0.41$ & $0.88$ & $0.02$ \\ 
\bottomrule
\end{tabular} 

	\end{tiny}
	\end{center}

\end{table}

In order to ease the interpretation of estimated preference parameters, we decided to standardize the NABC scores and the SES index to have a mean of zero and unit standard deviation. This is shown in Table \ref{tab:students_summary} in the main text.

\clearpage

\section{Gains from consolidation: using reported preferences}
\label{sec:reportedpref}
In this appendix, we approach the problem of estimating the gains from consolidation from a purely descriptive standpoint. To this end, we take the students' submitted rank order lists (ROLs) as given, and re-compute the SOSM under different district consolidation scenarios. As a benchmark outcome, we use the matching in the consolidated market comprising all districts in Budapest. This matching is denoted by $\mu_{\Omega}$ and it is almost identical to the actual matching observed in the KIFIR dataset. This matching is compared to the matching that obtains in a district-level school market ($\mu_D$). For every student, we compare the match rank obtained in the district-level market to the match rank in the benchmark scenario. This difference in match ranks is used as a measure for the consolidation gains. 

There are two major complications with the aforementioned approach: first, a considerable number of students do not include any school from their home district in their submitted rank order list, and second, some individual school districts cannot actually accommodate all domestic students, even though there is much excess school capacity in the aggregate. These problems lead to a large number of students not being matched in the counter-factual matching. We assume that these unmatched students would prefer being matched rather than  being unmatched, and that the option of being unmatched is as good as the school that they ranked last. In doing so, we obtain a lower bound for the consolidation gains. 

Because district number 23 has only one single school, it does not even offer one school for every track (gymnazium, secondary or vocational). Therefore, we merge this district to its neighbouring district number 20. We show some summary statistics of the district-level and consolidated matches in Table \ref{tab:partialROL:matchstats} below.
\begin{table}[h!]
	\centering
	\caption{Matching statistics based on reported preferences.}
	\label{tab:partialROL:matchstats}
	\begin{footnotesize}
		\begin{tabular}{lr} 
			\toprule
			\it Panel A. Unconsolidated matching&\\
			~~ matched students & 6,554 \\ 
			~~ share top choice match & 0.78 \\ 
			~~ avg. match distance [km] & 3.49 \\ 
			\midrule
			\it Panel B. Consolidated matching&\\
			~~ matched students & 10,494 \\ 
			~~ share top choice match & 0.43 \\ 
			~~ share matched in home district & 0.30 \\
			~~ avg. match distance [km] & 7.10 \\ 
			\bottomrule
		\end{tabular} 
	\end{footnotesize}
\end{table}

\clearpage

Table \ref{tab:district_integration} contains a detailed account of the consolidation gains per district. That table shows that the vast majority of students is strictly better off in the consolidated market, either because they are assigned to a more preferred school in the consolidated market (29\%, i.e. 3,129 students) or because they are unmatched in the unconsolidated market (40\%, i.e. 4,326 students). Only 4\% of the students are assigned to a more preferred school in the unconsolidated market. 
Moreover, there is not a single district in which more students would prefer the unconsolidated market over the integrated market in Budapest. 

\begin{table}[h!]
	\centering
	\caption{Losers ($-$) and winners ($+$) from consolidation (reported preferences).}
	\label{tab:district_integration}
	\begin{footnotesize}
		\begin{threeparttable}
			\begin{tabular}{lrrrrrrr} 
				\toprule 
				District & seats & students & excess seats & $-$ & $0$ & $+$ & unmatched \\ 
				\midrule 
				1 & 338 & 95 & 243 & 3 & 9 & 26 & 50 \\ 
				2 & 1,191 & 634 & 557 & 36 & 241 & 190 & 148 \\ 
				3 & 928 & 743 & 185 & 32 & 263 & 227 & 213 \\ 
				4 & 865 & 746 & 119 & 32 & 319 & 241 & 151 \\ 
				5 & 625 & 217 & 408 & 5 & 50 & 32 & 122 \\ 
				6 & 1,243 & 172 & 1,071 & 14 & 24 & 51 & 77 \\ 
				7 & 1,312 & 212 & 1,100 & 14 & 73 & 48 & 70 \\ 
				8 & 2,524 & 290 & 2,234 & 11 & 79 & 77 & 119 \\ 
				9 & 2,116 & 275 & 1,841 & 19 & 73 & 98 & 77 \\ 
				10 & 2,012 & 591 & 1,421 & 45 & 120 & 224 & 194 \\ 
				11 & 1,025 & 713 & 312 & 13 & 181 & 169 & 347 \\ 
				12 & 956 & 359 & 597 & 17 & 142 & 108 & 90 \\ 
				13 & 3,290 & 449 & 2,841 & 44 & 148 & 152 & 100 \\ 
				14 & 2,893 & 796 & 2,097 & 52 & 189 & 247 & 291 \\ 
				15 & 701 & 454 & 247 & 11 & 99 & 120 & 219 \\ 
				16 & 770 & 659 & 111 & 1 & 96 & 162 & 397 \\ 
				17 & 147 & 628 & -481 & 0 & 40 & 107 & 481 \\ 
				18 & 503 & 873 & -370 & 17 & 177 & 245 & 432 \\ 
				19 & 773 & 444 & 329 & 13 & 68 & 120 & 237 \\ 
				20 & 1,643 & 573 & 1,070 & 31 & 157 & 189 & 189 \\ 
				21 & 2,518 & 641 & 1,877 & 14 & 258 & 204 & 157 \\ 
				22 & 273 & 316 & -43 & 7 & 51 & 92 & 165 \\ 
				\midrule
				Total & 28,646 & 10,880 & 17,766 & 431 & 2,857 & 3,129 & 4,326 \\ 
				\bottomrule
			\end{tabular} 	
			\begin{tablenotes}
				\tiny \item Seats refers to number of seats after removing those given to students from outside Budapest. Excess seats refers to seats minus students. The symbols $-$, $0$ and $+$ denote the number of losers, indifferences and winners from consolidation, respectively. Data obtained using stated preferences.
			\end{tablenotes}
		\end{threeparttable}
		
	\end{footnotesize}
\end{table}

\begin{table}[h!]
	\centering
	\caption{Testing the relationship between consolidation gains and district statistics.}
	\label{tab:integration_gains_tests}
	\begin{footnotesize}
		\begin{threeparttable}
			\begin{tabular}{lc} 
				\toprule
				& \multicolumn{1}{c}{\textit{Dependent variable:} average rank gain} \\ 
				\midrule
				district size (\# students) & $-$0.0008 \\ 
				& (0.0005) \\ 
				relative excess capacity & $-$0.1191$^{**}$ \\ 
				& (0.0478) \\ 
				\midrule
				Observations & 22 \\ 
				\bottomrule 
			\end{tabular} 
			\begin{tablenotes}
				\tiny \item $^{*}$p$<$0.1; $^{**}$p$<$0.05; $^{***}$p$<$0.01. Standard errors in parentheses, intercept not shown. One observation denotes one district. Statistics obtained using stated preferences.
			\end{tablenotes}
		\end{threeparttable}
		
	\end{footnotesize}
\end{table}

Motivated by the general insights of Corollaries \ref{thm:corollary2} and \ref{thm:corollary3}, we investigate whether the share of students who strictly gain from consolidation varies along two key dimensions: district size 
and excess capacity. 
This allows us to examine how our theoretical predictions about the distribution of quantitative rank order gains relates to our empirical results. Corollaries \ref{thm:corollary2} and \ref{thm:corollary3} state that the expected gains from consolidation are larger for smaller markets and markets with less capacity. 
Table \ref{tab:integration_gains_tests} contains the estimated coefficients and standard errors from a regression of average rank order gains per district on the size and capacity per district. The table shows that the coefficient for district size is rather small, and also insignificant, whereas the coefficient for district-level capacity is significantly negative. Therefore, we find robust empirical support for Corollary \ref{thm:corollary2}, but we cannot statistically confirm the validity of Corollary \ref{thm:corollary3} using reported preferences.

\pagebreak

\section{Decomposition of consolidation gains}
\label{sec:decomposition}

In order to explain the large welfare gains we have documented, we isolate the effects of choice and competition in a decomposition exercise. The idea is illustrated in Figure \ref{fig:choiceandcompetition}. We start in the North-West corner at point $o$, where all students are assigned to the schools in the unconsolidated market. We then move to the South-West corner in that we keep an individual student $t$ fixed, and assign her to the most preferred feasible school in the global market, given that all other students are still constrained to attend only schools within their district. 

The choice effect (choice II) is the change of student $t$'s welfare as her choice set is expanded to include all schools, keeping the other students' choice sets constant. 
The competition effect (competition II in Figure \ref{fig:choiceandcompetition}) is then the change of student $t$'s welfare as all other students' choice sets are enlarged to include the entire integrated market. This is illustrated in Figure \ref{fig:choiceandcompetition} by moving to the South-East corner at point $x$, where all students are assigned to the schools in the consolidated market. This procedure is repeated for all students, and the results are aggregated.

\begin{figure}[h!]
	\begin{center}
		
		\begin{tikzpicture}
			\filldraw[black] (-2,2.5) node[anchor=west] {Other students' choice set};
			
			\filldraw[black] (-3.8,.3) node[anchor=west] {\rotatebox{90}{Student t's choice set}};
			\matrix (mat) [table]
			{
				& district & global      \\
				district   & $o$  	  &  \\
				global     & 	      & $x$ \\
			};
			\foreach \x in {1,...,3}
			{
				\draw 
				([xshift=-.5\pgflinewidth]mat-\x-1.south west) --   
				([xshift=-.5\pgflinewidth]mat-\x-3.south east);
			}
			\foreach \x in {1,...,3}
			{
				\draw 
				([yshift=.5\pgflinewidth]mat-1-\x.north east) -- 
				([yshift=.5\pgflinewidth]mat-3-\x.south east);
			}    
			\begin{scope}[shorten >=7pt,shorten <= 7pt]
				\draw[->,blue]  (mat-2-2.center) -- (mat-2-3.center) node [midway,above,sloped] (TextNode) {\tiny competition I};
				\draw[->,blue]  (mat-2-3.center) -- (mat-3-3.center) node [midway,above,sloped] (TextNode) {\tiny choice I};
				\draw[->,dashed,red] (mat-2-2.center) -- (mat-3-2.center) node [midway,below,sloped] (TextNode) {\tiny choice II};
				\draw[->,dashed,red] (mat-3-2.center) -- (mat-3-3.center) node [midway,below,sloped] (TextNode) {\tiny competition II};
			\end{scope}
		\end{tikzpicture}
		
	\end{center}
	\caption{Decomposition of the gains from market consolidation into choice and competition effects}
	\label{fig:choiceandcompetition}
\end{figure}

As this figure shows, there are always two ways to measure either the choice, or the competition effect.\footnote{For the type-I competition effect, we constrain student $t$ to attend only schools within her district, while we allow all other students to attend schools from the entire global market. For the type-I choice effect, we also relax the constraint for student $t$.} We refer to the resulting statistics as type-I and type-II effects. \ref{subsec:compute_decomposition} presents in detail how we construct the decomposition of the students' consolidation gains into a choice effect and a competition effect. \ref{subsec:explain_decomposition} presents regression evidence explaining consolidation gains with student- and district-level characteristics.

\subsection*{Computing the decomposition of consolidation gains}
\label{subsec:compute_decomposition}

To compute the decomposition of consolidation gains, we make use of the large market approximation to matching markets by which school-specific cutoff scores play the role of prices that balance the supply of, and the demand for school seats \citep{azevedo2016supply}. The cutoff score at school $s$ under the matching $\mu$ is the lowest valuation among all students who where admitted to that school under $\mu$, or
\[
c_s(\mu) = \min_{t \in \mu(s)} V_s(t)
\]
We assume that the school market consists of relatively few schools and a large number of students so that the addition (or deletion) of a single student has practically no effect on a schools' cutoff score. In order to decompose the total consolidation gains, we compute the school-level cutoff scores under the district wise matching $\mu_D$ and under the integrated matching $\mu_{\Omega}$. 

The effect of increased choice, keeping everything else constant, can then pe computed as the difference between student $t$ being matched to her most preferred feasible school in her own district, and globally, using either the district-level or the city-wide cutoffs. Let the feasible set of student $t$ under the cutoffs $\{c_s(\mu)\}_{s \in S}$ be 
\[
\mathcal{F}_t^\mu = \left\{s \in S: V_s(t) \geq c_s(\mu) \right\}
\]
and denote the set of schools in district $d$ as $S^d$. Then, the choice gain of student $t$ can either be expressed as
\[
\Delta^{ch-I} U_t = \max_{s \in \mathcal{F}_t^{\mu_{\Omega}}} U_t(s) - \max_{s \in \mathcal{F}_t^{\mu_{\Omega}} \cap S^d} U_t(s)
\]
or
\[
\Delta^{ch-II} U_t = \max_{s \in \mathcal{F}_t^{\mu_{D}}} U_t(s) - \max_{s \in \mathcal{F}_t^{\mu_{D}} \cap S^d} U_t(s)
\]
as is illustrated in Figure \ref{fig:choiceandcompetition}. The only difference between $\Delta^{ch-I} U_t$ and $\Delta^{ch-II} U_t$ is the usage of a different baseline scenario to compute the cutoffs -- the global cutoffs $\{c_s(\mu_{\Omega})\}$ for $\Delta^{ch-I} U_t$ and the local cutoffs for $\Delta^{ch-II} U_t$. It is easy to see that the choice gains will always be weakly positive by construction.\footnote{It can also happen that a student is not assigned in one of the counter-factual scenarios. In our empirical application, the choice gains $\Delta^{ch-I} U_t$ are missing for about one quarter of all students, because their set of feasible schools within their home district is empty under the global cutoff scores.} 

In a similar manner, one can compute the change in student $t$'s welfare as the market is opened up to external competition. We call this change a competition gain, but it is not a priori clear whether students actually gain or lose from competition. The competition gain can be computed either as
\[
\Delta^{co-I} U_t = \max_{s \in \mathcal{F}_t^{\mu_{\Omega}} \cap S^d} U_t(s) - \max_{s \in \mathcal{F}_t^{\mu_{D}} \cap S^d} U_t(s)
\]
or as 
\[
\Delta^{co-II} U_t = \max_{s \in \mathcal{F}_t^{\mu_{\Omega}}} U_t(s) - \max_{s \in \mathcal{F}_t^{\mu_{D}}} U_t(s)
\]
Now, $\Delta_t^{co-I}$ differs from $\Delta^{co-II} U_t$ in that student $t$'s choice set is restricted to feasible schools within her home district $d$ in the former, but not in the latter. It is easy to see that the sum of $\Delta^{ch-I} U_t$ and $\Delta^{co-I} U_t$ is identical to the sum of $\Delta^{ch-II} U_t$ and $\Delta^{co-II} U_t$ unless some type-I choice gains are missing. Also, the sum of the choice and competition gains are equal to the total welfare gains.

\subsection*{Decomposition results}

Table \ref{tab:fullROL:gainstats2} below extends the summary statistics of the total gains reported in Table \ref{tab:fullROL:gainstats} by the decomposition into choice and competition effects that are calculated in both ways.\footnote{In general, the sum of the competition and choice effects of either type should be equal to the total welfare effect of consolidation. However, because not all students are assigned to a school in the district level matching (c.f. Table \ref{tab:fullROL:matchstats}), the type-I competition effect and the type-I choice effect cannot be computed for all students. This, however, affects only very few students, and so the average choice and competition effects approximately add up to the total gains.}

\begin{table}[h!]
	\centering
	\caption{Measures of consolidation gains in latent utility changes.}
	\label{tab:fullROL:gainstats2}
	\begin{footnotesize}
		\begin{tabular}{lrrrrrr}
			\toprule
			& Mean  & SD    & Min   & Median & Max   & N \\
			\midrule
			\textit{total gains} &       &       &       &       &       &  \\
			~~ in latent utility units & 0.819 & 0.916 & -1.895 & 0.600 & 5.799 & 9,986 \\
			~~ in equivalent kilometres & 5.532 & 6.187 & -12.805 & 4.054 & 39.180 & 9,986 \\
			&       &       &       &       &       &  \\
			\textit{decomposition (in equ.\ kilometres)} &       &       &       &       &       &  \\
			~~ choice effect I & 5.068 & 5.392 & 0.000 & 3.716 & 33.865 & 10,880 \\
			~~ competition effect I & 0.695 & 3.986 & -24.625 & 0.000 & 36.081 & 9,986 \\
			~~ choice effect II  & 5.844 & 6.067 & 0.000 & 4.480 & 39.180 & 9,986 \\
			~~ competition effect II  & -0.270 & 1.989 & -20.270 & 0.000 & 13.514 & 10,880 \\
			\bottomrule
		\end{tabular}%
	\end{footnotesize}
\end{table}

The results show that the choice effects account for the vast share of total welfare gains, whereas the average competition effects are much smaller in magnitude, and vary in sign. Whereas the average type-II competition effect is small and negative, the type-I competition effect is small and positive. 
Therefore, it remains an open question whether competition is stronger in the consolidated market, or in the district-level markets.\footnote{At first glance, it may seem counter-intuitive that competition could be weaker in the consolidated, aggregate market. But this can be explained by the fact that the school districts are very different. A few districts have a large number of school seats that far exceeds the number of their domestic students (see Table \ref{tab:district_integration}). Whereas the market tightness increases for students in those districts as all districts are integrated, the aggregate market tightness may decrease as a result. Therefore, the majority of students may experience more favourable competition in the aggregate market.}

\subsection*{Heterogeneity in decomposed gains} 
\label{subsec:explain_decomposition}

To further explain the gains from market consolidation, we regress the student-level gains, and the competition and choice effects that were computed above, on student- and district-level observables. Table \ref{tab:explaining_gains} shows the results that extend Table \ref{tab:explaining_gains1} by the decomposition into choice and competition effects. While the results for the type-I and type-II decomposition are similar, there are notable differences. 

\begin{table}[h!]
	\centering
	\caption[Explaining consolidation gains]{Explaining gains from consolidation with students observables.} 
\label{tab:explaining_gains}
\begin{footnotesize}
\begin{threeparttable}
	\begin{tabular}{lccccc}
		\toprule
		& \multicolumn{5}{c}{\textit{Dependent variable}: latent utility gains in equiv.\ kilometres} \\
		&       & \multicolumn{2}{c}{type-I decomposition} & \multicolumn{2}{c}{type-II decomposition} \\
		\cmidrule{3-4} 
		\cmidrule{5-6}
		& total & choice & competition & choice & competition \\
		& (1)   & (2)   & (3)   & (4)   & (5) \\
		\midrule
		\multicolumn{1}{l}{socio-economic } & 0.0574 & 0.0953 & -0.0453 & 0.1264$^{**}$ & -0.0655$^{***}$ \\
		status (SES)   & (0.0655) & (0.0581) & (0.0419) & (0.0642) & (0.0230) \\
		&       &       &       &       &  \\
		\multicolumn{1}{l}{ability} & 0.0966$^{**}$ & 0.2993$^{***}$ & -0.1608$^{***}$ & 0.1801$^{***}$ & -0.0939$^{***}$ \\
		& (0.0473) & (0.0419) & (0.0304) & (0.0466) & (0.0169) \\
		&       &       &       &       &  \\
		\multicolumn{1}{l}{district size } & -0.9514$^{***}$ & -1.4212$^{***}$ & 0.4676$^{***}$ & -0.9747$^{***}$ & 0.0257 \\
		& (0.1162) & (0.1081) & (0.0743) & (0.1142) & (0.0426) \\
		&       &       &       &       &  \\
		\multicolumn{1}{l}{relative excess } & -2.0939$^{***}$ & -0.3084$^{**}$& -1.8378$^{***}$ & -1.9962$^{***}$ & -0.0784 \\
		capacity   & (0.1662) & (0.1493) & (0.1061) & (0.1632) & (0.0591) \\
		&       &       &       &       &  \\
		\multicolumn{1}{l}{gymnazium} & -0.8838$^{***}$ & -1.6804$^{***}$ & 0.5294$^{***}$ & -1.0432$^{***}$ & 0.2486$^{***}$ \\
		& (0.2378) & (0.2101) & (0.1520) & (0.2342) & (0.0831) \\
		&       &       &       &       &  \\
		\multicolumn{1}{l}{secondary} & -0.1946 & -0.4520$^{**}$& 0.0905 & -0.2865 & 0.1270 \\
		& (0.2230) & (0.1973) & (0.1426) & (0.2203) & (0.0777) \\
		&       &       &       &       &  \\
		\multicolumn{1}{l}{constant} & 15.4959$^{***}$ & 12.9729$^{***}$ & 2.8108$^{***}$ & 15.5351$^{***}$ & -0.1453 \\
		& (0.5174) & (0.4662) & (0.3311) & (0.5101) & (0.1842) \\
		&       &       &       &       &  \\
		\midrule
		\multicolumn{1}{l}{district FE} & Yes   & Yes   & Yes   & Yes   & Yes \\
		\midrule
		\multicolumn{1}{l}{Observations} & 9,986 & 10,880 & 9,986 & 9,986 & 10,880 \\
		\bottomrule
	\end{tabular} 
	\begin{tablenotes}
		\tiny \item The table shows regression coefficients of students' gains on student observables.	Variables 'district size' and 'relative excess capacity' refer to the students' home districts; the school type refers to the school type of the assigned school in the integrated market. $^{*}$p$<$0.1; $^{**}$p$<$0.05; $^{***}$p$<$0.01. Standard errors in parentheses.
	\end{tablenotes}
\end{threeparttable}

\end{footnotesize}
\end{table}

For example, as discussed in Section \ref{subsec:heterogeneity}, the first column of this table shows that students with a higher socio-economic status (SES) do not benefit significantly more from district consolidation. The second and third columns reveal that this is because students with a higher SES benefit more from increased choice, but benefit less from the more favourable competitive conditions in the consolidated market. 
While these effects are insignificant for the type-I effects, the fourth and fifth column pertaining to the type-II effects are indeed significantly different from zero.
A similar, but exacerbated pattern can be observed for students with higher academic ability. High-ability students benefit more from district consolidation than average students, and they benefit comparatively more from an enhanced choice set, and less from more relaxed competitive conditions in the aggregate. This effect is statistically significant.

\end{document}